\documentclass[twocolumn,10pt]{article}
\usepackage[switch]{lineno}
\usepackage{titletoc}

\usepackage[top=1in, bottom=1in, left=.65in, right=.65in]{geometry}
\usepackage[font={scriptsize},labelfont={bf}]{caption}
\usepackage{amsmath}
\usepackage{amsfonts}
\usepackage[utf8]{inputenc}
\usepackage{setspace}
\usepackage{etoolbox}
\usepackage{xcolor}
\usepackage{graphicx}
\usepackage{cprotect}
\usepackage{xspace}
\usepackage{hyperref}
\usepackage[affil-it]{authblk} 
\usepackage[symbol]{footmisc}
\usepackage{bm}

\usepackage{amsthm}
\newtheorem{theorem}{Theorem}[section]

\newtheorem{proposition}[theorem]{Proposition}
\newtheorem{definition}[theorem]{Definition}

\graphicspath{ {./figures/} }
\usepackage{palatino} 
\patchcmd{\maketitle}{\@fnsymbol}{\@alph}{}{}  

\usepackage[numbers,sort&compress]{natbib}

\titlecontents{lsection}[1.8em]
  {\addvspace{3pt}\bfseries}{\contentslabel{2.6em}}{}
  {\titlerule*[0.6pc]{.}\contentspage}
\titlecontents{lsubsection}[4.4em]
  {}{\contentslabel{3.2em}}{}
  {\titlerule*[0.6pc]{.}\contentspage}

\usepackage[title]{appendix}

\title{\huge{\textbf{Uncovering Extreme Event Mechanisms for Prediction and Control with Sensitivity-Balanced Projections
}}}

\author[1,2]{Nicholas Zolman\thanks{Corresponding author (nzolman@uw.edu)}}
\author[1,2]{Sajeda Mokbel}
\author[3]{Samuel E. Otto}
\author[1,2]{Steven L. Brunton}
{\small
\affil[1]{\small Department of Mechanical Engineering, University of Washington, Seattle, WA 98195, USA}
\affil[2]{\small AI Institute in Dynamic Systems, University of Washington, Seattle, WA 98195, USA}
\affil[3]{\small Sibley School of Mechanical and Aerospace Engineering, Cornell University, Ithaca, NY, USA}

}
\date{}

\newcommand{\keywords}[1]
{\centerline{
  \small    
  \textbf{\textit{Keywords: }} #1
}
}

\begin{document}
\newcommand{\cobrasPsi}[0]{\mathbf \Psi}
\newcommand{\cobrasPhi}[0]{\mathbf \Phi}
\newcommand{\cobrasOutput}[0]{\mathbf Q}
\newcommand{\controlInput}[0]{\mathbf u}

\newcommand{\energyDissipation}[0]{\varepsilon}
\newcommand{\navierStokesOp}[0]{\mathcal{NS}}
\newcommand{\kFlowVel}[0]{\mathbf v}
\newcommand{\kFlowVort}[0]{\boldsymbol{\omega}}

\newcommand{\mnls}[0]{A}
\newcommand{\mnlsPotentialGradient}[0]{\mathcal G}
\newcommand{\mnlsGauss}[0]{\rho}
\newcommand{\mnlsSpatial}[0]{\eta}


\twocolumn[
    \maketitle
    \begin{@twocolumnfalse}
        
        \vspace{-30pt}
        \begin{abstract}
        Extreme events---such as earthquakes and coronal mass ejections---are common in many chaotic dynamical systems, yet are difficult to characterize and predict due to the subtle instability mechanisms that drive them.  
        In this work, we develop an interpretable technique that reveals the underlying mechanisms behind extreme events and uses them to build data-driven forecasts and intuitive event suppression controllers.
        In particular, we utilize the covariance balancing reduction using adjoint snapshots (CoBRAS) method to identify linear oblique projections that best capture the sensitivity of a quantity of interest and reconstruct the original state. 
        Importantly, we bypass the need for cumbersome adjoint calculations, instead using backpropagation via modern automatically differentiable numerical frameworks.  
        To accommodate spatially localized events, we also introduce a new variant of CoBRAS to obtain local sensitivity-balanced projections. 
        We demonstrate the utility of this approach to characterize extreme events across a diverse set of challenging systems, including turbulent bursts of energy dissipation in the 2D Kolmogorov Flow, spontaneous synchronization in networks of coupled FitzHugh-Nagumo oscillators, and
        the localized formation of ocean rogue waves from a modified nonlinear Schr{\"o}dinger equation.
        For each example, we show that our simple forecast models accurately predict extreme events and that the underlying mechanisms may be used to design control laws to prevent these events.   
        Finally, we demonstrate that by learning a neural network surrogate model of the dynamics directly from data, we may extend this approach to experimental systems and systems that are not natively written in an automatically differentiable programming language.  
        \end{abstract}
        \keywords{{extreme events, sensitivity analysis, reduced-order modeling, automatic differentiation, transient dynamics}}
        \vspace{20pt}
    \end{@twocolumnfalse}
]{
  \renewcommand{\thefootnote}%
    {\fnsymbol{footnote}}
  \footnotetext[1]{\thanks{Corresponding author (nzolman@uw.edu)}}
}

\noindent Extreme and rare events arise throughout natural and engineered systems with profound consequences. 
Even simple biological phenomena, such as cells spontaneously synchronizing, can result in epileptic seizures~\cite{lehnertz2006epilepsy} and the first heartbeat during embryonic development~\cite{jia2023bioelectrical}. 
Such events can be found across vastly different scales:  from chemical reaction-diffusion systems~\cite{elezgaray1992a} and gene regulatory networks~\cite{vinoth2025extreme}, to extreme aerodynamic gusts interacting with airfoils~\cite{fukami2025extreme}, to geophysical and solar events that disrupt infrastructure for years~\cite{latif2009a, dijkstra2013a, roberts2016a, ummenhofer2017extreme, cliver2022extreme}. 
It is therefore imperative to predict them with as much lead-time as possible and intervene before they cause lasting harm.

Predicting extreme events is particularly challenging because events are typically rare and arise from high-dimensional systems with multiscale physics~\cite{farazmand2019extreme, sapsis2021statistics}. 
Data-driven approaches have sought to address the rarity by leveraging statistical tools from extreme value theory~\cite{coles2001introduction} and large deviation theory~\cite{dembo2009large, ragone2020computation}, which emphasize the tails of a distribution. Recently, machine learning has been used to provide data-driven forecasts for predicting extreme events~\cite{bobra2016predicting, altwegg2017a, guth2019machine, pickering2022discovering, qi2020using, rudy2023output, chang2025extreme}. Yet accurate forecasting alone does not reveal why events occur or how they might be prevented.

\begin{figure*}[th!]
    \centering
    \includegraphics[width=\textwidth]{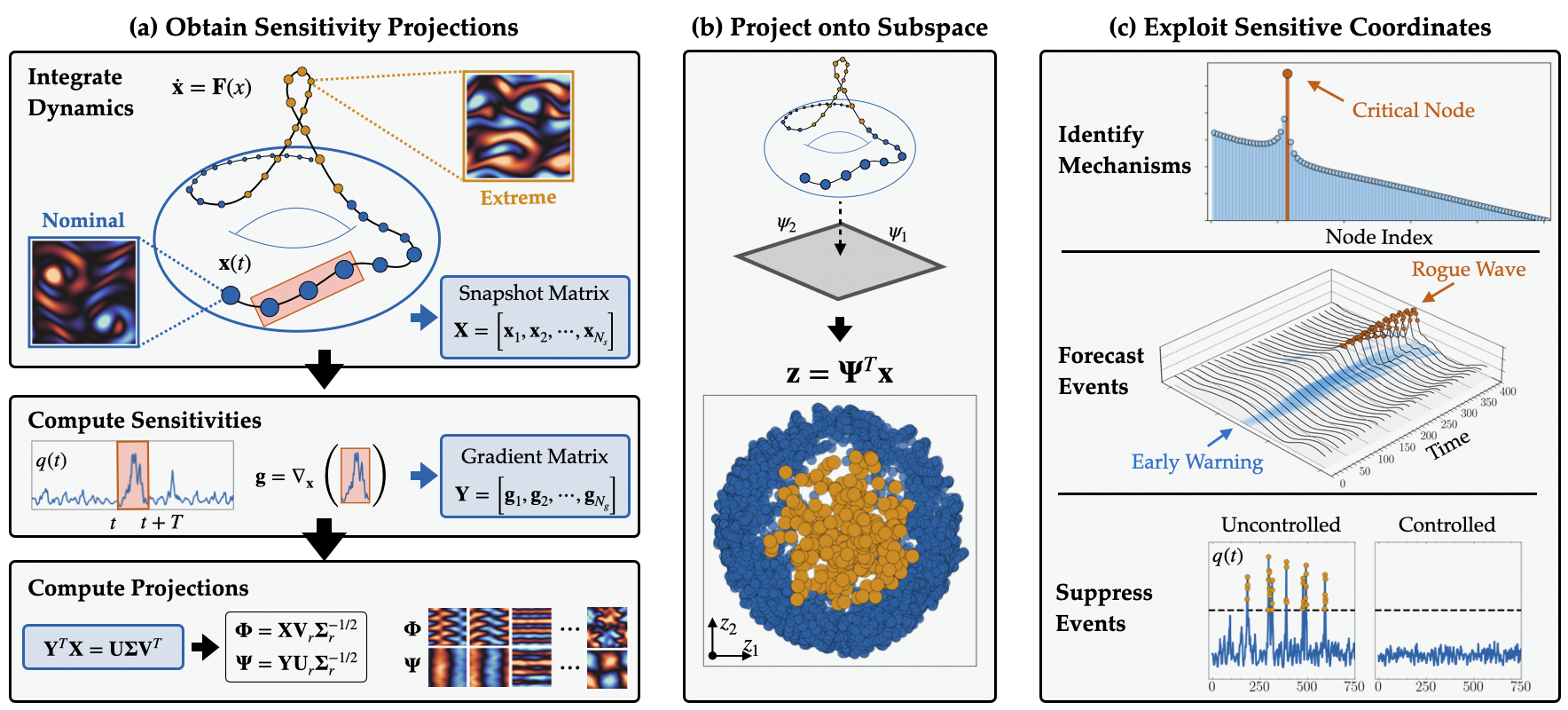}
    \cprotect \caption{\textbf{Schematic of approach.} 
    \textbf{(a)} Obtaining the CoBRAS modes using forward snapshots and gradient samples from the Kolmogorov flow. 
    \textbf{(b)} Projection onto the dominant $\cobrasPsi$ modes of the Kolmogorov flow to reveal sensitive structure. 
    \textbf{(c)} Examples of using the coordinates to identify mechanisms in FitzHugh-Nagumo oscillator networks, predicting event formation of rogue waves in the modified nonlinear Schrödinger equation, and designing controllers to suppress events in the Kolmogorov flow.
    }
     \label{fig:figure1}   
\end{figure*}
 
Extreme events often arise from transient dynamic instabilities~\cite{haller1995a, haller2010a, farazmand2016a, farazmand2016b}, and understanding their mechanisms would not only improve predictive models, but also reveal opportunities for sensing and intervention. Variational approaches have sought states that extremize quantities of interest~\cite{butler1992three, thompson1998initial, farazmand2017a}, but the nonconvex optimization becomes increasingly intractable in high-dimensions, and even a successfully identified optimum is not guaranteed to be a physically realizable trajectory of the system. Perturbation methods can probe sensitivity more directly~\cite{ansmann2013a}, but they become computationally infeasible as the space of possible perturbations grows prohibitively large.

Model reduction methods, such as principal component analysis (PCA) and its equivalent proper orthogonal decomposition (POD), have been instrumental in studying high-dimensional systems~\cite{lorenz1956empirical, berkooz1993proper}; however, these methods struggle to capture extreme events because rare occurrences are underrepresented in the data and their precursors often live at low-energy scales that PCA/POD discards~\cite{farazmand2019extreme}. Nonlinear autoencoder networks have improved the discovery of low-dimensional structure~\cite{fukami2023grasping} but generally sacrifice interpretability and require fine tuning. Sensitivity-aware alternatives, such as balanced POD (BPOD)~\cite{Willcox2002aiaaj, rowley2005model, ilak2008modeling}, resolvent analysis~\cite{mckeon2010critical, luhar2014opposition, herrmann2021data}, and their variants~\cite{benner2018mathcalh_2, benner2024balanced}, are able to capture dynamically relevant structures, but they rely on proximity around invariant sets and struggle when far from them~\cite{otto2022optimizing, otto2023model}. Optimal time-dependent (OTD) modes~\cite{babaee2016minimization, farazmand2016a,
babaee2017reduced, katsidoniotaki2026dynamics} identify the evolution of the most sensitive directions without linearization, but are trajectory-dependent and do not provide a global characterization of the system.

The covariance balancing reduction using adjoint snapshots (CoBRAS)~\cite{otto2023model} addresses many of the aforementioned limitations, generalizing BPOD to nonlinear dynamical systems by optimally balancing information about a system's state and its sensitivity. Taking the singular value decomposition (SVD) of an inner product matrix between state and gradient snapshots, CoBRAS produces a pair of global linear modes that are directly interpretable, capture dynamically relevant structures across all scales, and it has been shown to outperform other balanced truncation methods when the system experiences strong transients~\cite{otto2023model}. Thus, our goal is to develop CoBRAS as a natural framework to characterize and control extreme events.

\smallskip
\noindent\textbf{Contributions.} In this work, we use CoBRAS as a tool for identifying and examining extreme event mechanisms and apply it to three diverse systems: 
(1) turbulent energy bursts in the chaotic 2D Kolmogorov flow, 
(2) spontaneous synchronized firing in large networks of coupled FitzHugh-Nagumo oscillators,
and
(3) localized rogue wave formation in the modified nonlinear Schr{\"o}dinger equation. Specifically, we:

\begin{itemize} 

    \item Reveal the underlying mechanisms of extreme events, identifying interpretable structures such as critical nodes in dynamic networks.
    \item Demonstrate enhanced prediction of extreme events by building simple forecast models directly from CoBRAS coordinates.
    \item Validate the identified mechanisms by developing minimally actuated controllers that completely suppress event formation.
    \item Extend CoBRAS to spatially localized phenomena by applying it to design sliding nonlinear filters.
    \item Develop a fully data-driven pipeline by replacing adjoint calculations with automatically differentiable neural surrogates.
\end{itemize}

Together, these contributions establish CoBRAS as a powerful and interpretable framework for understanding, predicting, and suppressing extreme events in complex dynamical systems.

\section{Background}
\label{sec:background}
Identifying the mechanisms governing extreme events requires projecting high-dimensional dynamics onto a low-dimensional subspace that captures the dynamically relevant phenomena most responsible for rare events, not merely the highest-energy structures. We briefly review the progression of projection methods from proper orthogonal decomposition to CoBRAS, which forms the foundation of our approach.

Throughout this text, we assume that there is some dynamical system of the form:
\begin{equation} \label{eq:dynamical_system}
    \frac{\mathrm{d}{\mathbf x}}{\mathrm{d}t} := \dot{\mathbf{x}} = \mathbf F (\mathbf x, \mathbf u)
\end{equation}
where $\mathbf x \in \mathbb R^m$ is the vector of  state variables, and $\mathbf u$ is a control input that can be used to influence the system. When the system originates from a partial differential equation (PDE), we assume there is an appropriate discretization to reduce to the form in Eq. \ref{eq:dynamical_system}. In each of the methods here, the goal is to obtain a pair of rank-$r$, biorthogonal linear maps $(\cobrasPhi, \cobrasPsi)$  such that
\begin{align} \label{eq:projections}
    \mathbf z = \cobrasPsi^T \mathbf x, \quad
    \mathbf x \approx  \cobrasPhi \mathbf z
\end{align}
where $\mathbf z \in \mathbb R^r$ is a reduced set of coordinates $r \ll m$; i.e. $\cobrasPsi$  \textit{projects into} a low-dimensional space, and $\cobrasPhi$ \textit{reconstructs} the original state. We refer to the bases of columns
$\cobrasPhi = [\phi_1, \phi_2, \dots \phi_r]$ and 
$\cobrasPsi = [\psi_1, \psi_2, \dots \psi_r]$  as ``modes''.

\subsection{POD: Variance-Based Projections}
The proper orthogonal decomposition (POD) \cite{berkooz1993proper}---also known in other communities as principal component analysis (PCA), empirical orthogonal functions (EOF), the Karhunen-Loeve Decomposition, among many others---and its variants \cite{towne2018spectral, schmidt2019conditional, schmidt2026data} have been widely used throughout engineering and science for linear dimensionality reduction. These methods provide a hierarchical basis of modes that capture the most energetic structures in a dynamical system.  For POD, the two orthonormal mode bases are equal, i.e. $\cobrasPhi_{\text{POD}} = \cobrasPsi_{\text{POD}}$, and can be obtained directly from data~\cite{Brunton2022book}. 

Explicitly, given a dataset of $n_x$ samples in $\mathbb R^m$ (such as discretized solutions to a PDE, centered about a mean solution), one builds the data matrix ${\mathbf X = \frac{1}{\sqrt n_x}[\mathbf x_1, \cdots, \mathbf x_{n_x}] \in \mathbb{R}^{m \times n_x}}$. 
When $n_x \ll m$, the ``method of snapshots'' ~\cite{Sirovich:1987} provides an orthonormal projection onto the $r$-dimensional subspace containing the most variance (i.e. energy) by taking the truncated SVD of the $n_x \times n_x$ inner-product matrix $\mathbf X^T \mathbf X = \mathbf U \mathbf \Sigma \mathbf V^T$, where $\mathbf V = \mathbf U$ by symmetry.
The POD modes are given by $\cobrasPsi_{\text{POD}} = \mathbf X \mathbf U_r \mathbf \Sigma_r^{-1/2}$,  ordered hierarchically by their energy content. 
Because extreme events are often driven by transient growth from low-energy states, the dominant POD modes frequently miss the structures most critical to their formation.

\subsection{Balancing Projections}
When studying dynamical systems in an applied control setting, it is critical to understand how one can actually influence the system and what information is possible to sense. 
This information is characterized by the controllability and observability Gramians, $\mathbf W_c$ and $\mathbf W_o$, respectively. For a linear system 
\begin{subequations}
\begin{align}
    \dot{\mathbf x} &= \mathbf{Ax} + \mathbf{Bu}\\ 
    \mathbf y &= \mathbf{Cx}
\end{align} 
\end{subequations}
these are given by
\begin{subequations}
\begin{align}
    \mathbf W_c &= \int_0^\infty e^{\mathbf A t} \mathbf{BB^*}e^{\mathbf A^* t}dt
    \\
    \mathbf W_o &= \int_0^\infty e^{\mathbf A^* t} \mathbf{C^*C}e^{\mathbf A t}dt
\end{align}
\end{subequations}
where $\mathbf{A}^*$ is the adjoint operator given by the conjugate transpose. These are real, positive definite ${m\times m}$ matrices, whose largest eigenvalues correspond to the most controllable and observable directions in the state-space. 

For a system that is controllable and observable, balanced truncation (BT)~\cite{moore1981bt} finds a linear change of coordinates $\mathbf x = \mathbf{Tz}$ such that in the new coordinates $\mathbf z$, the two Gramians are equal and diagonal, \emph{balancing} the importance of controllability and observability. Reducing the system then amounts to truncating modes with low controllability and observability.   

BT quickly becomes computationally intractable for high-dimensional systems, such as fluid flows. Balanced POD (BPOD)~\cite{rowley2005model} addresses this by approximating the most controllable modes from the same snapshot matrix as in POD, $\mathbf X$, and the most observable modes from an \emph{adjoint} snapshot matrix $\mathbf Y = \frac{1}{\sqrt{n_g}}[\mathbf g_1, \cdots, \mathbf g_{n_g}]$ sampled from impulse responses of the linear adjoint system $\dot{\mathbf g} = \mathbf A^* \mathbf g + \mathbf C^*\mathbf v$. 
By taking the $r$-truncated SVD of the balanced inner-product matrix  $\mathbf Y^T\mathbf X = \mathbf U \mathbf \Sigma \mathbf V^T  \in \mathbb{R}^{n_g \times n_x}$, BPOD provides the pair of linear maps:
\begin{subequations}
\begin{align}
    \cobrasPhi_{\text{BPOD}} 
        &= \mathbf{X} \mathbf V_r \mathbf{\Sigma}_r^{-1/2}
    \in \mathbb{R}^{m \times r}
    \\
    \cobrasPsi_{\text{BPOD}} 
        &= \mathbf{Y} \mathbf U_r \mathbf{\Sigma}_r^{-1/2}
    \in \mathbb{R}^{m \times r}
\end{align}
\end{subequations}

Like with POD, the SVD provides a natural hierarchy of BPOD modes; however, instead of identifying the dominant energetic structures, the first $r$ BPOD modes provide a set of coordinates that characterizes the $r$-dimensional subspace that is most controllable and observable. However, a fundamental limitation of BPOD is that it relies on the dynamics being linear and that forward and adjoint data produced through impulse responses. While nonlinear generalizations have been developed~\cite{scherpen1993balancing, fujimoto2008computation, benner2018mathcalh_2, benner2024balanced}, these methods can become infeasible in high-dimensions and their performance suffers from strong nonlinearities when far from the equilibrium where they were constructed~\cite{otto2022optimizing, otto2023model}.

\subsection{Sensitivity Projections} \label{background:CoBRAS}
The covariance balancing reduction using adjoint snapshots (CoBRAS) method~\cite{otto2023model} was developed to generalize BPOD to build reduced-order models of general nonlinear dynamical systems by balancing information about the state and sensitivity of the system. 
CoBRAS also replaces adjoints obtained via linear impulse responses by sampling gradients of an arbitrary function $\cobrasOutput$ defined along trajectories of the system. 
As established by~\cite{constantine2015active, zahm2020gradient}, the covariance matrix of gradient samples, $\mathbf W_g$, can reveal a low-dimensional ``active subspace'' that captures the most sensitive directions that impact $\cobrasOutput$. Importantly, this space is quickly spanned by gradient samples that depend on the effective \textit{rank} of the derivative map, and not the dimensionality of the input.  Thus, in practice very few samples are needed to capture the space. Otto et al.~\cite{otto2023model} observed that the state and gradient covariance matrices transform identically to the controllability and observability Gramians, enabling the BPOD construction to carry over directly. 

Explicitly, let $\cobrasOutput(\mathbf x, {\mathbf u}) \in \mathbb R^n$ map a trajectory, defined by an initial state $\mathbf x$ and a sequence of control inputs ${\mathbf u}$, to a set of $n$ outputs. By sampling a random normal vector  $\boldsymbol{\xi} \in \mathbb{R}^n$, one can obtain gradient samples with respect to the initial state $\mathbf g = \nabla_\mathbf{x} (\boldsymbol{\xi}^T \cobrasOutput)(\mathbf{x}, {\mathbf u})$. Analogous to BPOD, CoBRAS builds a matrix of $n_g$ gradients (instead of adjoint linear responses)  $\mathbf Y = \frac{1}{\sqrt n_g}[\mathbf g_1, \cdots \mathbf g_{n_g}] \in \mathbb{R}^{m \times n_g}$ and takes the $r$-truncated SVD of the balanced inner-product matrix: $\mathbf Y^T\mathbf X = \mathbf U \mathbf \Sigma \mathbf V^T  \in \mathbb{R}^{n_g \times n_x}$, where $\mathbf X$ is the matrix of forward snapshots as before. Likewise, we obtain the pair of linear maps:
\begin{align} \label{eq:cobras_modes}
    \cobrasPhi_{\text{CoBRAS}} 
        &= \mathbf{X} \mathbf V_r \mathbf{\Sigma}_r^{-1/2}
    \in \mathbb{R}^{m \times r}
    \\  
    \cobrasPsi_{\text{CoBRAS}} 
        &= \mathbf{Y} \mathbf U_r \mathbf{\Sigma}_r^{-1/2}
    \in \mathbb{R}^{m \times r}.
\end{align}
For the rest of the text, we will simply write $\cobrasPhi, \cobrasPsi$ to denote these CoBRAS modes. 

As in POD and BPOD, the CoBRAS modes are ordered hierarchically. The first $r$ modes of $\cobrasPsi$ project onto the most sensitive $r$-dimensional subspace of $\cobrasOutput$, and the first $r$ modes of $\cobrasPhi$ optimally reconstruct the states from this subspace. Importantly, these maps are not tied to a particular equilibrium---they are defined \textit{globally}---and capture the most dynamically relevant structures for describing the map $\cobrasOutput$. As shown in~\cite{otto2023model}, these projections provide better reduced-order models than BPOD and its nonlinear variants when transients drive the state far away from the equilibrium, because the CoBRAS projections treat both nominal and transient phenomena on equal footing. This makes CoBRAS a natural tool for studying low-dimensional chaotic attractors where instabilities force trajectories to leave the attractor, such as systems with extreme events.

\section{CoBRAS for Extreme Events}
\label{sec:methods_cobras}
Extreme events arise in high-dimensional, complex dynamical systems where transient growth is amplified along sensitive directions by nonlinear dynamics. CoBRAS reveals a low-dimensional subspace that captures the global sensitivity of measurements evolving along trajectories. In this work, we use CoBRAS as a natural and interpretable framework for identifying the mechanisms behind these extreme events. 

We consider a simplified setting where data is collected along trajectories without control, so Equation \ref{eq:dynamical_system}  becomes $\mathbf{\dot x} = \mathbf F(\mathbf{x})$ and the map $\cobrasOutput$ introduced in Section \ref{background:CoBRAS} is defined by an initial state $\mathbf x$ without the sequence of control inputs. 
In particular, we define $\cobrasOutput(\mathbf x)$ to be the forward evolution of some quantity of interest (QoI) $q(\mathbf x(t))$. 
Letting 
$q_j = q(\mathbf x(t + j\Delta t))$ and $T = n\Delta t$, 
we define the map
$\mathbf{Q}(\mathbf x(t)) = [q_0, \dots  , q_{n-1}] \in \mathbb R ^{n}$, 
which quantifies the extremity of events over a time horizon $[t, t+T)$.

The learned CoBRAS modes $\cobrasPsi$ provide a coordinate system that captures the sensitivity of the future evolution of $q$ with respect to the current state; when large changes in $q$ signal an extreme event, these coordinates characterize the formation mechanism, providing a natural basis for prediction and suppression. A schematic of our method can be found in Figure \ref{fig:figure1}.

For simplicity, we regard $q$ as a scalar, but this can be readily generalized by stacking different quantities. We let $q_*$ denote the threshold for an extreme event.
The matrix of gradients, $\mathbf Y$, is prepared by using the randomized procedure described in Section \ref{background:CoBRAS} and is conveniently obtained via backpropagation using autodifferentiation with differentiable simulators built in JAX~\cite{jax2018github}.

\subsection{Localized Events}
\label{sec:methods_local_cobras}
CoBRAS provides a set of \textit{global} projections of a state variable, characterizing the most sensitive directions. However, many extreme events in nature, such as hurricane formation or coronal mass ejections, are spatially \emph{localized}---the presence of a single event does not impact the entire domain. In these settings, extreme event formation is dominated by the dynamics within a local neighborhood. To accommodate this wider class of phenomena, we adapt CoBRAS for localized QoIs and events by extracting spatially-dependent feature vectors $\mathbf z(\mnlsSpatial)$ and using them to construct sliding nonlinear filters for prediction and control.  

More precisely, suppose our dynamical system has spatial dependence, such as the solution to a PDE, $\mathbf{x}(t, \mnlsSpatial)$ for $\mnlsSpatial \in \Omega$. 
Let $q(\mathbf x; \mnlsSpatial_0)$ be a localized quantity of interest at $\mnlsSpatial_0$, such as the total energy in a neighborhood $D$ of $\mnlsSpatial_0$, $\int_D |\mathbf{x}(t, \mnlsSpatial)|^2 d \mnlsSpatial$, and likewise define $\cobrasOutput(\mathbf{x}; \mnlsSpatial_0)$. 
For translation-equivariant systems, we can translate signals into a common frame of reference, such as the origin, and take gradients in that frame; i.e. $\mathbf g = \nabla_{\mathbf{\tilde x}} \cobrasOutput(\mathbf {\tilde x}; 0)$, where $\mathbf{\tilde x}(t, \mnlsSpatial) = \mathbf {x}(t, \mnlsSpatial - \mnlsSpatial_0)$. 
If $q(\mathbf{x}, \mnlsSpatial_0)$ depends on a local neighborhood
over some time horizon, $T$,
then $\mathbf g$ will be compactly supported on $\Omega$ for each $\mnlsSpatial_0$. Evaluating CoBRAS in this frame of reference provides a set of spatially-dependent modes,
$\cobrasPhi(\mnlsSpatial)$ and $\cobrasPsi(\mnlsSpatial)$, that optimally project and reconstruct signals $\mathbf{x}(\mnlsSpatial)$ for all $\mnlsSpatial$ in a neighborhood of the origin. 

Translation symmetry allows us to move new signals back into this frame of reference.
In particular, $\cobrasPsi$ defines the kernel of a sliding linear filter:
\begin{align}
    \label{eq:local_corr}
    \mathbf z(\mnlsSpatial) = (\cobrasPsi \star \mathbf x)(\mnlsSpatial)
\end{align}
where $\star$ is the cross-correlation operator on real-valued functions. This provides a CoBRAS transformation that characterizes the sensitivity of $\cobrasOutput(\mathbf x; \mnlsSpatial)$ for each point in the domain, and global reconstruction of the signal is
\begin{align}
    \label{eq:global_recon}
    \mathbf{x}(\mnlsSpatial) \approx \cobrasPhi(0) \cdot \mathbf z(\mnlsSpatial).
\end{align}
Note that cross-correlation is extremely efficient in modern signal processing and deep learning frameworks, and the reconstruction only requires $\cobrasPhi$ to be evaluated at a \textit{single} point. We derive these expressions in \ref{si:localized} 
and extend the result to the more general setting of systems with Lie group symmetry.

\subsection{Predicting Events}
\label{sec:methods_preds}
The coordinates $\mathbf z = \cobrasPsi^T \mathbf{x}$  capture the most sensitive directions of $\cobrasOutput$, providing natural features for predicting extreme events.  For simplicity, we train a support vector machine (SVM) with a radial-basis function (RBF) kernel. 

Given a QoI $q(t)$ and coordinate $\mathbf z(t)$, we seek to determine whether an event occurs over some time horizon, $t_{pred}$. The SVM decision function $d(\mathbf z(t))$ predicts:
\begin{equation}
\text{sign} \left(   d(\mathbf z) \right)= \text{sign}\left\{\max_{\tau \in [t, t+t_{pred}]} q(\tau) -q_*   \right\}.
\end{equation}
Note that the QoI $q$ in this definition can be different from the one defining $\cobrasOutput$. For localized events, we simply replace $\mathbf z$ with $\mathbf {z}(\mnlsSpatial)$ from Eq. \ref{eq:local_corr} and provide pointwise predictions across the domain; the predictions then define a \textit{nonlinear filter} for the system,
$d(\mnlsSpatial) = d\left[(\cobrasPsi \star \mathbf x)(\mnlsSpatial)\right]$.

\subsection{Suppressing Events}
\label{sec:methods_suppression}
We verify a causal relationship between the leading CoBRAS modes and extreme events by demonstrating that feedback control laws based on these modes are capable of completely suppressing extreme events.
The CoBRAS $\cobrasPsi$ modes indicate the most sensitive directions for influencing the map $\cobrasOutput$; by designing controllers in this subspace, we have greater influence over how $q$ evolves with the dynamics.
We verify this by designing intuitive, minimally actuated controllers to suppress extreme events.
For the control dynamical system, $\mathbf{\dot x} = \mathbf{F}(\mathbf x) + \mathbf u$, we provide feedback control 
\begin{equation}
    \mathbf u(\mathbf{x}) = \cobrasPhi \cdot  \mathbf{u}_z \left(\cobrasPsi^T \mathbf x \right) = \cobrasPhi \cdot  \mathbf u_z(\mathbf z)
\end{equation}
where $\mathbf u_z(\mathbf z)$ is a control law designed in the projected space, and $\cobrasPhi \mathbf u _z$ lifts back into the original space. Analogously using Eq. \ref{eq:global_recon} for localized events, the control law becomes 
\begin{equation}
    \mathbf u(\mathbf x(\mnlsSpatial)) 
    = \cobrasPhi(0) \cdot \mathbf u_z\left(\mathbf z(\mnlsSpatial)\right).
\end{equation}

As an example, we can compute $\mathbf u_z$ by using the decision of a classifier similar to previous works~\cite{suykens1999chaos}. If there is a classifier for predicting whether an extreme event will occur over some time horizon, such as the kernel SVM previously discussed, then  the negative gradient of the decision function $-\nabla_z d$ points towards the nominal region of space. For simple kernels, like RBFs, the gradient can be computed in closed form. We can, for example, choose
\begin{equation}
    \mathbf u_z = -k_{gain}\nabla_z d /||\nabla_z d||
\end{equation}
to be our controller, where $k_{gain}$ is a small positive constant. To ensure sparse control, we can even multiply $\mathbf 1 \left\{d(\mathbf z) > 0 \right\}$ so that the system is only actuated when the classifier predicts an extreme event is likely. 

Restricting actuation to the
$r$-dimensional projected space provides a direct test of whether CoBRAS has identified the true mechanism; suppressing the formation of an event with minimal control confirms that the essential dynamics are captured.

\begin{figure*}[tp]
    \centering
    \includegraphics[width=0.95\textwidth]{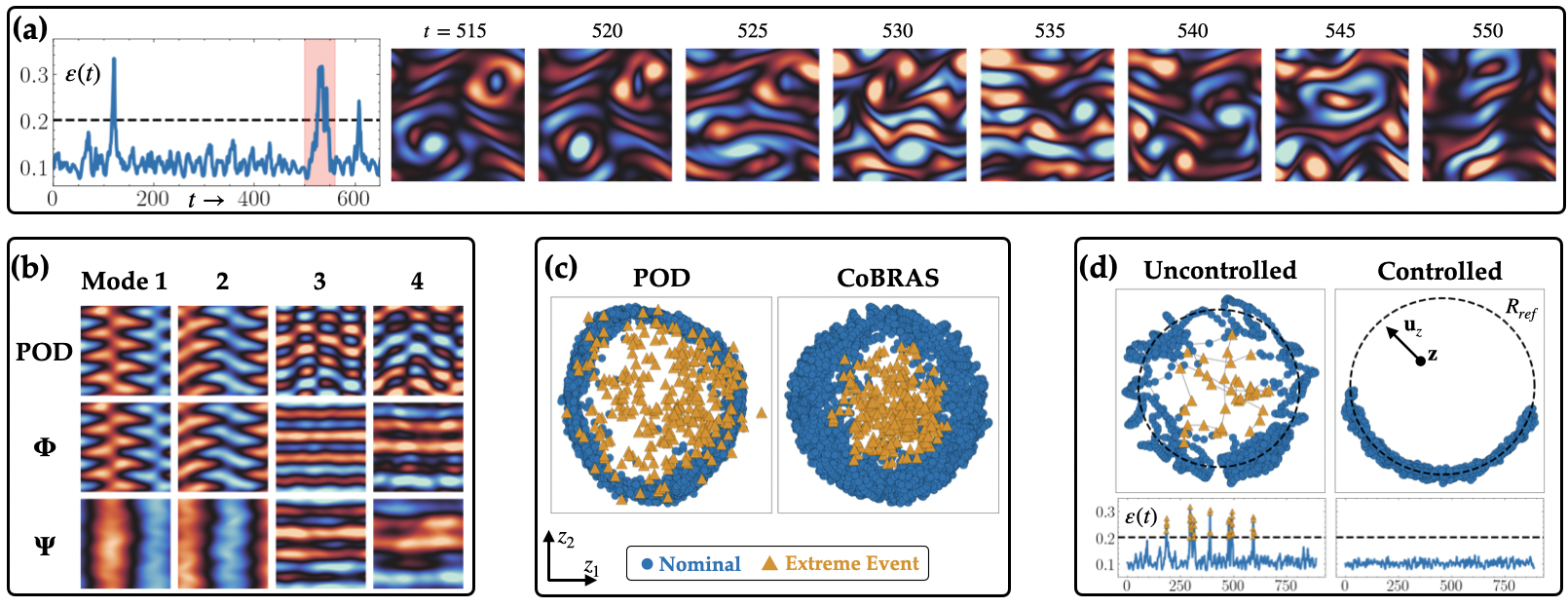}
    \cprotect \caption{\textbf{Kolmogorov Flow} \textbf{(a)} Energy dissipation, $\varepsilon(t)$, of the Kolmogorov flow from $t \in [0, 650]$ and vorticity snapshots taken from the shaded region during an extreme event $t \in [515,550]$. \textbf{(b)} The first four POD and CoBRAS $(\cobrasPhi, \cobrasPsi)$ modes. \textbf{(c)} The projection onto the first two POD and CoBRAS modes; points are colored by whether the energy dissipation is above the energy threshold. \textbf{(d)} Comparison between the uncontrolled flow and the controlled flow where the system is steered to the circle with radius $R_{ref}$ in the CoBRAS projected space, completely suppressing extreme event formation.
    }
     \label{fig:kflow}   
\end{figure*}

\section{Results} 
\label{results}

We now examine the ability to use CoBRAS for predicting, suppressing, and characterizing extreme events on three diverse systems: turbulent energy dissipation bursting in the 2D Kolmogorov Flow, spontaneous synchronization of node activations in a FitzHugh-Nagumo oscillation network, and localized rogue wave formation in the 1D modified nonlinear Schrödinger equation. Finally, we show that gradients can be obtained from a learned surrogate model, enabling CoBRAS in fully data-driven settings where no differentiable simulator is available.

\subsection{Turbulent Energy Bursts}
The Kolmogorov flow is a 2D fluid flow governed by the incompressible Navier-Stokes equations with periodic boundary conditions in both directions driven by a spatially periodic forcing term. 
Its simplicity has made it a widely studied toy model for turbulence in analytic~\cite{platt1991investigation}, simulation~\cite{borue1996numerical,qin2025clean}, and even laboratory settings~\cite{obukhov1983kolmogorov}. In particular, the flow exhibits intermittent extreme events as energy transfers from larger to smaller scales as is typical of turbulent systems.
Explicitly, the Kolmogorov flow takes the form
$$\partial_t \kFlowVel = - \kFlowVel \cdot \nabla \kFlowVel - \nabla p + \nu \Delta \kFlowVel + \mathbf f 
,\qquad \nabla \cdot \kFlowVel = 0$$ 
where $\kFlowVel(t, x,y)$ is the velocity field and $p(t, x, y)$ is the pressure field, $\nu = 1/Re$ is the kinematic viscosity, $Re$ is the Reynolds number, $(x,y) \in [0,2\pi]^2$,  and $\mathbf f(x,y) = (\sin(k_f y),0)$ is the 
characteristic forcing. For simplicity, we analyze the vorticity of the system $\omega = \nabla \times \kFlowVel$ and treat the discretized field as a high-dimensional ODE: 
$d\kFlowVort/dt = \mathbf F(\kFlowVort)$, where $\mathbf F$ is the discretized nonlinear differential operator in vorticity space. 

In this work, we consider the well-studied case $k_f = 4$ and $Re=40$ where the flow is fully chaotic and exhibits intermittent bursts in the energy dissipation $\energyDissipation(\kFlowVort) = \frac{\nu}{4\pi^2} \int |\kFlowVort(t, x,y)|^2 dxdy$. 
When applying CoBRAS to this system, we define the state to be the discretized vorticity field, i.e. $\mathbf x = \kFlowVort$, our QoI to be the energy dissipation  
 $q(t) = \energyDissipation(\kFlowVort(t))$, and consider a time-horizon of $T=4$ to compute gradients. Full details about the simulation and setup can be found in \ref{si:kflow}. We compare against POD to highlight the advantages of sensitivity-balanced projections.

\paragraph{Revealing the Bursting Mechanism.}
The CoBRAS and POD modes capture qualitatively different structures in the Kolmogorov flow, as shown in Figure \ref{fig:kflow}(b).
The first four CoBRAS modes account for over 90\% of the SVD energy (see \ref{si:kflow_40_modes}), capturing the most sensitive $4$-dimensional subspace for describing the energy dissipation. These modes resemble $\sin(x), \cos(x), -\sin(4y)$, and $-\cos(4y)$ respectively.
This is in direct agreement with the finding in~\cite{farazmand2017a} where the authors formulate a constrained variational optimization problem to identify an initial condition that maximizes the change in energy input in the system---a leading indicator of extreme events---and determined that extreme events are correlated with triadic interactions between the first and fourth Fourier modes, $a_{1,0}$ and $a_{0,4}$. 
In fact, the first mode is the solution to the variational problem in~\cite{farazmand2017a} at $Re=40$, up to a scale factor and phase shift. 

The first two POD and CoBRAS $\cobrasPhi$ modes look identical, as they capture \textit{variance} of the snapshot data. 
However, the $\cobrasPsi$ modes define the projection into the embedding space, and these are fundamentally different from the POD modes.
The leading $\cobrasPsi$ modes define the most sensitive directions for $\energyDissipation$, so changes along these modes have the greatest impact on $\energyDissipation$. 

This distinction is most notable when we project the snapshot data into the leading two coordinates of the POD and CoBRAS $\cobrasPsi$ modes in Figure \ref{fig:kflow}(c). Both plots reveal the ring-like structure defining the body of the chaotic attractor where the majority of the nominal, low-energy snapshots live. However, the projection onto the POD modes scatters the high-energy snapshots throughout the domain, causing significant overlap with the nominal states, so there is no clear separation boundary. 
The projection onto $\cobrasPsi$, however, defines a clear and interpretable separation: rare, high-energy events are excursions into the interior of the ring, while nominal states live on the exterior of the ring. 

This radial dependence directly corresponds to the 
magnitude of the first Fourier mode, $|a_{1,0}|$, and has been the defining quantity for predicting and controlling extreme events~\cite{farazmand2017a, farazmand2019closed}. 
CoBRAS recovers the same result without any prior knowledge of Fourier modes being a natural basis due to the periodicity of the domain; instead {CoBRAS automatically identifies the most natural basis and ranks the modes based on their impact on the QoI}.

In \ref{si:kflow}, we provide additional CoBRAS modes and demonstrate that symmetry reduction can clarify the dynamical structures encoded in higher-order modes.

\paragraph{Predicting Bursts}
Following Section \ref{sec:methods_preds}, we train RBF-kernel SVMs to evaluate the predictive capability of CoBRAS against POD over a time horizon $t_{pred}$; the full study is provided in \ref{si:kflow} (including $Re$ dependence, symmetry reduction, and kernel-CoBRAS variants). Using just two CoBRAS modes, the true event rate remains high until $t_{pred} \approx 8$ and plateaus around $t_{pred} \approx 16$; matching this performance requires at least eight POD modes. Despite both being linear projections, the first two {CoBRAS modes are significantly more informative about extreme events than the leading POD modes}.

\paragraph{Suppressing Energy Bursts}
Finally, to verify that the first two CoBRAS coordinates actually describe the \textit{mechanism} governing the extreme event, we design a simple controller in the projected space. 
The clean geometric separation in Figure \ref{fig:kflow}(c) reveals a more direct approach than the SVM-based control design outlined in Section \ref{sec:methods_suppression}. 
Given a snapshot, $\kFlowVort$, we simply project into the two-dimensional space $\mathbf z = \cobrasPsi^T \kFlowVort$ and steer the system towards the closest point on the circle defined by $\mathbf z_{ref} = R_{ref} \mathbf z /||\mathbf z||$, where $ R_{ref}$ is the median radial coordinate of the nominal states. We 
use a proportional controller $\controlInput_z = -(\mathbf z - \mathbf z_{ref})$ in the latent space and
actuate the system in $\kFlowVort$-space by using the $\cobrasPhi$ modes: $\controlInput_\omega = \cobrasPhi \controlInput_z$.

In Figure \ref{fig:kflow}(d), we show the controlled and uncontrolled trajectories for a sample initial condition that leads to multiple extreme events. In the controlled setting, the latent state quickly collapses to the circle defined by $R_{ref}$.  Thus, extreme event formation is completely suppressed, confirming that the dominant mechanism is captured in just two CoBRAS coordinates.

\begin{figure*}[t!]
    \centering
    \includegraphics[width=0.95\textwidth]{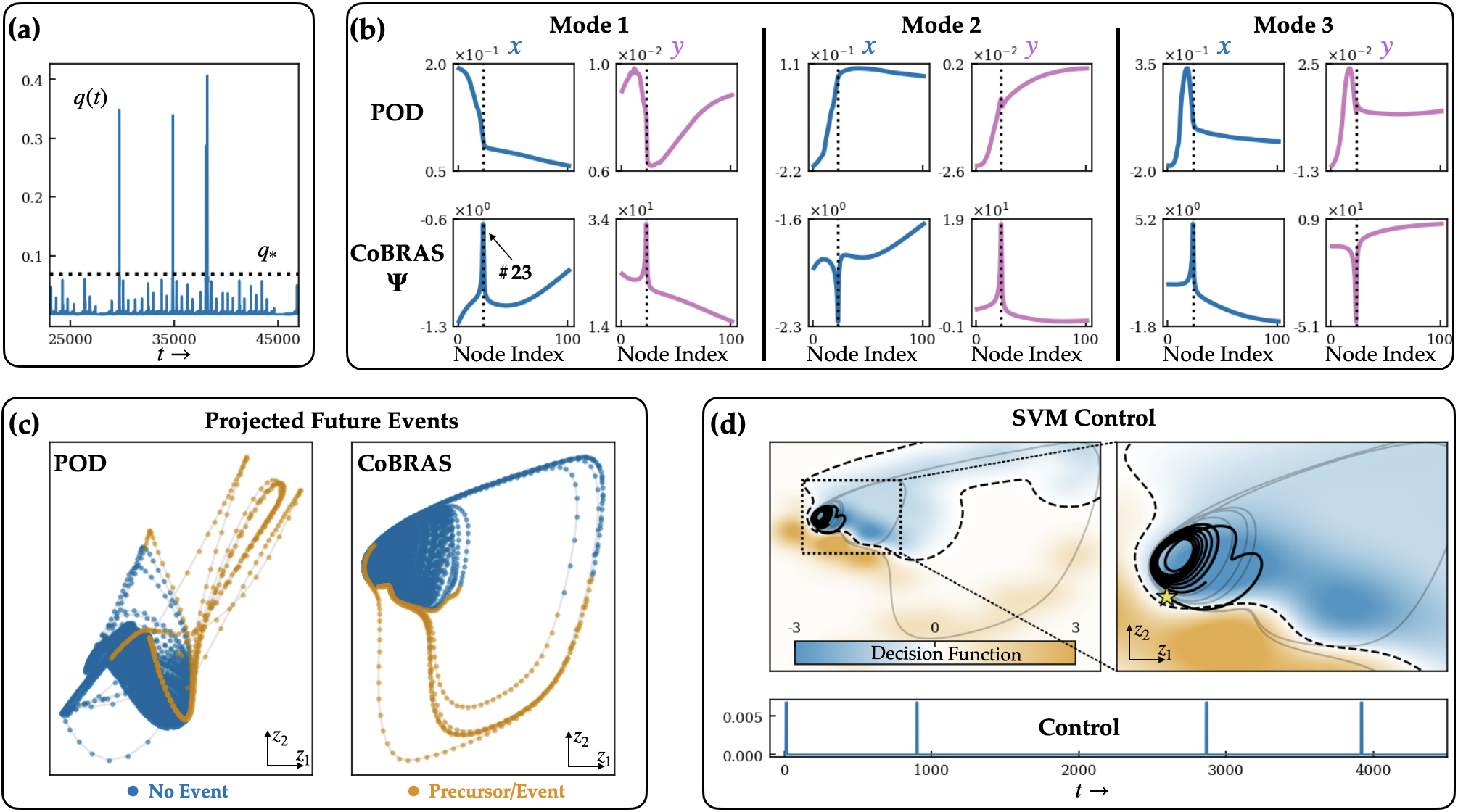}
    \cprotect \caption{\textbf{FitzHugh-Nagumo Oscillators} 
    \textbf{(a)} The QoI for the FHN system over time exhibits small bursting with intermittent large extreme events. Dotted line corresponds to event threshold, $q_*$.
    \textbf{(b)} The first three POD and $\cobrasPsi$ modes separated into the corresponding action on the $x$ and $y$ components. Dashed vertical line indicates node \#23. 
    \textbf{(c)} The projection onto the leading two POD and CoBRAS modes with states in orange indicating that an event will happen within $t_{pred}= 50$ units.  
    \textbf{(d)} \textit{Top}: Heatmap of the SVM decision function classifying whether an event will occur within $t_{pred} = 50$ time units and controlled trajectory preventing the event.  The black dashed line indicates the SVM decision boundary. The light gray line is the reference uncontrolled trajectory, and the solid black line is the controlled trajectory. The star in the zoomed in panel indicates the initial condition for both trajectories. \textit{Bottom}: The magnitude of the sparse control policy $||\mathbf u||$ as a function of time. 
    }
     \label{fig:fhn}   
\end{figure*}

\subsection{Synchronization in Oscillator Networks}
 The FitzHugh-Nagumo (FHN) model has been widely used to model complex networked dynamical systems exhibiting node excitation, such as electro-optical systems~\cite{romeira2016regenerative}, networks of neurons firing~\cite{gerster2020fitzhugh}, cardiac~\cite{nash2004electromechanical} and pancreatic cells~\cite{scialla2021hubs}, and biochemical networks~\cite{lin2004resonance}. A detailed review of the model can be found in~\cite{cebrian2024six}. 

As in~\cite{ansmann2013a, karnatak2014a}, we consider coupled FHN networks of $N$ units $(x_i, y_i)$ governed by
\begingroup
\addtolength{\jot}{-10pt}
\begin{align} \nonumber 
\dot x_i & = x_i (a_i - x_i)(x_i - 1) - y_i + k \sum_j A_{ij}(x_j - x_i)\\  
\dot y_i & = b_i x_i - c y_i
\end{align}
\endgroup
for $i = 1,2, \dots,  N$, where $A_{ij}$ is the adjacency matrix, $a_i, b_i$ and $c_i$ are internal parameters for each unit, and $k = 0.128/(N-1)$ is the coupling strength. For simplicity, we choose $a_i=-0.02651$ and $c_i = 0.02$ to be constants for all nodes and $b_i$ to be distributed in $[0.006, 0.014]$, which breaks the symmetry among all nodes. 
Following~\cite{ansmann2013a, karnatak2014a}, we analyze a fully connected network ($A_{ij} = 1$ for all $i\neq j$) of $N=101$ nodes and linear ramp up in coefficients $b_i = 0.006 + 0.008\frac{i-1}{N-1}$. With these parameters, the system is chaotic and exhibits intermittent synchrony leading to extreme bursts of neuronal activity.  
Additional investigations of the network with $N=1001$ nodes and a small world network consisting of $N=100\times 100$ nodes can be found in \ref{si:fhn}. 

To apply CoBRAS, we take $\mathbf x = (x_1, \dots x_N, y_1, \dots y_N)$ to be our state variable and define our QoI to be the node-average energy of the system $q(t) = \frac{1}{2N} \sum_i \left( x_i(t)^2 + y_i(t)^2 \right)$. The system is chaotic~\cite{ansmann2013a, karnatak2014a}, and this quantity frequently exhibits small bursts, $q \approx 0.06$, when only some nodes synchronize and large bursts, $q \approx 0.3$, when they all synchronize. We take $q_*$ to be four standard deviations beyond the mean, $q_* \approx 0.07$. For computing the gradient samples, we consider a time horizon of $T=100$ time units.
 
\paragraph{Identifying Critical Nodes}
Unlike the Kolmogorov Flow, the fully-connected FHN system does not admit a natural geometric basis, such as Fourier modes, which characterize the system. Instead, the natural modes of the system are defined purely by the dynamics. 
In Figure \ref{fig:fhn}(b), we plot the first three POD modes and CoBRAS $\cobrasPsi$ modes obtained from our method. 
For ease of visualization, we separate the components that act on the $x$- and $y$-variables and plot them against their corresponding node index. 
An extended set of modes, including the corresponding $\cobrasPhi$ modes, can be found in \ref{si:fhn}. 

POD  only captures the variance of the state variables and biases towards low-index nodes since they are active more often. More importantly, however, the $y$-variables are an order of magnitude smaller than the $x$-variables and therefore have less variance; this is reflected in the POD modes which further suppress them. As shown in Figure \ref{fig:fhn}(c), the projection onto the dominant POD modes completely loses the geometric information about the dynamics, whereas CoBRAS retains a smooth geometric structure that clearly indicates the outline of the chaotic attractor, with extreme events following trajectories that escape it.

Across all $\cobrasPsi$ modes (in both $x$- and $y$-variables), an important feature emerges: there is a sharp cusp occurring at around node \#23---indicating that the system is critically sensitive to values of this node. This is consistent with Ansmann et al. who identified a sharp transition in extreme-event likelihood when 23 nodes are active~\cite{ansmann2013a}. Because the heterogeneous coupling excites low-$b_i$ units first, this onset coincides with the activation of node \#23---CoBRAS independently identifies this node as the critical one driving event formation.

Furthermore, CoBRAS identifies that the \textit{relative phase} between the $x$- and $y$-components of node \#23 is important, as shown by the peaks pointing in opposite directions in Figure \ref{fig:fhn}(b). In \ref{si:fhn_101}, we examine this further: when $x_{23}$ and $y_{23}$ are in phase (have the same sign), this facilitates growth in the dynamics, leading to synchrony and extreme events; when the two are out-of-phase (opposite signs), the system de-synchronizes and trajectories return to the body of the attractor. In \ref{si:fhn_1001} and \ref{si:fhn_small_world}, we also verify these results are consistent in larger networks.

\paragraph{Predicting and Preventing Synchrony}
We confirm the mechanism governing the events by actuating the system to completely prevent events from forming. However, the decision boundary for the FHN system has a more complex geometry than the simple circle used for the Kolmogorov flow. Instead, we fit a kernel SVM to find the decision boundary for whether an event will form over the next $t_{pred} = 50$ time units as described in Methods. 

Figure \ref{fig:fhn}(d) shows the SVM decision function when fit using the first two CoBRAS modes. The prediction regions capture the location of the extreme events shown in Figure \ref{fig:fhn}(c) very well---achieving a true nominal accuracy of 96\% and true event accuracy of 87\% on a held-out test set.

\begin{figure*}[tp]
    \centering
    \includegraphics[width=0.95\textwidth]{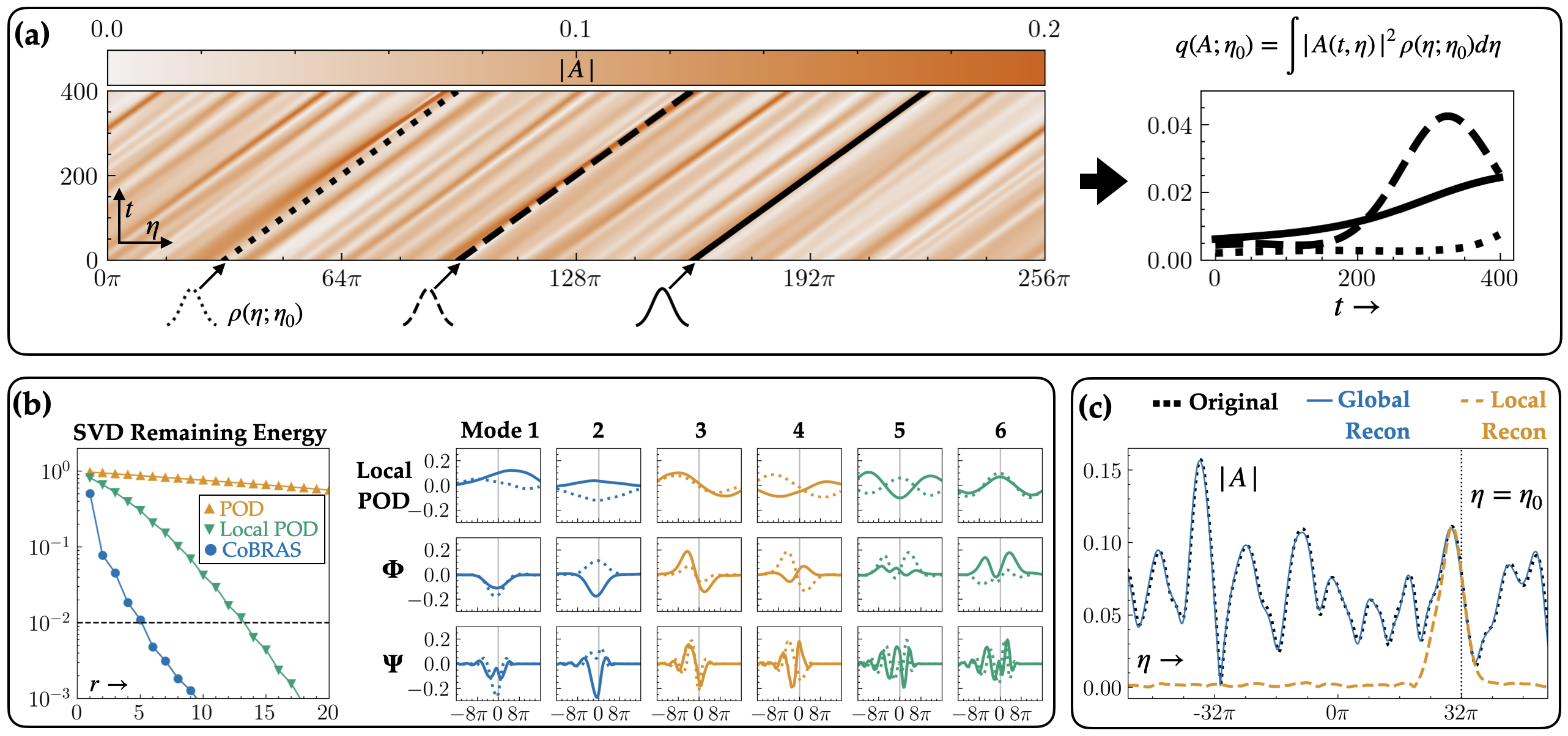}
    \cprotect \caption{\textbf{Modified Nonlinear Schr{\"o}dinger Equation Overview} \textbf{(a)} \textit{Left}: Amplitude, $|\mnls|$, of a solution the MNLS equation from $t \in [0,400]$. Black lines indicate center of Gaussian $\mnlsGauss(\mnlsSpatial; \mnlsSpatial_0)$ 
    when calculating the QoI 
    $q(\mnls; \mnlsSpatial_0)$
    at three different example values of $\mnlsSpatial_0$. \textit{Right}: The corresponding localized QoIs
    $q(\mnls; \mnlsSpatial_0)$.
    \textbf{(b)} \textit{Left}: The remaining energy fraction for the  POD, localized POD, and CoBRAS singular values; dashed line indicates a 1\% threshold. \textit{Middle}: The first six normalized modes obtained by localized POD and CoBRAS. Solid and dotted lines denote the real and imaginary components, respectively. Colors correspond to pairs of modes that differ approximately by a factor of $i$. \textbf{(c)} Example of local and global reconstruction of a snapshot using 8 modes. Vertical dashed line corresponds to the center of the local reconstruction. 
    }
     \label{fig:mnls_1}   
\end{figure*}

Finally, Figure \ref{fig:fhn}(d) shows the effect of the SVM-based control law. The trajectory starts very close to the boundary, and without control will escape the attractor through the ``channel'' structure previously characterized in the literature~\cite{ansmann2013a, karnatak2014a}. 
However, as soon as the trajectory crosses the decision boundary, the controller nudges the state slightly back into the nominal region where it remains until the next visit to the boundary. This demonstrates that {CoBRAS provides a natural framework for identifying critical mechanisms in networked dynamical systems and designing targeted interventions to mitigate extreme events.}

\subsection{CoBRAS for Localized Phenomena}

While extreme events can affect the entire state of a dynamical system, many important extreme phenomena are localized, such as hurricanes in our atmosphere or coronal mass ejections from the sun. 
The globally defined projections from CoBRAS we have discussed so far will struggle to capture localized formation. We address this limitation by developing a local version of CoBRAS as outlined in Section \ref{sec:methods_local_cobras}, and we demonstrate it on the modified nonlinear Schr{\"o}dinger (MNLS) equation.

The MNLS equation is a complex-valued, 1D wave equation and has been used to model the formation of extreme rogue waves, both in the ocean~\cite{dysthe2008a} and optical systems~\cite{solli2007optical}.
Following the treatment by Cousins \& Sapsis in~\cite{cousins2016a}, we examine the form of MNLS for modeling the wave envelope of unidirectional deep ocean rogue waves given by
\begin{align}
    \partial_{t}\mnls
    &+ \frac{1}{2}\partial_\mnlsSpatial \mnls
    + \frac{i}{8}\partial^2_{\mnlsSpatial}\mnls 
    - \frac{1}{16}\partial^3_{\mnlsSpatial}\mnls
    \\ \nonumber
    &+ \frac{i}{2}|\mnls|^2 \mnls 
    + \frac{3}{2} |\mnls|^2 \partial_\mnlsSpatial\mnls
    + \frac{1}{4} \mnls^2 \partial_\mnlsSpatial\mnls^*
    + i \mnls \mnlsPotentialGradient(\mnls) 
    =0
\end{align}
where $\mnls(t,\mnlsSpatial)$ is a complex wave envelope, ${\mnlsSpatial \in [0, 256\pi]}$ is a spatial variable, 
$\mnlsPotentialGradient$ is the derivative of the velocity potential with Fourier transform 
${\mathcal F \left(\mnlsPotentialGradient(\mnls)\right)
    = -|k| \mathcal F \left( |\mnls|^2 \right)}$, and $k$ is the wavenumber in the Fourier domain. 

For the unmodified nonlinear Schr{\"o}dinger (NLS) equation, the amplitude growth or decay is governed by a simple relationship based on a critical scale parameter~\cite{cousins2015a, cousins2016a}. However, the additional cubic nonlinearities in MNLS give rise to a  more complicated growth where wave amplitudes can reach more than eight standard deviations above the mean amplitude~\cite{onorato2005modulational,dysthe2008a}.

We are interested in the localized energy content of a frame moving in the flow with the average group velocity, $c$. Explicitly, given an initial frame centered at $\mnlsSpatial_0$, we define the localized energy at time $t$ to be:
\begin{equation}\label{mnlsQOI}
    q(\mnls;\mnlsSpatial_0) = \int |\mnls(t,\mnlsSpatial)|^2 \mnlsGauss(\mnlsSpatial; \mnlsSpatial_0) d\mnlsSpatial
\end{equation}
where $\mnlsGauss(\mnlsSpatial; \mnlsSpatial_0)$
is a normalized Gaussian centered at $\mnlsSpatial_0 - ct$ with width $L=\pi/2$. 

In localized CoBRAS, the state is $\mathbf x(\mnlsSpatial) = [\text{Re } \mnls(\mnlsSpatial), \text{Im } \mnls(\mnlsSpatial)]$, and the localized quantity of interest is $q(\mnls;\mnlsSpatial_0)$ as in Eq.~\ref{mnlsQOI}. The map $\cobrasOutput(\mnls; \mnlsSpatial_0)$ tracks this energy in the moving frame for $T=200$ time units. 
A full description of our simulation and treatment of MNLS can be found in  \ref{si:mnls}. 

\paragraph{Localized Formation Mechanism}
As shown in Figure \ref{fig:mnls_1}(b), the CoBRAS singular values decay rapidly, with only six modes needed to account for 99\% of the SVD energy, whereas the POD singular values decay very slowly. 
This is because the CoBRAS modes are localized projections and are only approximately supported on the region, whereas the POD modes capture the global variance across the entire domain. However, even localized POD---where snapshots are sub-sampled using 1/8th of the domain---only approximates Fourier modes, and needs more than double the number of modes to account for over 99\% of the SVD energy. Figure \ref{fig:mnls_1}(c) demonstrates the effectiveness of the local CoBRAS projections at reconstructing a neighborhood of a point and the near-perfect global reconstruction via the correlation formula from Equation \ref{eq:global_recon}. 

Because the quantity $q$ only depends on a small localized region near $\mnlsSpatial_0$, the CoBRAS projections are also only approximately supported on a local region. Interestingly, the support of the CoBRAS modes is much larger than the Gaussian, $\mnlsGauss$, used to define $q$, indicating that {our method has learned a natural length scale for MNLS events} and seems to be robust across $\mnlsGauss$ widths (see \ref{si:mnls_gauss} for additional experiments). Fitting a Gaussian to  $|\phi_1(\mnlsSpatial)|$ provides $L_G \approx 10$, which is consistent with the length scale with the highest likelihood of triggering events previously found by Cousins \& Sapsis~\cite{cousins2016a}.

Treating the complex amplitude as a 2D real-valued map, we obtain two pairs of functions per projection; in fact, as indicated by the color in Figure \ref{fig:mnls_1}, pairs of consecutive modes differ by a factor of  $\pm i$, e.g. if $\psi_k = a +ib$, then $\psi_{k+1} = -b  + ia$. 
The CoBRAS modes do not have symmetry about the origin; this effect is pronounced in Figure \ref{fig:mnls_1}(c), where local reconstruction is centered at $\mnlsSpatial = 32\pi$, but clearly supported for a much larger region to the left.
Since the QoI is defined in the frame of reference of the average group velocity, this indicates that $\cobrasOutput$ is more sensitive to incoming information from the left---which can be seen visually in Figure \ref{fig:mnls_1}(a) where a fast-traveling wave collides with the center black dashed line. 
The asymmetry is more evidence that {our method identifies causal information} that other local projections, such as Gabor transforms, would miss.

\begin{figure}[tp]
    \centering
    \includegraphics[width=0.47\textwidth]{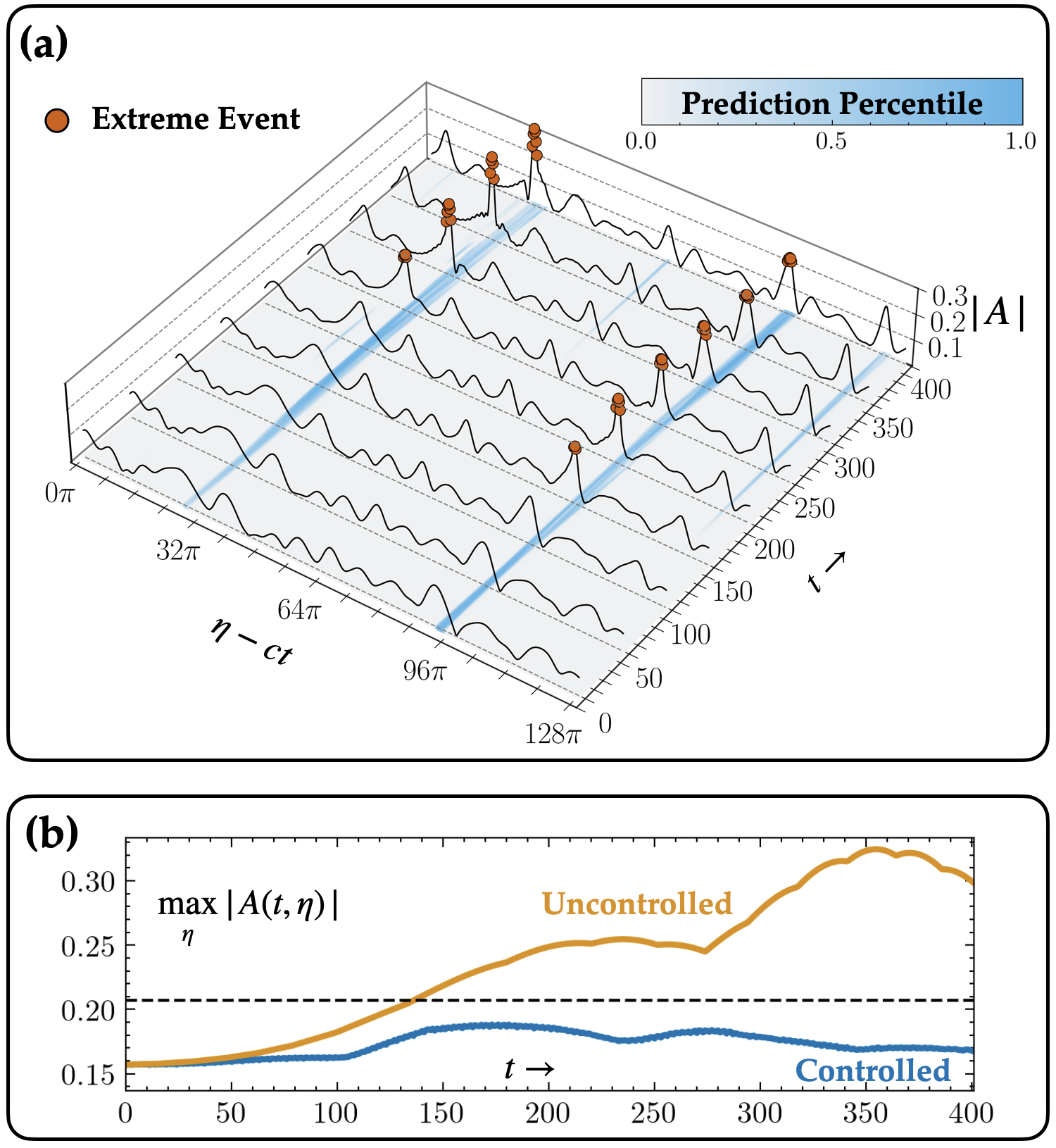}
    \cprotect \caption{\textbf{Modified Nonlinear Schr{\"o}dinger Equation Results} \textbf{(a)} Extreme event predictions obtained from SVM classifier to predict whether an extreme event will occur within 200 time units. The blue streaks indicate the percentile rank of the SVM decision function among positive classifications, estimated from the training set. Red circles indicate the amplitude exceeding the extreme event threshold. 
    \textbf{(b)} The effect of applying feedback control using the first eight CoBRAS modes to suppress extreme wave growth. Dashed line corresponds to the extreme event threshold.
    }
     \label{fig:mnls_2}   
\end{figure}

\paragraph{Predicting Rogue Wave Formation}
We devise a simple strategy to predict rogue wave formation by using the linear CoBRAS projections to train a classifier as described in Section \ref{sec:methods_preds}.
We transform the signal into CoBRAS space $\mathbf{z}(\mnlsSpatial)$ using Equation \ref{eq:local_corr} and predict whether an extreme event will occur within $T=200$ time units in the moving frame $\mnlsSpatial - ct$. 
Training details can be found in \ref{si:mnls}.
Once trained, the SVM predictions act as a \textit{nonlinear filter} on the input signal $\mnls(t, \mnlsSpatial)$, identifying likely regions where an event will form. 
Figure \ref{fig:mnls_2}(a) highlights this predictive capability; the filter clearly identifies that extreme events will form hundreds of time steps in advance. 
The SVM classifier has a true negative rate of $97.7\%$ and a true positive rate of $92.7\%$ on a held-out test set.

\paragraph{Preventing Rogue Waves}
We design a simple controller using the method in Section \ref{sec:methods_suppression} by again transforming the system into CoBRAS space, $\mathbf{z}(\mnlsSpatial)$ using Eq.~\ref{eq:local_corr}, and 
using the SVM as a nonlinear filter to develop a sparse controller. 
To prevent having too much influence on the system, we fix $k_{\text{gain}} = 10^{-2}$. Figure \ref{fig:mnls_2}(b) shows the difference between the maximum amplitude over time for the uncontrolled and controlled system with the same initial condition. 
While the sparse control strategy allows some initial growth, it completely prevents the formation of extreme values. 
Importantly, the actuation is sparse in the spatial domain, using a maximum 3\% of the domain.
This demonstrates that {localized CoBRAS provides an efficient framework for predicting and suppressing spatially localized extreme events.}

\subsection{Data-Driven CoBRAS}
A key drawback of CoBRAS is that it relies on the formation of the gradient sample matrix, $\mathbf{Y} = \frac{1}{\sqrt n_g}[\mathbf g_1 , \mathbf g_2 \dots,  \mathbf{g}_{n_g}]$. While obtaining these gradients is possible with adjoint-based or differentiable simulations, most simulation environments do not provide this capability. Furthermore, it is common to obtain a physical dataset from observations or experiments without a simulation to accompany it. Recent advances in operator learning~\cite{boulle2024mathematical, boulle2024operator} have begun to establish that it is possible to approximate adjoint information using structured models in high- and infinite-dimensional spaces purely from forward solutions. We therefore adopt this strategy and train a surrogate differentiable solver directly from data, using its gradients in place of true adjoint information to obtain the CoBRAS modes.

We focus our attention on the Kolmogorov flow with the same setup as previously presented, but where we only have access to snapshot data. While more sophisticated approaches can be taken, we simply train a Fourier neural operator (FNO)~\cite{li2021fourier} to predict the vorticity snapshot one time unit into the future $\tilde{  \kFlowVort}_{n+1} = \text{FNO}(\kFlowVort_n)$ by minimizing the mean-squared error loss of the prediction over a trajectory,  $\mathcal{L} = 1/n_x\sum_k || \tilde \kFlowVort_k - \kFlowVort_k||^2$. 

Despite only using approximations of the gradient,  we find that the first two CoBRAS modes are nearly identical to when we have true gradient information, as shown in Figure \ref{fig:fno}. 
Just as with true gradient information, the FNO CoBRAS modes also capture the importance of the first and fourth Fourier modes, the dominant energy transfer mechanism for creating extreme events. A more detailed comparison can be found in \ref{si:fno}.
This demonstrates that with an appropriate surrogate model for gradients, {CoBRAS can identify true physical mechanisms directly from data---without any knowledge of the underlying equations, access to a simulator, or explicit adjoint computation}.

\begin{figure}[tp]
    \centering
    \includegraphics[width=0.47\textwidth]{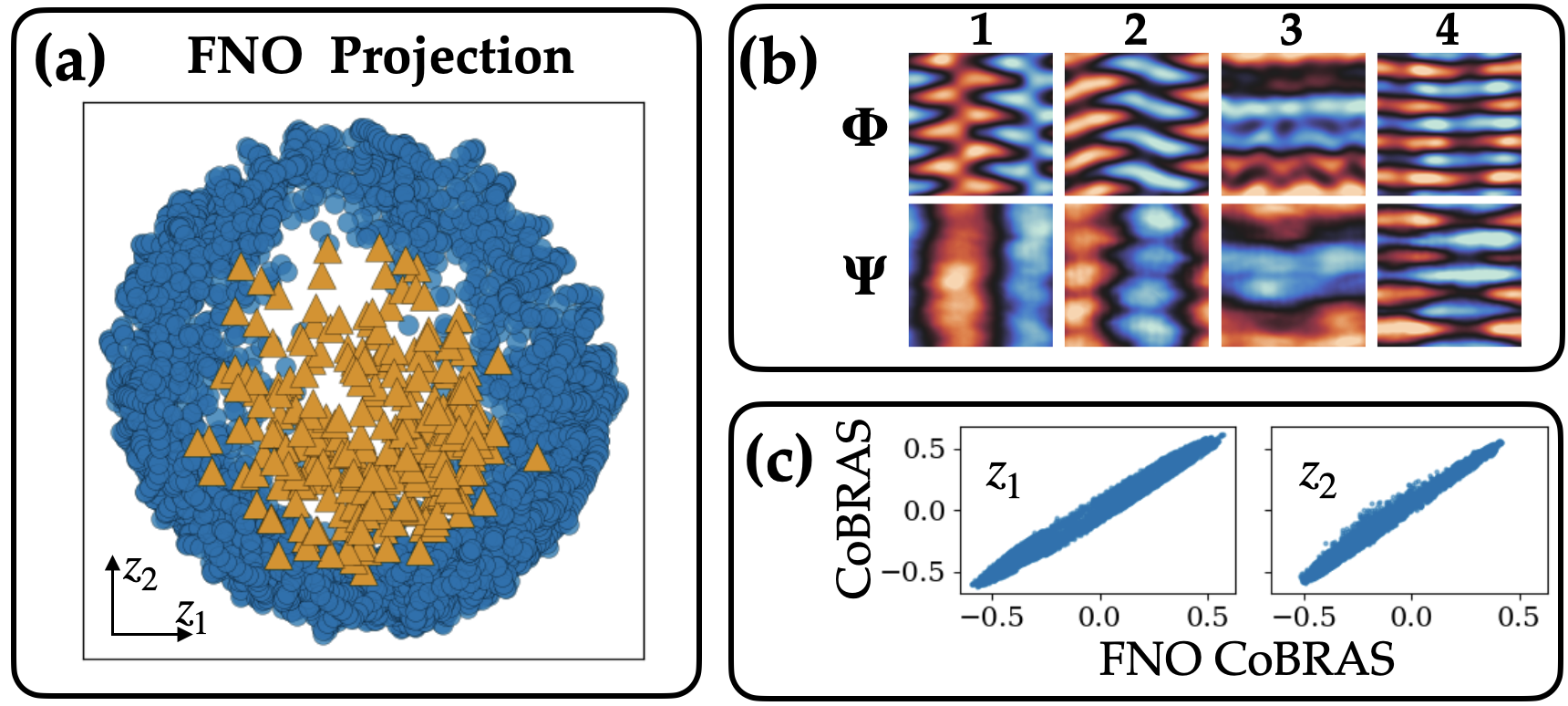}
    \cprotect \caption{\textbf{Non-intrusive CoBRAS} \textbf{(a)} Projection onto the first two dominant CoBRAS $\cobrasPsi$ modes by obtaining gradients using autodiff through a learned FNO surrogate model.
    \textbf{(b)} The first four CoBRAS modes obtained  from the FNO. 
    \textbf{(c)} Comparison of the first two projected coordinates, $\mathbf z = \cobrasPsi^T \kFlowVort$, between CoBRAS modes obtained from a differentiable simulator vs the learned FNO surrogate.
    }
     \label{fig:fno}   
\end{figure}

\section{Discussion}
\label{discussion}
In this work, we introduced a novel method for characterizing and controlling extreme events in dynamical systems by using sensitivity-balanced projections from CoBRAS. Extreme events are sensitive to the geometry of the underlying dynamics; by projecting onto the globally dominant sensitivity spaces, the CoBRAS modes provide a natural, and often interpretable, coordinate system for understanding the geometry and the mechanism governing the extreme event formation. 

We have demonstrated this approach on three very different systems: turbulent bursting in the Kolmogorov flow governed by the Navier-Stokes equations, spontaneous synchronization in FitzHugh-Nagumo oscillator networks, and the localized formation of rogue waves in the ocean modeled by a modified nonlinear Schr{\"o}dinger equation. In each case, the identified modes provided an interpretable decomposition of the extreme event formation, confirming previous findings in the literature, and providing deeper insight into the underlying mechanisms. 
We confirmed these mechanisms by developing simple controllers that completely suppress the formation of an event by minimally actuating the system.

In particular, we demonstrated how applying CoBRAS to dynamics on a network identified the critical role a node plays in event formation. 
This provides evidence that CoBRAS could help identify other critical phenomena in networks, such as identifying stations susceptible to failure in power grids or identifying neuron populations in the brain responsible for particular tasks.  This further provides insight into the appropriate way to influence these systems with control.

By adapting CoBRAS to localized quantities of interest, we developed a mathematical framework for analyzing the sensitivity of localized phenomena and coherent structures. 
The localized modes serve as optimal linear filters, transforming a signal into the most sensitive components by cross-correlating with the CoBRAS modes. 
We have shown that this filter can also be used to predict the formation of localized extreme events by training a simple classifier which acts as a nonlinear filter on the signal and provides advanced warning of event formation. 

There remain rich opportunities for extending the CoBRAS framework developed here. Throughout this work, we have focused on linear representations, as the linear modes directly provide interpretable structures that reveal the underlying mechanisms, such as the energy transfer between modes, critical nodes in networks, and characteristic length scales driving amplitude growth. Kernel CoBRAS~\cite{otto2023model} offers a natural path to better predictive horizons (see \ref{si:background}), though currently at the cost of interpretability. Reconciling predictive power with mechanistic insight remains an open and important direction. The localized framework similarly extends naturally beyond 1D translation-invariant systems (see \ref{si:localized}); since the localized projections take the form of a cross-correlation operator, this provides a natural foundation for extensions to graph convolutions on arbitrary topologies---and with it, the ability to localize extreme event mechanisms in complex networks.

Finally, a practical challenge of CoBRAS is its reliance on a differentiable or adjoint-based simulator, limiting its applicability in settings where only sensor data is available. Motivated by recent advances in operator learning theory~\cite{boulle2024mathematical, boulle2024operator}, we address this by learning a differentiable surrogate and using its gradients to obtain CoBRAS modes, demonstrating that the dominant mechanisms can still be recovered. Improving surrogate gradient quality remains an important open direction. This data-driven pipeline provides a natural path toward extracting physical insight from existing ML surrogates that outperform physical models, such as global weather forecasts, transforming black-box predictions into physical understanding.

\section*{Materials}
All data generation and experiments were performed on a single engineering workstation with twenty-four Intel Core i9-7920X CPUs and a single NVIDIA GeForce RTX 2080 Ti GPU. AI-assisted coding tools were used to help adapt the numerical scheme from \cite{kassam2005fourth} for MNLS and the FNO training to be compatible with JAX. 

\subsection*{Data and Code Availability}
Data and code for all systems are made publicly available and can be found in the open-source repository: \url{https://github.com/nzolman/cobras-extreme}.

\bibliographystyle{unsrt}
 \begin{spacing}{.9}
 \small{
 \setlength{\bibsep}{4.8pt}
 \bibliography{refs}

@article{otto2023model,
  title={Model reduction for nonlinear systems by balanced truncation of state and gradient covariance},
  author={Otto, Samuel E and Padovan, Alberto and Rowley, Clarence W},
  journal={SIAM Journal on Scientific Computing},
  volume={45},
  number={5},
  pages={A2325--A2355},
  year={2023},
  publisher={SIAM}
}

@article{farazmand2019extreme,
  title={Extreme events: Mechanisms and prediction},
  author={Farazmand, Mohammad and Sapsis, Themistoklis P},
  journal={Applied Mechanics Reviews},
  volume={71},
  number={5},
  pages={050801},
  year={2019},
  publisher={American Society of Mechanical Engineers}
}

@article{dysthe2008a,
  author = {Dysthe, K. and Krogstad, H.E. and M{\"u}ller, P.},
  year ={2008},
  title = {Oceanic Rogue Waves},
  volume = {40},
  pages = {287–310},
  language = {lb},
  journal = {Annu. Rev. Fluid Mech},
  number = {1}
}

@article{altwegg2017a,
  author = {Altwegg, R. and Visser, V. and Bailey, L.D. and Erni, B.},
  year ={2017},
  title = {Learning From Single Extreme Events},
  volume = {372},
  pages = {20160141},
  language = {en},
  journal = {Philos. Trans. R. Soc. B},
  number = {1723}
}

@article{ansmann2013a,
  author = {Ansmann, G. and Karnatak, R. and Lehnertz, K. and Feudel, U.},
  year ={2013},
  title = {Extreme Events in Excitable Systems and Mechanisms of Their Generation},
  volume = {88},
  pages = {052911},
  language = {en},
  journal = {Phys. Rev. E},
  number = {5}
}

@article{karnatak2014a,
  author = {Karnatak, R. and Ansmann, G. and Feudel, U. and Lehnertz, K.},
  year ={2014},
  title = {Route to Extreme Events in Excitable Systems},
  volume = {90},
  pages = {022917},
  language = {en},
  journal = {Phys. Rev. E},
  number = {2}
}

@article{latif2009a,
  author = {Latif, M. and Keenlyside, N.S.},
  year = {2009},
  title = {El Ni~no/Southern Oscillation Response to Global Warming},
  volume = {106},
  pages = {20578–20583},
  language = {es},
  journal = {Proc. Natl. Acad. Sci},
  number = {49}
}

@book{dijkstra2013a,
  author = {Dijkstra, H.A.},
  year = {2013},
  title = {Nonlinear Climate Dynamics},
  publisher = {Cambridge University Press},
  language = {hu},
  address = {Cambridge, UK}
}

@article{roberts2016a,
  author = {Roberts, A. and Guckenheimer, J. and Widiasih, E. and Timmermann, A. and Jones, C.K.R.T.},
  year = {2016},
  title = {Mixed-Mode Oscillations of El Ni~no–Southern Oscillation},
  volume = {73},
  pages = {1755–1766},
  language = {fil},
  journal = {J. Atmos. Sci},
  number = {4}
}

@article{haller2010a,
  author = {Haller, G. and Sapsis, T.},
  year ={2010},
  title = {Localized Instability and Attraction Along Invariant Manifolds},
  volume = {9},
  pages = {611–633},
  language = {en},
  journal = {SIAM J. Appl. Dyn. Syst},
  number = {2}
}

@article{elezgaray1992a,
  author = {Elezgaray, J. and Arneodo, A.},
  year = {1992},
  title = {Crisis-Induced Intermittent Bursting in Reaction-Diffusion Chemical Systems},
  volume = {68},
  pages = {714},
  language = {en},
  journal = {Phys. Rev. Lett},
  number = {5}
}

@article{farazmand2016a,
  author = {Farazmand, M. and Sapsis, T.P.},
  year ={2016},
  title = {Dynamical Indicators for the Prediction of Bursting Phenomena in High-Dimensional Systems},
  volume = {94},
  pages = {032212},
  language = {en},
  journal = {Phys. Rev. E},
  number = {3–1}
}

@article{haller1995a,
  author = {Haller, G. and Wiggins, S.},
  year ={1995},
  title = {Multi-Pulse Jumping Orbits and Homoclinic Trees in a Modal Truncation of the Damped-Forced Nonlinear Schr{\"o}dinger Equation},
  volume = {85},
  pages = {311–347},
  language = {en},
  journal = {Phys. D: Nonlinear Phenom},
  number = {3}
}

@article{farazmand2016b,
  author = {Farazmand, M.},
  year ={2016},
  title = {An Adjoint-Based Approach for Finding Invariant Solutions of Navier–Stokes Equations},
  volume = {795},
  pages = {278–312},
  language = {en},
  journal = {J. Fluid Mech}
}

@article{farazmand2017a,
  author = {Farazmand, M. and Sapsis, T.P.},
  year ={2017},
  title = {A Variational Approach to Probing Extreme Events in Turbulent Dynamical Systems},
  volume = {3},
  pages = {1701533},
  language = {en},
  journal = {Sci. Adv},
  number = {9}
}

@article{platt1991a,
  author = {Platt, N. and Sirovich, L. and Fitzmaurice, N.},
  year ={1991},
  title = {An Investigation of Chaotic Kolmogorov Flows},
  volume = {3},
  pages = {681–696},
  language = {en},
  journal = {Phys. Fluids A},
  number = {4}
}

@article{cousins2015a,
  author = {Cousins, W. and Sapsis, T.P.},
  year ={2015},
  title = {Unsteady Evolution of Localized Unidirectional Deep-Water Wave Groups},
  volume = {91},
  pages = {063204},
  language = {en},
  journal = {Phys. Rev. E},
  number = {6}
}

@article{cousins2016a,
  author = {Cousins, W. and Sapsis, T.P.},
  year ={2016},
  title = {Reduced-Order Precursors of Rare Events in Unidirectional NonlinearWaterWaves},
  volume = {790},
  pages = {368–388},
  language = {en},
  journal = {J. Fluid Mech},
  number = {3}
}

@article{qi2020using,
  title={Using machine learning to predict extreme events in complex systems},
  author={Qi, Di and Majda, Andrew J},
  journal={Proceedings of the National Academy of Sciences},
  volume={117},
  number={1},
  pages={52--59},
  year={2020},
  publisher={National Academy of Sciences}
}

@incollection{lehnertz2006epilepsy,
  title={Epilepsy: Extreme events in the human brain},
  author={Lehnertz, Klaus},
  booktitle={Extreme events in nature and society},
  pages={123--143},
  year={2006},
  publisher={Springer}
}

@article{bobra2016predicting,
  title={Predicting coronal mass ejections using machine learning methods},
  author={Bobra, Monica G and Ilonidis, Stathis},
  journal={The Astrophysical Journal},
  volume={821},
  number={2},
  pages={127},
  year={2016},
  publisher={IOP Publishing}
}

@article{ummenhofer2017extreme,
  title={Extreme weather and climate events with ecological relevance: a review},
  author={Ummenhofer, Caroline C and Meehl, Gerald A},
  journal={Philosophical Transactions of the Royal Society B: Biological Sciences},
  volume={372},
  number={1723},
  pages={20160135},
  year={2017},
  publisher={The Royal Society}
}

@article{sapsis2021statistics,
  title={Statistics of extreme events in fluid flows and waves},
  author={Sapsis, Themistoklis P},
  journal={Annual Review of Fluid Mechanics},
  volume={53},
  number={1},
  pages={85--111},
  year={2021},
  publisher={Annual Reviews}
}

@article{chang2025extreme,
  title={Extreme Event Aware ($\eta $-) Learning},
  author={Chang, Kai and Sapsis, Themistoklis P},
  journal={arXiv preprint arXiv:2510.19161},
  year={2025}
}

@article{pickering2022discovering,
  title={Discovering and forecasting extreme events via active learning in neural operators},
  author={Pickering, Ethan and Guth, Stephen and Karniadakis, George Em and Sapsis, Themistoklis P},
  journal={Nature Computational Science},
  volume={2},
  number={12},
  pages={823--833},
  year={2022},
  publisher={Nature Publishing Group US New York}
}

@article{rudy2023output,
  title={Output-weighted and relative entropy loss functions for deep learning precursors of extreme events},
  author={Rudy, Samuel H and Sapsis, Themistoklis P},
  journal={Physica D: Nonlinear Phenomena},
  volume={443},
  pages={133570},
  year={2023},
  publisher={Elsevier}
}

@article{jia2023bioelectrical,
  title={A bioelectrical phase transition patterns the first vertebrate heartbeats},
  author={Jia, Bill Z and Qi, Yitong and Wong-Campos, J David and Megason, Sean G and Cohen, Adam E},
  journal={Nature},
  volume={622},
  number={7981},
  pages={149--155},
  year={2023},
  publisher={Nature Publishing Group UK London}
}

@article{vinoth2025extreme,
  title={Extreme events in gene regulatory networks with time-delays},
  author={Vinoth, S and Kingston, S Leo and Srinivasan, Sabarathinam and Kumarasamy, Suresh and Kapitaniak, Tomasz},
  journal={Scientific Reports},
  volume={15},
  number={1},
  pages={13064},
  year={2025},
  publisher={Nature Publishing Group UK London}
}

@article{fukami2023grasping,
  title={Grasping extreme aerodynamics on a low-dimensional manifold},
  author={Fukami, Kai and Taira, Kunihiko},
  journal={Nature Communications},
  volume={14},
  number={1},
  pages={6480},
  year={2023},
  publisher={Nature Publishing Group UK London}
}

@article{fukami2025extreme,
  title={Extreme vortex-gust airfoil interactions at Reynolds number 5000},
  author={Fukami, Kai and Smith, Luke and Taira, Kunihiko},
  journal={Physical Review Fluids},
  volume={10},
  number={8},
  pages={084703},
  year={2025},
  publisher={APS}
}

@book{coles2001introduction,
  title={An introduction to statistical modeling of extreme values},
  author={Coles, Stuart and Bawa, Joanna and Trenner, Lesley and Dorazio, Pat},
  volume={208},
  year={2001},
  publisher={Springer}
}

@book{dembo2009large,
  title={Large deviations techniques and applications},
  author={Dembo, Amir},
  year={2009},
  publisher={Springer}
}

@article{ragone2020computation,
  title={Computation of extreme values of time averaged observables in climate models with large deviation techniques},
  author={Ragone, Francesco and Bouchet, Freddy},
  journal={Journal of Statistical Physics},
  volume={179},
  number={5},
  pages={1637--1665},
  year={2020},
  publisher={Springer}
}

@article{farazmand2019closed,
  title={Closed-loop adaptive control of extreme events in a turbulent flow},
  author={Farazmand, Mohammad and Sapsis, Themistoklis P},
  journal={Physical Review E},
  volume={100},
  number={3},
  pages={033110},
  year={2019},
  publisher={APS}
}

@article{kochkov2021machine,
  title={Machine learning--accelerated computational fluid dynamics},
  author={Kochkov, Dmitrii and Smith, Jamie A and Alieva, Ayya and Wang, Qing and Brenner, Michael P and Hoyer, Stephan},
  journal={Proceedings of the National Academy of Sciences},
  volume={118},
  number={21},
  pages={e2101784118},
  year={2021},
  publisher={National Academy of Sciences}
}

@article{sweby1984high,
  title={High resolution schemes using flux limiters for hyperbolic conservation laws},
  author={Sweby, Peter K},
  journal={SIAM journal on numerical analysis},
  volume={21},
  number={5},
  pages={995--1011},
  year={1984},
  publisher={SIAM}
}

@article{kassam2005fourth,
  title={Fourth-order time-stepping for stiff PDEs},
  author={Kassam, Aly-Khan and Trefethen, Lloyd N},
  journal={SIAM Journal on Scientific Computing},
  volume={26},
  number={4},
  pages={1214--1233},
  year={2005},
  publisher={SIAM}
}

@article{obukhov1983kolmogorov,
  title={Kolmogorov flow and laboratory simulation of it},
  author={Obukhov, AM},
  journal={Russ. Math. Surv},
  volume={38},
  number={4},
  pages={113--126},
  year={1983}
}

@article{platt1991investigation,
  title={An investigation of chaotic Kolmogorov flows},
  author={Platt, Nathan and Sirovich, Lawrence and Fitzmaurice, Nessan},
  journal={Physics of Fluids A: Fluid Dynamics},
  volume={3},
  number={4},
  pages={681--696},
  year={1991},
  publisher={American Institute of Physics}
}

@article{borue1996numerical,
  title={Numerical study of three-dimensional Kolmogorov flow at high Reynolds numbers},
  author={Borue, Vadim and Orszag, Steven A},
  journal={Journal of Fluid Mechanics},
  volume={306},
  pages={293--323},
  year={1996},
  publisher={Cambridge University Press}
}

@article{qin2025clean,
  title={Clean numerical simulation of three-dimensional turbulent Kolmogorov flow},
  author={Qin, Shijie and Liao, Shijun},
  journal={Physics of Fluids},
  volume={37},
  number={10},
  year={2025},
  publisher={AIP Publishing}
}

@book{constantine2015active,
  title={Active subspaces: Emerging ideas for dimension reduction in parameter studies},
  author={Constantine, Paul G},
  year={2015},
  publisher={SIAM}
}

@article{zahm2020gradient,
  title={Gradient-based dimension reduction of multivariate vector-valued functions},
  author={Zahm, Olivier and Constantine, Paul G and Prieur, Cl{\'e}mentine and Marzouk, Youssef M},
  journal={SIAM Journal on Scientific Computing},
  volume={42},
  number={1},
  pages={A534--A558},
  year={2020},
  publisher={SIAM}
}

@article{moore1981bt,
  author={Moore, B.},
  journal={IEEE Transactions on Automatic Control}, 
  title={Principal component analysis in linear systems: Controllability, observability, and model reduction}, 
  year={1981},
  volume={26},
  number={1},
  pages={17-32},
  doi={10.1109/TAC.1981.1102568}}

@article{rowley2005model,
  title={Model reduction for fluids, using balanced proper orthogonal decomposition},
  author={Rowley, Clarence W},
  journal={International Journal of Bifurcation and Chaos},
  volume={15},
  number={03},
  pages={997--1013},
  year={2005},
  publisher={World Scientific}
}

@article{onorato2005modulational,
  title={Modulational instability and non-Gaussian statistics in experimental random water-wave trains},
  author={Onorato, Miguel and Osborne, Alfred Richard and Serio, M and Cavaleri, L},
  journal={Physics of Fluids},
  volume={17},
  number={7},
  year={2005},
  publisher={AIP Publishing}
}

@article{solli2007optical,
  title={Optical rogue waves},
  author={Solli, Daniel R and Ropers, Claus and Koonath, Prakash and Jalali, Bahram},
  journal={nature},
  volume={450},
  number={7172},
  pages={1054--1057},
  year={2007},
  publisher={Nature Publishing Group UK London}
}

@article{cebrian2024six,
  title={Six decades of the FitzHugh--Nagumo model: A guide through its spatio-temporal dynamics and influence across disciplines},
  author={Cebri{\'a}n-Lacasa, D. and Parra-Rivas, P. and Ruiz-Reyn{\'e}s, D. and Gelens, L.},
  journal={Physics Reports},
  volume={1096},
  pages={1--39},
  year={2024},
  publisher={Elsevier}
}

@article{romeira2016regenerative,
  title={Regenerative memory in time-delayed neuromorphic photonic resonators},
  author={Romeira, Bruno and Av{\'o}, Ricardo and Figueiredo, Jos{\'e} ML and Barland, St{\'e}phane and Javaloyes, Julien},
  journal={Scientific reports},
  volume={6},
  number={1},
  pages={19510},
  year={2016},
  publisher={Nature Publishing Group UK London}
}

@inproceedings{
li2021fourier,
title={Fourier Neural Operator for Parametric Partial Differential Equations},
author={Z. Li and N. Borislavov Kovachki and K. Azizzadenesheli and B. Liu and K. Bhattacharya and A. Stuart and A. Anandkumar},
booktitle={ICLR},
year={2021},
url={https://openreview.net/forum?id=c8P9NQVtmnO}
}

@article{katsidoniotaki2026dynamics,
  title={Dynamics-Informed Deep Learning for Predicting Extreme Events},
  author={Katsidoniotaki, Eirini and Sapsis, Themistoklis P},
  journal={arXiv preprint arXiv:2603.10777},
  year={2026}
}

@article{babaee2016minimization,
  title={A minimization principle for the description of modes associated with finite-time instabilities},
  author={Babaee, H and Sapsis, TP},
  journal={Proceedings of the Royal Society A: Mathematical, Physical and Engineering Sciences},
  volume={472},
  number={2186},
  year={2016},
  publisher={The Royal Society}
}

@article{babaee2017reduced,
  title={Reduced-order description of transient instabilities and computation of finite-time Lyapunov exponents},
  author={Babaee, H. and Farazmand, M. and Haller, G. and Sapsis, T. P.},
  journal={Chaos},
  volume={27},
  number={6},
  year={2017},
  publisher={AIP Publishing}
}

@book{lorenz1956empirical,
  title={Empirical orthogonal functions and statistical weather prediction},
  author={Lorenz, Edward N and others},
  volume={1},
  year={1956},
  publisher={Massachusetts Institute of Technology, Department of Meteorology Cambridge}
}

@article{guth2019machine,
  title={Machine learning predictors of extreme events occurring in complex dynamical systems},
  author={Guth, Stephen and Sapsis, Themistoklis P},
  journal={Entropy},
  volume={21},
  number={10},
  pages={925},
  year={2019},
  publisher={MDPI}
}

@article{schmidt2019conditional,
  title={A conditional space--time POD formalism for intermittent and rare events: example of acoustic bursts in turbulent jets},
  author={Schmidt, Oliver T and Schmid, Peter J},
  journal={Journal of Fluid Mechanics},
  volume={867},
  pages={R2},
  year={2019},
  publisher={Cambridge University Press}
}

@article{schmidt2026data,
  title={Data-driven forecasting of high-dimensional transient and stationary processes via space--time projection},
  author={Schmidt, Oliver T},
  journal={Proceedings of the Royal Society A: Mathematical, Physical and Engineering Sciences},
  volume={482},
  number={2329},
  year={2026},
  publisher={The Royal Society}
}

@article{cliver2022extreme,
  title={Extreme solar events},
  author={Cliver, Edward W and Schrijver, Carolus J and Shibata, Kazunari and Usoskin, Ilya G},
  journal={Living Reviews in Solar Physics},
  volume={19},
  number={1},
  pages={2},
  year={2022},
  publisher={Springer}
}

@article{berkooz1993proper,
  title={The proper orthogonal decomposition in the analysis of turbulent flows},
  author={Berkooz, Gal and Holmes, Philip and Lumley, John L},
  journal={Ann. Rev. Fluid Mech.},
  volume={25},
  number={1},
  pages={539--575},
  year={1993},
  publisher={Annual Reviews 4139 El Camino Way, PO Box 10139, Palo Alto, CA 94303-0139, USA}
}

@article{suykens1999chaos,
  title={Chaos control using least-squares support vector machines},
  author={Suykens, Johan AK and Vandewalle, Joos},
  journal={International journal of circuit theory and applications},
  volume={27},
  number={6},
  pages={605--615},
  year={1999},
  publisher={Wiley Online Library}
}

@article{gerster2020fitzhugh,
  title={FitzHugh--Nagumo oscillators on complex networks mimic epileptic-seizure-related synchronization phenomena},
  author={Gerster, M. and Berner, R. and Sawicki, J. and Zakharova, A. and {\v{S}}koch, A. and Hlinka, J. and Lehnertz, K. and Sch{\"o}ll, E.},
  journal={Chaos},
  volume={30},
  number={12},
  year={2020},
  publisher={AIP Publishing}
}

@article{nash2004electromechanical,
  title={Electromechanical model of excitable tissue to study reentrant cardiac arrhythmias},
  author={Nash, M. P. and Panfilov, A. V.},
  journal={Progress in biophysics and molecular biology},
  volume={85},
  number={2-3},
  pages={501--522},
  year={2004},
  publisher={Elsevier}
}

@article{scialla2021hubs,
  title={Hubs, diversity, and synchronization in FitzHugh-Nagumo oscillator networks: Resonance effects and biophysical implications},
  author={Scialla, Stefano and Loppini, Alessandro and Patriarca, Marco and Heinsalu, Els},
  journal={Physical Review E},
  volume={103},
  number={5},
  pages={052211},
  year={2021},
  publisher={APS}
}

@article{lin2004resonance,
  title={Resonance tongues and patterns in periodically forced reaction-diffusion systems},
  author={Lin, Anna L and Hagberg, Aric and Meron, Ehud and Swinney, Harry L},
  journal={Physical Review E—Statistical, Nonlinear, and Soft Matter Physics},
  volume={69},
  number={6},
  pages={066217},
  year={2004},
  publisher={APS}
}

@article{boulle2024operator,
  title={Operator learning without the adjoint},
  author={Boull{\'e}, Nicolas and Halikias, Diana and Otto, Samuel E and Townsend, Alex},
  journal={Journal of Machine Learning Research},
  volume={25},
  number={364},
  pages={1--54},
  year={2024}
}

@incollection{boulle2024mathematical,
  title={A mathematical guide to operator learning},
  author={Boull{\'e}, Nicolas and Townsend, Alex},
  booktitle={Handbook of Numerical Analysis},
  volume={25},
  pages={83--125},
  year={2024},
  publisher={Elsevier}
}

@article{butler1992three,
  title={Three-dimensional optimal perturbations in viscous shear flow},
  author={Butler, Kathryn M and Farrell, Brian F},
  journal={Physics of Fluids A: Fluid Dynamics},
  volume={4},
  number={8},
  pages={1637--1650},
  year={1992},
  publisher={American Institute of Physics}
}

@article{thompson1998initial,
  title={Initial conditions for optimal growth in a coupled ocean--atmosphere model of ENSO},
  author={Thompson, CJ},
  journal={Journal of the atmospheric sciences},
  volume={55},
  number={4},
  pages={537--557},
  year={1998}
}

@article{herrmann2021data,
  title={Data-driven resolvent analysis},
  author={Herrmann, Benjamin and Baddoo, Peter J and Semaan, Richard and Brunton, Steven L and McKeon, Beverley J},
  journal={Journal of Fluid Mechanics},
  volume={918},
  pages={A10},
  year={2021},
  publisher={Cambridge University Press}
}

@article{ilak2008modeling,
  title={Modeling of transitional channel flow using balanced proper orthogonal decomposition},
  author={Ilak, Milo{\v{s}} and Rowley, Clarence W},
  journal={Physics of Fluids},
  volume={20},
  number={3},
  year={2008},
  publisher={AIP Publishing}
}

@article{otto2022optimizing,
  title={Optimizing oblique projections for nonlinear systems using trajectories},
  author={Otto, Samuel E and Padovan, Alberto and Rowley, Clarence W},
  journal={SIAM Journal on Scientific Computing},
  volume={44},
  number={3},
  pages={A1681--A1702},
  year={2022},
  publisher={SIAM}
}

@article{mckeon2010critical,
  title={A critical-layer framework for turbulent pipe flow},
  author={McKeon, Beverley J and Sharma, Ati S},
  journal={Journal of Fluid Mechanics},
  volume={658},
  pages={336--382},
  year={2010},
  publisher={Cambridge University Press}
}

@article{luhar2014opposition,
  title={Opposition control within the resolvent analysis framework},
  author={Luhar, Mitul and Sharma, Ati S and McKeon, Beverley J},
  journal={Journal of Fluid Mechanics},
  volume={749},
  pages={597--626},
  year={2014},
  publisher={Cambridge University Press}
}

@article{benner2018mathcalh_2,
  title={$\mathcal{H}_2$-quasi-optimal model order reduction for quadratic-bilinear control systems},
  author={Benner, P. and Goyal, P. and Gugercin, S.},
  journal={SIAM J. Matrix Anal. \& App.},
  volume={39},
  number={2},
  pages={983--1032},
  year={2018},
  publisher={SIAM}
}

@article{benner2024balanced,
  title={Balanced truncation for quadratic-bilinear control systems},
  author={Benner, Peter and Goyal, Pawan},
  journal={Advances in Computational Mathematics},
  volume={50},
  number={4},
  pages={88},
  year={2024},
  publisher={Springer}
}

@article{towne2018spectral,
  title={Spectral proper orthogonal decomposition and its relationship to dynamic mode decomposition and resolvent analysis},
  author={Towne, Aaron and Schmidt, Oliver T and Colonius, Tim},
  journal={Journal of Fluid Mechanics},
  volume={847},
  pages={821--867},
  year={2018},
  publisher={Cambridge University Press}
}

@software{jax2018github,
  author = {J. Bradbury and R. Frostig and P. Hawkins and M. J. Johnson and Y. Katariya and C. Leary and D. Maclaurin and G. Necula and A. Paszke and J. Vander{P}las and S. Wanderman-{M}ilne and Q. Zhang},
  title = {{JAX}: composable transformations of {P}ython+{N}um{P}y programs},
  url = {http://github.com/jax-ml/jax},
  version = {0.3.13},
  year = {2018},
}

@article{pedregosa2011scikit,
  title={Scikit-learn: Machine learning in Python},
  author={Pedregosa, Fabian and Varoquaux, Ga{\"e}l and Gramfort, Alexandre and Michel, Vincent and Thirion, Bertrand and Grisel, Olivier and Blondel, Mathieu and Prettenhofer, Peter and Weiss, Ron and Dubourg, Vincent and others},
  journal={the Journal of machine Learning research},
  volume={12},
  pages={2825--2830},
  year={2011},
  publisher={JMLR. org}
}

@article{chandler2013invariant,
  title={Invariant recurrent solutions embedded in a turbulent two-dimensional Kolmogorov flow},
  author={Chandler, Gary J and Kerswell, Rich R},
  journal={Journal of Fluid Mechanics},
  volume={722},
  pages={554--595},
  year={2013},
  publisher={Cambridge University Press}
}

@book{holmes2012turbulence,
  title={Turbulence, coherent structures, dynamical systems and symmetry},
  author={Holmes, Philip},
  year={2012},
  publisher={Cambridge university press}
}

@phdthesis{kidger2021on,
    title={{O}n {N}eural {D}ifferential {E}quations},
    author={Patrick Kidger},
    year={2021},
    school={University of Oxford},
}

@article{scherpen1993balancing,
  title={Balancing for nonlinear systems},
  author={Scherpen, Jacquelien MA},
  journal={Systems \& Control Letters},
  volume={21},
  number={2},
  pages={143--153},
  year={1993},
  publisher={Elsevier}
}

@article{fujimoto2008computation,
  title={Computation of nonlinear balanced realization and model reduction based on Taylor series expansion},
  author={Fujimoto, Kenji and Tsubakino, Daisuke},
  journal={Systems \& Control Letters},
  volume={57},
  number={4},
  pages={283--289},
  year={2008},
  publisher={Elsevier}
}

@article{guibas2021adaptive,
  title={Adaptive fourier neural operators: Efficient token mixers for transformers},
  author={Guibas, John and Mardani, Morteza and Li, Zongyi and Tao, Andrew and Anandkumar, Anima and Catanzaro, Bryan},
  journal={arXiv preprint arXiv:2111.13587},
  year={2021}
}

@article{Willcox2002aiaaj,
	author = {K. Willcox AND J. Peraire},
	journal = {AIAA Journal},
	number = {11},
	pages = {2323--2330},
	title = {Balanced Model Reduction via the Proper Orthogonal Decomposition},
	volume = {40},
	year = {2002}}

@book{Brunton2022book,
	author = {S. L. Brunton and J. N. Kutz},
	edition = {2nd},
	publisher = {Cambridge University Press},
	title = {Data-Driven Science and Engineering: Machine Learning, Dynamical Systems, and Control},
	year = {2022}}

@article{Sirovich:1987,
	author = {L. Sirovich},
	journal = {Q. Appl. Math.},
	number = {3},
	pages = {561--590},
	title = {Turbulence and the dynamics of coherent structures, parts {I-III}},
	volume = {XLV},
	year = {1987}}
 }
 \end{spacing}

\section*{Acknowledgments}
The authors acknowledge support from the National Science Foundation AI Institute in Dynamic Systems
grant number 2112085 and The Boeing Company (NZ, SM SLB).

\section*{Author Contributions}
NZ designed, performed research, and analyzed results;
all authors were involved in discussions to interpret results;
SM wrote the differentiable simulator for the Kolmogorov flow; 
NZ wrote the FHN and MNLS simulators and implemented the methods on all of the different simulation environments;
SEO helped design and formulate the methods;
SLB received funds to support this work;
NZ wrote the paper, and all authors helped to review and edit. 

\section*{Competing Interests}
There are no competing interests.

\clearpage
\onecolumn
\setcounter{section}{0}\setcounter{figure}{0}\setcounter{equation}{0}\setcounter{table}{0}
\renewcommand{\thesection}{S-\arabic{section}}
\renewcommand{\thesubsection}{S-\arabic{section}\Alph{subsection}}
\renewcommand{\theequation}{S-\arabic{equation}}
\renewcommand{\figurename}{Supplementary Figure}
\renewcommand{\tablename}{Supplementary Table}
\section*{Supplementary Information}
\startcontents[supp]
\printcontents[supp]{l}{1}{\setcounter{tocdepth}{2}}
\clearpage

\setcounter{section}{0}\setcounter{figure}{0}\setcounter{equation}{0}\setcounter{table}{0}
\renewcommand{\thesection}{S-\arabic{section}}
\renewcommand{\thesubsection}{S-\arabic{section}\Alph{subsection}}

\renewcommand{\theHsection}{SI.\arabic{section}}
\renewcommand{\theHsubsection}{SI.\arabic{section}.\arabic{subsection}}
\renewcommand{\theHfigure}{SI.\arabic{figure}}
\renewcommand{\theHequation}{SI.\arabic{equation}}
\renewcommand{\theHtable}{SI.\arabic{table}}

\section{Extended Background}
\label{si:background}

\subsection{Kernel CoBRAS}
\label{si:cobras_kernel}
As discussed in the main text, we construct the CoBRAS projections using the SVD of the inner product matrix $\mathbf Y^T \mathbf X$. Because this only relies on inner products, CoBRAS naturally extends to nonlinear projections by defining the action of $\cobrasPsi$ in a reproducing kernel Hilbert space (RKHS) and is discussed rigorously in \cite{otto2023model}. For a differentiable kernel, $K(x,y)$, the computation only relies on evaluating the kernel gradients $\nabla K_y(x) = [\partial^{(e_1,0)}K(x,y), \dots \partial^{(e_n,0)}K(x,y)]$ (i.e. for fixed $y$, the gradient of the function $K(x,y) = K_y(x)$) and the inverse of the derivative Gram matrix $\mathbf G(\mathbf x) = [\partial^{(e_i, e_j)}K(x,x)]_{ij}$ (i.e. the symmetric matrix of partial derivatives evaluated on the diagonal $y=x$). Explicitly, the main object of interest is the function: 

\begin{equation}
    C(\mathbf {\tilde {x}},\mathbf g, \mathbf x) 
    = 
    \mathbf{g}^T \mathbf{G}(\mathbf x)^{-1}\nabla K_\mathbf x(\mathbf{\tilde{x}})
\end{equation}

\noindent where $\mathbf g$ is a sampled gradient of the quantity of interest as with linear CoBRAS. For $n_g$ sampled points $\mathbf {\tilde x}_i$ and gradients $\mathbf g_i$ we can construct the vector-valued function by evaluating $C$ at the data:

\begin{align}
        \tilde C(\mathbf x) 
    &= \frac{1}{\sqrt{n_g}} C(\mathbf{\tilde X}, \mathbf Y, \mathbf{x}) 
        = \left[\frac{1}{\sqrt{n_g}}C(\mathbf{\tilde x}_i, \mathbf{g}_i, \mathbf{x})
        \right]_{i=1}^{n_g}
\end{align}

\noindent Evaluating this at (possible different) samples $\mathbf x_i$, we build the balanced inner product matrix $\mathbf{\tilde C}$:

\begin{align}
    \mathbf {\tilde C} 
        &= \frac{1}{\sqrt{n_x n_g}} C(\mathbf{\tilde x}_i, \mathbf{g}_i, \mathbf{x}_j)
        = \mathbf U \mathbf \Sigma \mathbf V^T
\end{align}

\noindent and take its truncated SVD to obtain a final nonlinear embedding formula: 

\begin{equation}
    \mathbf z = h(\mathbf x) = \mathbf{\Sigma}_r^{-1/2} \mathbf{U}_r^T 
    \left(
        \tilde C(\mathbf{x}) - \tilde C(\mathbf{0}) 
    \right)
\end{equation}

For generic kernel, $K$, computing $\mathbf G(\mathbf x)^{-1}$ requires inverting the matrix at each point $\mathbf x$ which can be very expensive. However, for many kernels---such as linear, polynomial, and radial basis functions (RBFs)---both $\nabla K_y(x)$ and $\mathbf G(\mathbf x)^{-1}$ can be constructed analytically, bypassing the need to evaluate the derivative. In particular, for an RBF with length scale $\sigma$, $\mathbf G(\mathbf x)^{-1} = \sigma^2 \mathbf{I}$.

\section{Localized CoBRAS}
\label{si:localized}
We extend CoBRAS to localized phenomena by defining a reference neighborhood (such as the origin of a translationally invariant system), mapping snapshots and gradient samples into this neighborhood, and then building valid CoBRAS projections in this space. CoBRAS can then be applied to any reference point simply by mapping a new region of interest into this space. However, as we show below, we find that the CoBRAS $\cobrasPsi$ modes define a set of linear filters transforming the signal into a new one characterized by the most sensitive local modes, which we call the CoBRAS space. Surprisingly, the reconstruction of the entire signal is obtained just by multiplying the projection by the $\cobrasPhi$ modes at the reference point, $\eta = \eta_{ref}$, such as the origin ($\eta_{ref}=0$).

This approach should also extend to more general systems as long as there is a principled mapping between neighborhoods. This is natural with transitive group actions where every point can be mapped to some reference by an element of the group. For example, the $n$-torus and $\mathbb R^n$ under translations, $T(n)$,  or the $n$-sphere embedded in $\mathbb R^{n+1}$ under rotations, $SO(n+1)$. In these cases, the correlation operator depends on the group.

\subsection{A Global Signal Reconstruction Formula for Symmetric Spaces}
Let $\mathcal H = L^2(\Omega, \mu)$ be the Hilbert space of real-valued square integrable functions defined on the domain $\Omega$ with measure $\mu$. Let $\mathcal X \subseteq \mathcal{C}(\Omega) \cap \mathcal H$ be the set of continuous solutions to the dynamical system  
$$
\partial_t  \mathbf x 
    =\mathbf F(\mathbf x(t,\mnlsSpatial))
    ,\quad 
    \forall \mnlsSpatial \in \Omega
$$
\noindent and suppose the system is invariant with respect to some group of symmetries, $G$. That is, for any $\mathbf x \in \mathcal X, g\in G$, $g \cdot \mathbf x \in \mathcal X$. Most commonly, the group action is inherited by the action on the domain: $g \cdot \mathbf x(\eta) := \mathbf x(g\cdot \eta)$. 
For example, in the case of translations acting on $\Omega \subseteq \mathbb R^m$, translation by $\eta_0$ is given by $g_{\eta_0} \cdot \mathbf x(\eta) = \mathbf x(\eta - \eta_0)$.  A group is called \textit{transitive} if for every $\eta_1, \eta_2 \in \Omega$, there exists some $g \in G$ such that $g\cdot \eta_1 = \eta_2$. 
For $G$ acting transitively on $\Omega$ and some reference point $\eta_{\textit{ref}} \in \Omega$ (such as the origin if $\Omega \subseteq \mathbb R^m$, or the identity if $\Omega$ is a Lie group), we denote $g_\eta$ as a (possibly non-unique) element such that $g_\eta \cdot \eta_{\textit{ref}} = \eta$.

\begin{definition} \label{si_def:eps_recon}
    For $\epsilon > 0$, an operator $P$ \textbf{ $\epsilon$-reconstructs} the signal $\mathbf x$ on a set $U \subseteq \Omega$  if $|P(\mathbf x)(\eta) - \mathbf x(\eta)| < \epsilon$ for all $\eta \in U$. If $U$ is a neighborhood of some point $\eta_{\textit{ref}}$, we say $P$ \textbf{locally $\epsilon$-reconstructs} the signal near $\eta_{\textit{ref}}$. 
    If $U = \Omega$, we say that $P$  \textbf{globally $\epsilon$-reconstructs} the signal.
\end{definition}

Denote  $\cobrasPhi = [\phi_1(\eta), \dots, \phi_r(\eta)] \subset \mathcal H$ and $\cobrasPsi = [\psi_1(\eta), \dots, \psi_r(\eta)] \subset \mathcal H$, and define the projection operator  

\begin{align}
\label{eq:si_local_recon}
    P(\mathbf x) = \mathbf{\tilde x}(\eta) = \sum_k \phi_k(\eta) \langle \psi_k, \mathbf x\rangle    
\end{align}

\begin{proposition} \label{si_prop:lie_conv}
Let a group $G$ act transitively on $\Omega$ and $\mathcal X \subseteq \mathcal{C}(\Omega) \cap \mathcal H$ be invariant to $G$ as above. Suppose that  $\cobrasPhi$ and $\cobrasPsi$ define an operator $P$ as in Eq. ~\ref{eq:si_local_recon} that locally $\epsilon$-reconstructs a neighborhood $U$ of $\eta_{\textit{ref}}$ for every $\mathbf x \in \mathcal X$. If $\mu$ is $G$-invariant, then $\mathbf{x}(\eta)$ is globally $\epsilon$-reconstructed by

\begin{align}
    P_{\textit{global}}(\mathbf{x})(\eta) 
        &= \sum_k \phi_k(\eta_{\textit{ref}})\left(\psi_k \star_G \mathbf x\right)(g_\eta) \\
        &= 
        \cobrasPhi(\eta_{\textit{ref}}) 
        \cdot 
        \left(\cobrasPsi \star_G \mathbf{x} \right)(g_\eta)
\end{align}

\noindent where $\star_G$ is group transformed correlation operator:

$$\left(\psi \star_G \mathbf x\right)(g)
= \int_\Omega \psi(g^{-1} \cdot \eta) \mathbf{x}(\eta)d\mu(\eta)
$$

\end{proposition}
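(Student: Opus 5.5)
The plan is to reduce the global statement to the local hypothesis by transporting each point $\eta$ back to the reference point $\eta_{\textit{ref}}$ with the group element $g_\eta$. We use the shifted signal, not the original one, since it again belongs to $\mathcal X$ and so falls under the local reconstruction hypothesis.

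Fix $\mathbf x\in\mathcal X$ and $\eta\in\Omega$. By transitivity choose $g_\eta$ with $g_\eta\cdot\eta_{\textit{ref}}=\eta$. Define the transported signal $\mathbf y := g_\eta\cdot\mathbf x$, i.e. $\mathbf y(\zeta)=\mathbf x(g_\eta\cdot\zeta)$, using the inherited action on the domain. By $G$-invariance of $\mathcal X$ we have $\mathbf y\in\mathcal X$. Since $\eta_{\textit{ref}}\in U$, the hypothesis gives
\begin{equation*}
\Bigl|\sum_k \phi_k(\eta_{\textit{ref}})\langle\psi_k,\mathbf y\rangle-\mathbf y(\eta_{\textit{ref}})\Bigr|<\epsilon ,
\end{equation*}
and $\mathbf y(\eta_{\textit{ref}})=\mathbf x(g_\eta\cdot\eta_{\textit{ref}})=\mathbf x(\eta)$.

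Next we identify the inner products with the correlation. Write
\begin{equation*}
\langle\psi_k,\mathbf y\rangle=\int_\Omega\psi_k(\zeta)\,\mathbf x(g_\eta\cdot\zeta)\,d\mu(\zeta)
\end{equation*}
and substitute $\zeta'=g_\eta\cdot\zeta$, so that $\zeta=g_\eta^{-1}\cdot\zeta'$. Because $\mu$ is $G$-invariant, $d\mu(\zeta)=d\mu(\zeta')$, and the integral becomes $\int_\Omega\psi_k(g_\eta^{-1}\cdot\zeta')\,\mathbf x(\zeta')\,d\mu(\zeta')=(\psi_k\star_G\mathbf x)(g_\eta)$. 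Substituting this into the local bound gives $|P_{\textit{global}}(\mathbf x)(\eta)-\mathbf x(\eta)|<\epsilon$. Since $\eta$ was arbitrary, the reconstruction is global.

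The main subtlety is bookkeeping of the group action, not analysis, and there are two checks.

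First, the action on functions must be used with the right convention. One should really use $\mathbf y=g_\eta^{-1}\cdot\mathbf x$ in the left-action sense. With $(g\cdot\mathbf x)(\zeta):=\mathbf x(g\cdot\zeta)$ this is a right action, but invariance of $\mathcal X$ under all of $G$ makes the distinction harmless: we only need that $\zeta\mapsto\mathbf x(g_\eta\cdot\zeta)$ lies in $\mathcal X$.

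Second, $g_\eta$ may not be unique. If $g,g'$ both send $\eta_{\textit{ref}}$ to $\eta$, they differ by an element of the stabilizer of $\eta_{\textit{ref}}$. The argument above holds verbatim for every such choice, so the bound holds whichever representative is used. The value $(\cobrasPsi\star_G\mathbf x)(g_\eta)$ itself need not be independent of the choice, and nothing more than the bound is claimed.

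I would also note that continuity of $\mathbf x$ is used only so that pointwise evaluation is meaningful, and that measurability of $\zeta\mapsto g\cdot\zeta$ is implicitly assumed for the change of variables.
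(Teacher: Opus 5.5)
Your proposal is correct and follows the paper's proof essentially step for step. Both arguments transport by $g_\eta$ to obtain $\mathbf y(\zeta)=\mathbf x(g_\eta\cdot\zeta)\in\mathcal X$, apply the local bound at $\eta_{\textit{ref}}$, and identify $\langle\psi_k,\mathbf y\rangle$ with $(\psi_k\star_G\mathbf x)(g_\eta)$ via the $G$-invariant change of variables; your remarks on the right-action convention and the non-uniqueness of $g_\eta$ are accurate refinements that the paper leaves implicit.
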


\begin{proof}
Define the function $\mathbf y(\xi) = \mathbf x(g_\eta\cdot  \xi)$ for arbitrary $\xi, \eta \in \Omega$. Since $\mathcal{X}$ is $G$-invariant, $\mathbf y \in \mathcal X$. Choose $\xi = \eta_{\textit{ref}}$, then  $\mathbf y(\eta_{\textit{ref}}) = \mathbf x(g_\eta\cdot  \eta_{\textit{ref}}) = \mathbf x(\eta)$. 
Because every signal is locally $\epsilon$-reconstructed in a neighborhood of $\eta_{\textit{ref}}$ and $\eta_{\textit{ref}}$ trivially lives in this neighborhood, we have $|\mathbf y(\eta_{\textit{ref}})  -  P (\mathbf{y})(\eta_{\textit{ref}})| < \epsilon$ where 

$$
 P (\mathbf{y})(\eta_{\textit{ref}})   
    = \sum_k \phi_k(\eta_{\textit{ref}}) \langle \psi_k, \mathbf y\rangle
$$

\noindent Substituting in the definition for the inner product: 
\begin{align*}
    \langle \psi_k, \mathbf y \rangle 
    &= \int_\Omega \psi_k(\xi) \mathbf y(\xi) d\mu(\xi)
    \\
    &= \int_\Omega \psi_k(\xi) \mathbf x(g_\eta \cdot \xi) d\mu(\xi) \\
    &= \int_\Omega \psi_k(g_\eta^{-1} \cdot \zeta) \mathbf x(\zeta) d\mu(\zeta) \\
    &= (\psi_k \star_G \mathbf x)(g_\eta)
\end{align*}

\noindent where we made the substitution $\xi = g_\eta^{-1} \cdot \zeta$ and used the fact that $\mu$ is $G$-invariant: $d\mu(g\cdot \zeta) = d\mu(\zeta)$ for all $g\in G$. Therefore:

$$
P(\mathbf y)(\eta_{\textit{ref}})= \sum_k \phi_k(\eta_{\textit{ref}}) (\psi_k \star_G \mathbf x)(g_\eta) = P_{\textit{global}}(\mathbf{x})(\eta)
$$

Since $\mathbf y(\eta_{\textit{ref}}) = \mathbf{x}(\eta)$, we have 

\begin{align*}
    &|P(\mathbf{y})(\eta_{\textit{ref}}) - \mathbf{y}(\eta_{\textit{ref}})| < \epsilon \\
    \Rightarrow 
    &|P(\mathbf{y})(\eta_{\textit{ref}}) - \mathbf{x}(\eta)| < \epsilon \\
    \Rightarrow
    &|P_{\textit{global}}(\mathbf{x})(\eta) - \mathbf{x}(\eta)| < \epsilon 
\end{align*}

\end{proof}

When $G$ is the space of translations acting on a periodic or infinite domain, the correlation operator is simply the normal choice and the origin acts as a standard reference. Note that we specifically used a $G$-invariant measure. When $\mu$ is not $G$-invariant, a Jacobian term enters the integral and requires extra care.

\textbf{Note}: While we restricted $\mathcal X$ to be in the space of continuous signals to simplify the presentation, it is possible to generalize Proposition ~\ref{si_prop:lie_conv} to functions that are discontinuous on a set of measure zero, i.e. PDE solutions with shock formation. This is achieved by relaxing the pointwise $\epsilon$-reconstruction definition in Definition ~\ref{si_def:eps_recon} with $\mu$-a.e. bounds and additionally requiring that $\eta_{ref}$ is a Lebesgue point for all signals $\mathbf x \in \mathcal X$ and functions $\phi_k$. This additional requirement is trivially satisfied in practice when signals are presented as discretized data. 

\clearpage

\clearpage
\section{Kolmogorov Flow}
\label{si:kflow}

\begin{figure}[t]
    \centering
    \includegraphics[width=0.95\textwidth]{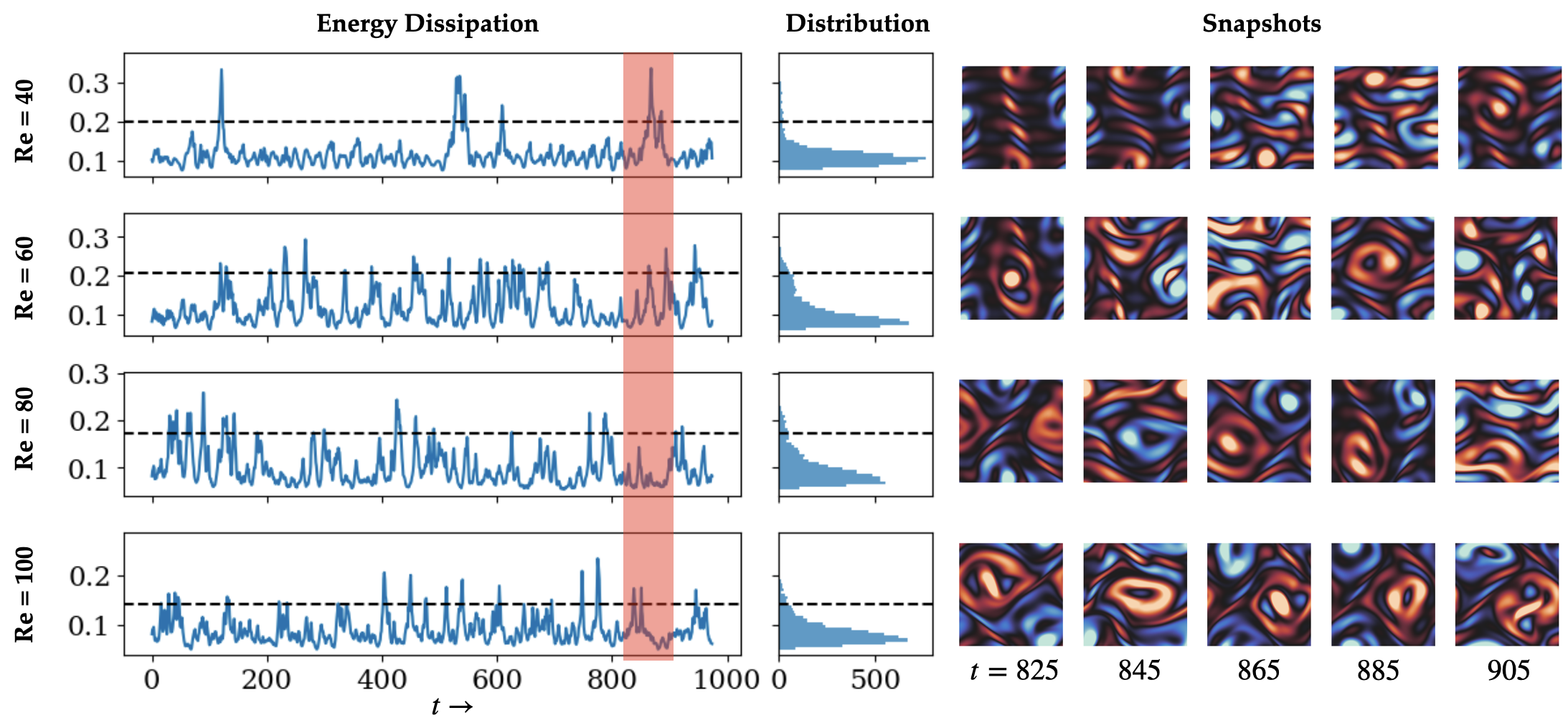}
    \cprotect \caption{\textbf{Kolmogorov flow at different $Re$.} \textit{Left}: Time series and histogram of energy dissipation values at various Re. Dashed horizontal line corresponds to two standard deviations above the mean for the respective flow's energy dissipation.  \textit{Right}: snapshots for each $Re$  between $t=825-905$ corresponding to the red band on the left.
    }
     \label{fig:si_kflow_snaps}   
\end{figure}

The Kolmogorov flow is a 2D fluid flow governed by the incompressible Navier-Stokes equations with periodic boundary conditions in both directions driven by a spatially periodic forcing term. 
Its simplicity has made it a widely studied toy model for turbulence in analytic studies \cite{platt1991investigation}, simulation \cite{borue1996numerical,qin2025clean}, and even laboratory settings \cite{obukhov1983kolmogorov}. In particular, for a certain range of Reynolds number, the flow exhibits intermittent extreme events as energy transfers from larger to smaller scales as is typical of turbulent systems.
Explicitly, the Kolmogorov flow takes the form

\begin{equation} \label{eq:kflow}
    \partial_t \kFlowVel = - \kFlowVel \cdot \nabla \kFlowVel - \nabla p + \nu \Delta \kFlowVel + \mathbf f,  \qquad \nabla \cdot \kFlowVel = 0
\end{equation}

\noindent 
where $\kFlowVel(t, \mathbf x)$ is the velocity field and $p(t, \mathbf x)$ is the pressure field, $\nu = 1/Re$ is the kinematic viscosity and $\mathbf f(\mathbf x ) = (\sin(k_f y),0)$ is the time-independent periodic forcing. We consider spatial variables defined on the torus, $\mathbf x = (x,y) \in [0, 2\pi] \times [0, 2\pi]$. For all of our results, we consider $k_f = 4$. 

For this 2D system, we can write $\mathbf{v}= (\nabla \psi) \times \hat {\mathbf{e}}_3$, where $\psi$ is the stream function\footnote{
This is a slight abuse of notation---this has no relationship to the CoBRAS modes $\cobrasPsi = [\psi_1, \dots \psi_r]$.
}.
Taking the curl of this expression yields the vorticity $\omega(t, \mathbf x) = \nabla \times \kFlowVel = -\nabla^2 \psi$. 
Thus Equation ~\ref{eq:kflow} can be represented entirely in terms of the vorticity;
for brevity, we will rewrite the discretized Equation ~\ref{eq:kflow} as $\partial_t \kFlowVort = \mathbf F(\kFlowVort, \mathbf{x})$. 
The energy dissipation can be written in terms of an energy functional of the vorticity 

\begin{equation}
    \energyDissipation(\kFlowVort) = \frac{\nu}{4\pi^2} \int |\kFlowVort(t, \mathbf x)|^2 d \mathbf x
\end{equation}

\noindent where $4\pi^2$ is the area of the spatial domain. 
As the Reynolds number, $Re$, increases, the system begins to exhibit intermittent (and chaotic) bursting at $Re=35$ \cite{farazmand2017a}. We focus our attention primarily on $Re=40$, when the bursts are rare, but we also analyze the behavior at $Re \in \{60,80, 100\}$ (see \figurename ~\ref{fig:si_kflow_snaps}).

To use our method, we simulate the Kolmogorov flow with a $256 \times 256$ 
spatial resolution ($\kFlowVort \in \mathbb{R}^{65,536}$) and integrate a single trajectory from 
$t \in [0, 5000]$ with $dt_{sim} = 10^{-3}$, 
saving every $1$ time unit for a total of 5000 forward snapshots. Following \cite{kochkov2021machine}, we integrate the system using a pseudospectral solver with a Crank-Nicolson RK4 implicit-explicit time stepping scheme  \cite{sweby1984high}. We use the initial condition 
$\kFlowVort(x,y) = \cos(x) - \sin(y)$
to satisfy the zero-divergence property. We find that after just a few time units, the solution collapses to the attractor---we remove the first $10$ time units to avoid the transient when constructing our projections and performing analysis. 

For CoBRAS, we define our quantity of interest $q$ to be the energy dissipation and examine the map over the time horizon: 

$$
\mathbf \cobrasOutput(\kFlowVort(t)) = [\energyDissipation_1, \energyDissipation_2, \dots, \energyDissipation_{n}]
$$

\noindent where $\energyDissipation_k = \energyDissipation(t + k \Delta t)$, and we choose $\Delta t = 0.5$. In Section ~\ref{si:kflow_T}, we examine the effect of the adjoint horizon $T = n/2$ on the CoBRAS modes for $T \in \{1,2,4,8\}$. Finally, we denote $\mathbf k =(k_x,k_y)$ as the wave number vector and $a_{ij}$ as the projection onto the $(i,j)$-th Fourier mode (corresponding to the product of the $i$-th Fourier mode in the $x$-direction and the $j$-th Fourier mode in the $y$-direction).

\subsection{CoBRAS Modes}
\label{si:kflow_40_modes}

\begin{figure*}[t!]
    \centering
    \includegraphics[width=0.95\textwidth]{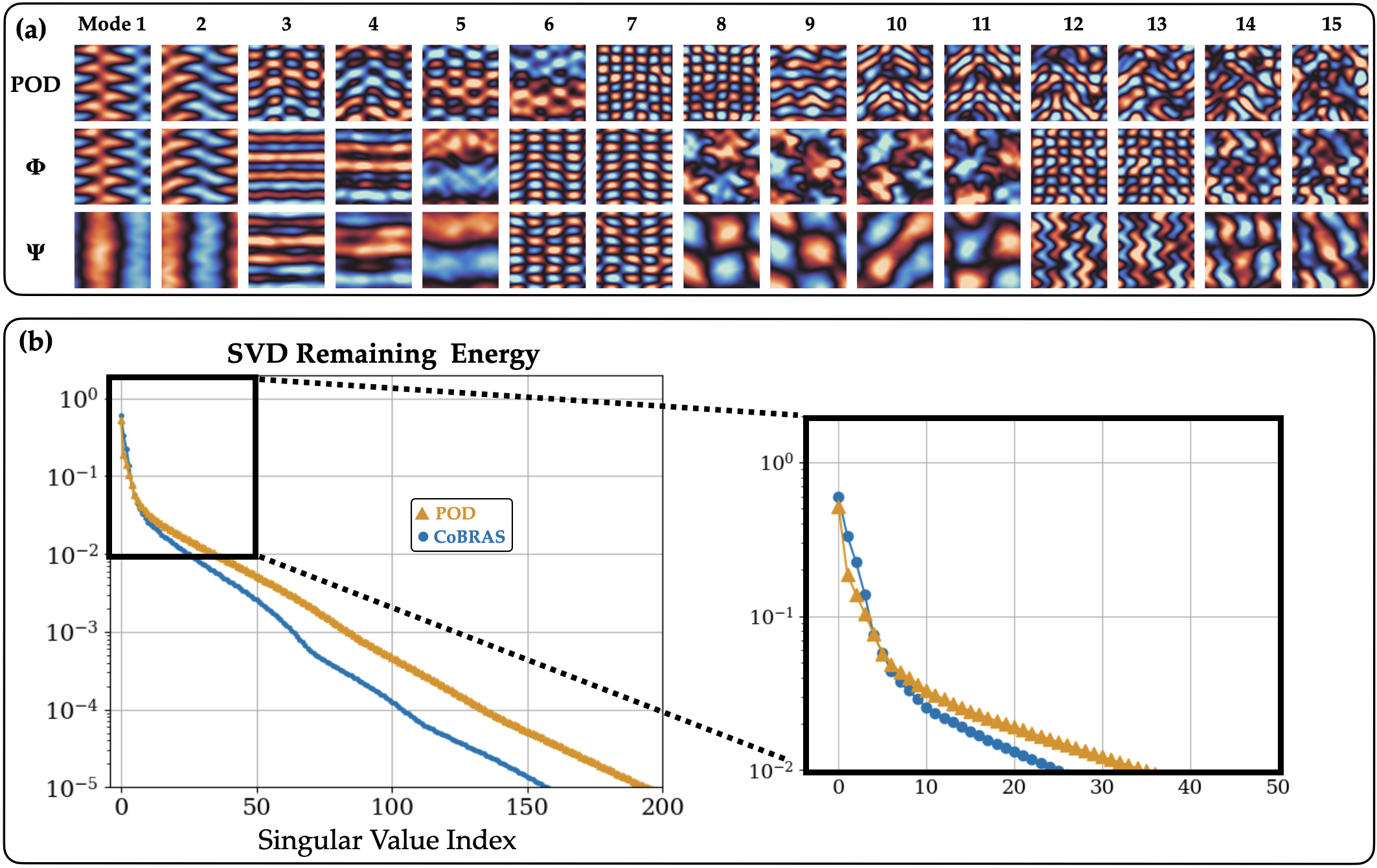}
    \cprotect \caption{\textbf{Kolmogorov Flow Modes.}
    \textit{Top:} The first 15 POD and $(\cobrasPhi, \cobrasPsi)$ modes. 
    \textit{Bottom:} The remaining energy for the POD and CoBRAS SVDs. 
    }
     \label{si_fig:kflow_modes}   
\end{figure*}

For all of our studies with the Kolmogorov flow, we use the first $n_x = n_g = 4000$ snapshots to compute the POD and CoBRAS modes. In this section, we specifically discuss the modes at $Re=40$. In Sections ~\ref{si:kflow_Re} and ~\ref{si:kflow_T}
 we examine how the modes change across $Re$ and $T$ respectively.

In \figurename ~\ref{si_fig:kflow_modes}(a), we provide the first several CoBRAS and POD modes beyond what was shown in the main text. As discussed in the main text, the first four CoBRAS $\cobrasPsi$ modes resemble components of  $(1,0)$ and $(0,4)$ Fourier modes. The 5th $\cobrasPsi$ mode resembles components of $(0,1)$, the 6th and 7th modes resemble $(2,4)$, and the 8th-11th modes resemble $(1,1)$.
Notably, these are all substantially different from the 6th and 7th POD modes, which resemble $(3,4)$. This is the range of singular values where the spectra begin to diverge in \figurename ~\ref{si_fig:kflow_modes}(b), indicating that the $(2,4)$ and $(1,1)$ modes are more relevant for characterizing the energy dissipation than the $(3,4)$ modes.

\subsection{Controlling the Kolmogorov Flow}
\label{si:kflow_ctrl}

At $Re=40$, the projection onto the first two linear CoBRAS modes admits a clear radial structure, where low-energy events live on a annulus-shaped body of the chaotic attractor and the high-energy events live off the main body on the interior of the ring in Figure \ref{fig:kflow}. This provides a natural control strategy to mitigate extreme events: by staying on the annulus, we should never produce an extreme event. To achieve this, one can simply define a reference radius, $R_{ref}$, within the annulus and apply proportional control to it. Explicitly, we project the state $\mathbf \kFlowVort$ onto the first two modes: $\mathbf z = \cobrasPsi^T \mathbf\kFlowVort$. Let $R_z = ||\mathbf z||$ and  $\mathbf{\hat z} = \mathbf z/R_z$. 
Then the closest point on the reference circle is $\mathbf z_{ref} = R_{ref}\mathbf{\hat{z}}$. 
The proportional control law in the projected space is then $\controlInput_z = -k_{gain}(\mathbf z - \mathbf{z}_{ref})\mathbf{\hat{z}}$; i.e. we simply push towards the boundary of the circle. 
In this work, we set the gain to be $k_{gain} = 1$ for simplicity.
We actuate the system by lifting back into vorticity coordinates, 
$\controlInput_{\omega} = \mathbf{\cobrasPhi} \controlInput_z$, 
and obtain the dynamics: 
$\partial_t\omega = \mathbf F(\omega, \mathbf{x}) + \controlInput_{\omega}$
\footnote{Note that 
    $\controlInput_z \in \mathbb R^2$, resulting in low-dimensional actuation, but $\cobrasPhi \controlInput_z$ affects the entire domain.
    }.

To investigate this, we select a withheld snapshot from our test set (after $t=4000$) and apply zero-hold feedback control for $1$ time unit (i.e. $dt_{ctrl}=1$ with actuation every $1/dt_{sim}$ simulation time steps) before recalculating the control input. We find that the control is effective at keeping the projected state on the circle and---while it does move substantially along the circle--it never produces an extreme event. Moreover, while we allow for unbounded control, there is very little energy added or subtracted into the system; where $\frac{\max_t ||\mathbf u(t)||^2}{\min_t||\mathbf x(t)||^2} \approx 5 \times 10^{-3}$ for the controlled trajectory. 

\subsection{Prediction of Events}
\label{si:kflow_prediction}
\begin{figure*}[t!]
    \centering
    \includegraphics[width=0.95\textwidth]{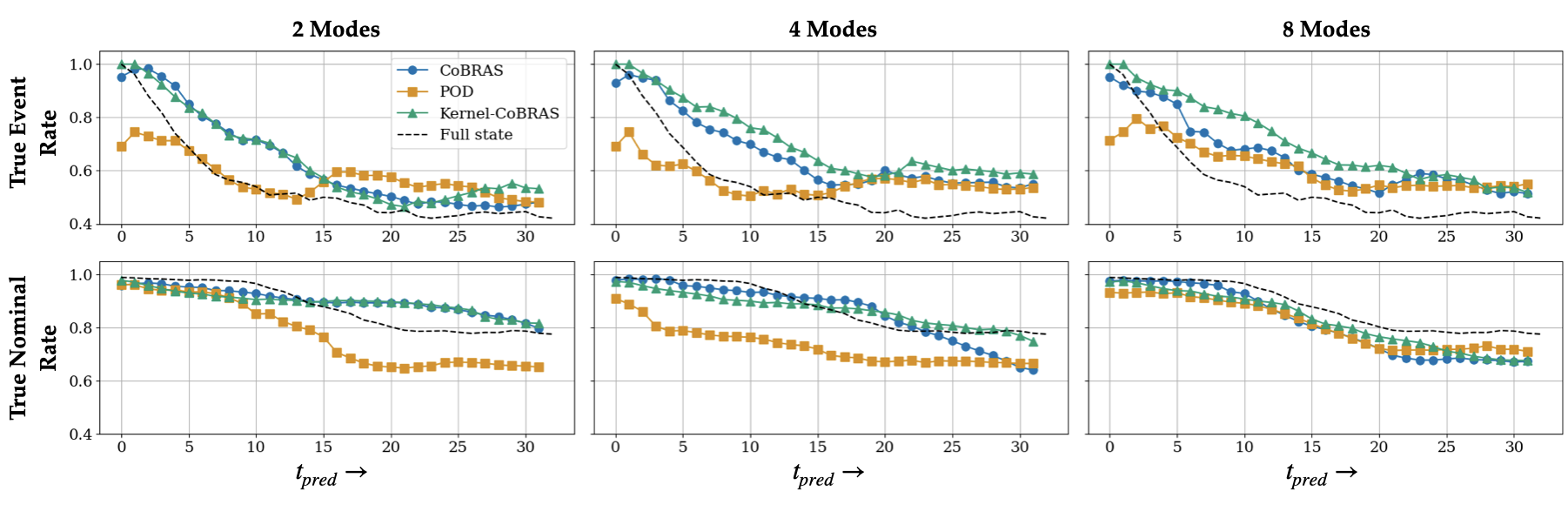}
    \cprotect \caption{\textbf{Kolmogorov Flow Prediction at $Re=40$}. 
    The predictive capability of RBF-SVMs using  for POD, CoBRAS, and Kernel-CoBRAS with different numbers of features. Each plot includes a full-state SVM baseline (dashed line). 
    }
     \label{si_fig:kflow_svm-40}   
\end{figure*}

We consider the predictive capability of these modes by training simple radial basis function (RBF) support vector machine (SVM) classifiers to determine whether an event will happen within a given time horizon. For each time $t_k$, we ask whether 

\begin{equation}
    \max_{t \in [t_k,t_k + t_{pred}]} \energyDissipation(t) > \energyDissipation^*
\end{equation}

\noindent where $\energyDissipation^*$ is two standard deviations above the mean. For all experiments, we use Scikit-Learn \cite{pedregosa2011scikit} to train the RBF classifier. The RBF Kernel is defined as 

$$
    k(x,x') = \exp(-\gamma ||x - x'||^2)
$$

To enable fair comparison across input scales, we first normalize the data before training to have zero mean and unit variance. To compare across the number of input feature dimensions $r$, we choose $\gamma = 1/r$, as is common practice. 
To accommodate the extremely imbalanced dataset, we use Scikit-Learn's ``balanced`` loss, which provides class-dependent weights to reduce bias towards the nominal class.
In this setting, we fix the $L^2$ regularization hyperparameter $C=1$ for all experiments.

In \figurename ~\ref{si_fig:kflow_svm-40}, we present the true positive and true negative rates for POD and CoBRAS at $Re=40$ for various prediction horizons $t_{pred} \in [0, 32]$. While not a focus of the main text due to the lack of interpretable modes, we also include a comparison with kernel-CoBRAS (see ~\ref{si:cobras_kernel}) to examine the effect of nonlinear sensitivity-balanced projections\footnote{
We also use an RBF kernel for defining the RKHS for kernel-CoBRAS. The scale factor $\gamma$ is chosen based on the variance of the input data. 
}. For each method, we project onto the first  $r$-modes (for $r=2,4$, and $8$) and compare to the performance of just using the full-state information.

As $t_{pred}$ increases, the class imbalance becomes less pronounced (since more points will \textit{eventually} become extreme events), but the decision boundary becomes less separable and each classifier's performance deteriorates.

Most strikingly, the CoBRAS projections significantly outperform the POD projections across time horizons---especially for  $t_{pred} < 10$---even with just two modes. For each case, CoBRAS true event rate plateaus to $50\%$ after $t_{pred}=16$ time units. The CoBRAS modes also provide more predictive power for accurately classifying the true events compared to the full-state SVM. This provides further evidence that by capturing the \textit{sensitivity} of the energy dissipation, the CoBRAS projections provide a meaningful predictor of events. 
It is worth noting that the CoBRAS performance remains relatively invariant to the number of modes---indicating that two modes truly capture the most relevant characteristics about the event formation.

We observe that the kernel-CoBRAS provides modest improvement predictions compared to linear CoBRAS. While further improvements could be observed with a different kernel defining the CoBRAS RKHS, we did not explore that in this work. Finally, while we observe improved predictions with CoBRAS, we note that we lose interpretability of the identified mechanism when developing the nonlinear projections. 

\subsection{Reynolds Number Dependence}
\label{si:kflow_Re}
\begin{figure*}[t!]
    \centering
    \includegraphics[width=0.90\textwidth]{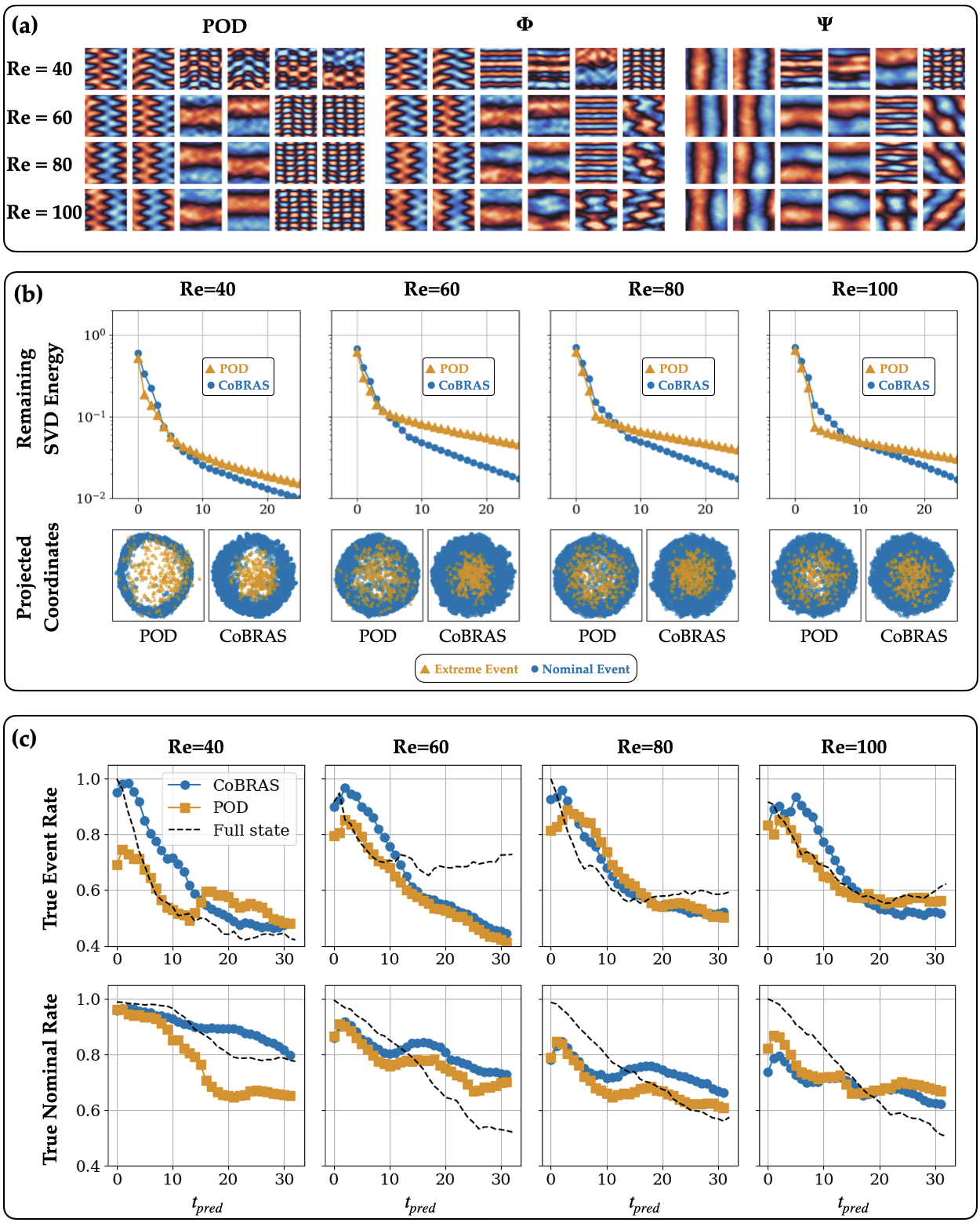}
    \cprotect \caption{\textbf{Effect of $Re$ $(T=4)$}. 
    \textbf{(a)} The POD and CoBRAS $(\cobrasPhi, \cobrasPsi)$ modes for $Re \in \{40,60,80, 100\}$. 
    \textbf{(b)} The remaining energy from the POD and CoBRAS SVDs and the projection onto the first two leading modes across $Re$. 
    \textbf{(c)} The effect of $Re$ on the SVM predictions for CoBRAS, POD, and the full-state across $t_{pred}$. CoBRAS and POD both use $r=2$ modes. 
    }
     \label{si_fig:kflow_re-modes}  
\end{figure*}

In this section, we examine the effect the Reynolds number has on the obtained modes and event predictability. Simulations are run for $Re\in \{40,60,80,100\}$ using the identical setup described in ~\ref{si:kflow}; all CoBRAS results were obtained with adjoint horizon $T=4$. 

As $Re$ increases, the number of events increases (as shown in \figurename ~\ref{fig:si_kflow_snaps}). In \figurename ~\ref{si_fig:kflow_re-modes}(a), we provide the first 6 POD and CoBRAS $(\cobrasPhi, \cobrasPsi)$ modes. After $Re=40$, the POD modes remain relatively stable. Notably, the CoBRAS modes that resemble the $(0,4)$-th Fourier mode shift importance with the ones that resemble $(0,1)$-th Fourier modes (from the 3rd $\cobrasPsi$ mode at $Re=40$ to the 5th mode at $Re=60-80$).

However, it's clear that $(1,0)$-Fourier mode dependence remains constant across all $Re$ that were examined. Indeed, the projection onto the first two modes is shown in \figurename ~\ref{si_fig:kflow_re-modes}(b). For both POD and CoBRAS, the extreme events remain concentrated in the interior of the disk, but the nominal events spread throughout the disk. In \figurename ~\ref{si_fig:kflow_re-modes}(c), we provide analogous SVM results detailed in Section ~\ref{si:kflow_prediction} using just $r=2$ modes. All models continue to degrade as $t_{pred}$ increases across all $Re$. With $r=2$ modes, the CoBRAS models remain better predictors of extreme events across a range of $Re$, with performance gap narrowing after $Re=60$.

\clearpage
\subsection{Adjoint Horizon Dependence}
\label{si:kflow_T}

\begin{figure*}[t!]
    \centering
    \includegraphics[width=0.95\textwidth]{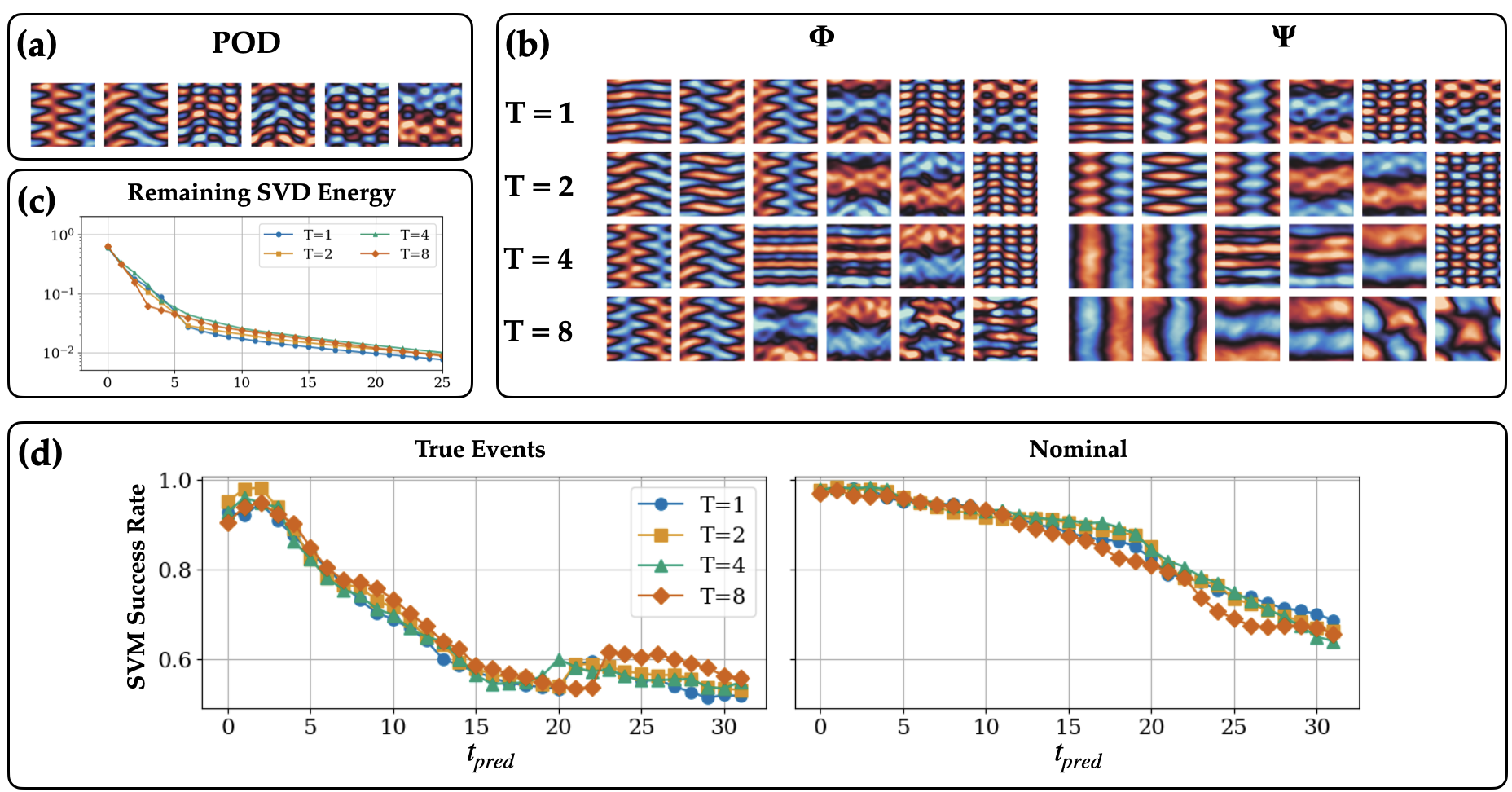}
    \cprotect \caption{\textbf{Effect of Adjoint Horizon, $T$ $(Re=40)$}. 
    \textbf{(a)} POD modes for $Re=40$. 
    \textbf{(b)} CoBRAS $(\cobrasPhi, \cobrasPsi)$ modes for $T \in \{1,2,4,8\}$.  
    \textbf{(c)} Remaining energy for the truncated CoBRAS SVD for across T. 
    \textbf{(c)} Effect of $T$ on SVM predictions for CoBRAS projections using $r=4$ modes across values $t_{pred}$. 
    }
     \label{si_fig:kflow_T-modes}   
\end{figure*}

In this section, we examine the effect of the adjoint horizon, $T$, when applying CoBRAS to the Kolmogorov flow. We use an identical setup to that described in ~\ref{si:kflow} at $Re=40$. In \figurename ~\ref{si_fig:kflow_T-modes}(a-b), we provide the first $6$ POD and CoBRAS $(\cobrasPhi, \cobrasPsi)$ modes across $T \in \{1,2,4,8\}$. 
As $T$ increases, the dependence on the $(0,4)$-th Fourier mode shifts. 
At $T=1$, $(0,4)$ appears as the first CoBRAS mode; at $T=2$, it 
mixes between the first three; at $T=4$ it mixes between the third and fourth; and at $T=8$ does not clearly appear in the first six modes. 
This indicates that the energy dissipation becomes less sensitive to the $(0,4)$-th direction over long periods of time. Because the system is chaotic, the adjoint information saturates over long time horizons and is dominated by the information contained in the body of the attractor. In \figurename ~\ref{si_fig:kflow_T-modes}(d), we provide the SVM predictions across $t_{pred}$ for $r=4$. The results nearly overlap across all values of $t_{pred}$, indicating that the 4-dimensional subspace contains nearly identical information for predicting extreme events regardless of the adjoint horizon used to obtain the CoBRAS modes. This indicates that the choice of $T$ is relatively insensitive for obtaining the dominant mechanisms.

\subsection{Continuous Symmetry Reduction}
\label{si:kflow_symm}
\begin{figure*}[t!]
    \centering
    \includegraphics[width=0.95\textwidth]{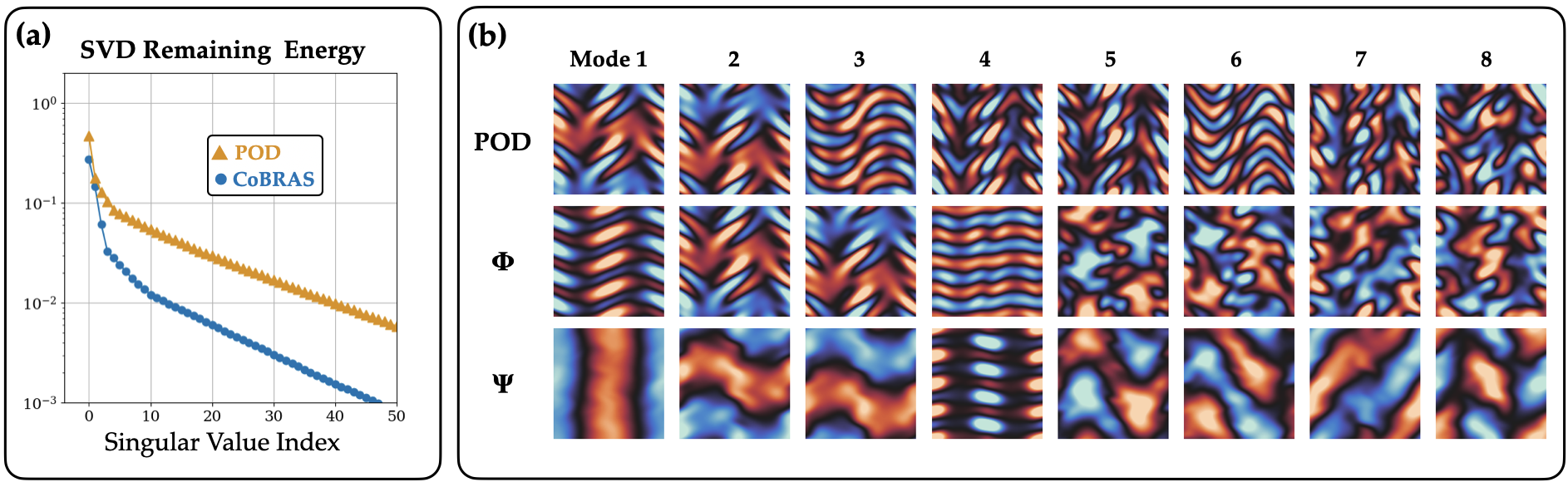}
    \cprotect \caption{\textbf{Effect of Symmetry Reduction $(T=4, Re=40)$}.
    \textbf{(a)} The remaining energy for the symmetry-reduced POD and CoBRAS SVDs. 
    \textbf{(b)} The first eight symmetry-reduced POD and CoBRAS $(\cobrasPhi, \cobrasPsi)$ modes.
    }
     \label{si_fig:kflow_symm}  
\end{figure*}

The Kolmogorov flow exhibits a significant amount of symmetry, including a continuous symmetry along the $x$-direction, discrete translations in the $y$-direction, and a reflection symmetry \cite{platt1991a, chandler2013invariant}. These symmetries describe redundancy in dynamics; by working in a particular frame of reference, one can ``quotient'' out the symmetry---effectively isolating the important dynamics of interest. While not essential to our method, we provide a brief exploration into how performing a symmetry reduction before applying CoBRAS might provide more insight into the underlying mechanisms at $Re=40$ and $T=4$. We only examine the effect of removing the continuous symmetry in the $x$-direction by translating all snapshots to a reference as in \cite{holmes2012turbulence}. In particular, we align snapshots into a frame of reference where the imaginary component of the first Fourier mode is zero for each snapshot. Explicitly, denote $S_x$ to be the shift operator that translates a snapshot by $x$ and denote $\tilde \kFlowVort = S_x \kFlowVort$ as the shifted snapshot. We define $\tilde{\mathbf X}$ as the set of shifted snapshots. 

While this can be done \textit{before} computing the gradients of the system, in practice, we collected the gradients and performed the symmetry reduction as a post-processing step. In particular, we compute the shift operator, $S_x$ for each initial snapshot $\kFlowVort_0$ used to compute the gradient $\mathbf g = \nabla_{\kFlowVort} (\boldsymbol{\xi}\cobrasOutput) (\kFlowVort_0)$. We then apply it to the gradient, $\tilde{\mathbf{g}} = S_x \mathbf g$, and build the matrix of gradients in the space $\tilde{\mathbf Y}$. We then proceed with CoBRAS as usual using the matrices $\tilde{\mathbf X}$ and $\tilde{\mathbf Y}$.

In \figurename ~\ref{si_fig:kflow_symm}(a), we present the first 8 POD and CoBRAS $(\cobrasPsi, \cobrasPhi)$ modes in the symmetry-reduced space. As one would expect, the first two CoBRAS modes (which differ only by a translation in the $x$-direction) from the previous study reduce to a single mode under the symmetry reduction. This indicates that there is only a single direction in the symmetry-reduced space; in fact, in the reduced space, this is identical to the radius of the disk shown in Figure \ref{fig:kflow}. Interestingly, whereas in the full space, we found a dependence on the $(0,1)$ and $(0,4)$ Fourier modes in $\cobrasPsi$, we see a different clean set of coherent structures appear in the symmetry-reduced $\cobrasPsi$. The 2nd and 3rd reduced modes resemble $\sin(y+\sin(x))$ up to a scale factor and phase in the $y$-direction---i.e., a modulation of the $(0,1)$-mode. The fourth $\cobrasPsi$ mode seems to have the periodicity of the $(0,4)$ Fourier mode in the $y$-direction and the $(1,0)$ Fourier mode in the $x$-direction, but there is a non-trivial modulation. Remarkably, this closely resembles  the traveling wave solution $T_8$ found in Figure 8c of \cite{farazmand2016b}, indicating that pushing solutions in the direction of the traveling wave may be responsible for higher-order sensitivities of the energy dissipation. While we do not examine these modes in detail, we believe that the symmetry-reduced modes may provide a promising direction for understanding the Kolmogorov flow in more detail.

\clearpage
\section{FitzHugh-Nagumo Networks}
\label{si:fhn}
The FitzHugh-Nagumo (FHN) model has been widely used to model complex networked dynamical systems exhibiting node excitation, such as electro-optical systems \cite{romeira2016regenerative}, networks of neurons firing \cite{gerster2020fitzhugh}, cardiac \cite{nash2004electromechanical} and pancreatic cells \cite{scialla2021hubs}, and biochemical networks \cite{lin2004resonance}. A detailed review of the model can be found in \cite{cebrian2024six}. 
As in \cite{ansmann2013a, karnatak2014a}, we consider coupled FHN networks of $N$ units $(x_i, y_i)$ governed by
\begingroup
\addtolength{\jot}{-10pt}
\begin{align} \nonumber 
\dot x_i & = x_i (a_i - x_i)(x_i - 1) - y_i + k \sum_j A_{ij}(x_j - x_i)\\  
\dot y_i & = b_i x_i - c y_i
\end{align}
\endgroup
\noindent for $i = 1,2, \dots,  N$, where $A_{ij}$ is the adjacency matrix, $a_i, b_i$ and $c_i$ are internal parameters for each unit, and $k = 0.128/(N-1)$ is the coupling strength. For simplicity, we choose $a_i=-0.02651$ and $c_i = 0.02$ to be constants for all nodes and $b_i$ to be distributed in $[0.006, 0.014]$, which breaks the symmetry among all nodes. 
Following \cite{ansmann2013a, karnatak2014a}, we consider three cases: 
\begin{enumerate}
    \item ~\ref{si:fhn_101}: a fully connected network of $N=101$ nodes
    \item ~\ref{si:fhn_1001}: a fully connected network of $N=1001$ nodes
    \item ~\ref{si:fhn_small_world}: a small-world network of $N=100 \times 100 = 10^4$ nodes
\end{enumerate}

\noindent The fully-connected networks have $A_{ij} = 1$ for all $i \neq j$ and we choose $b_i$ to be linearly increasing $b_i = 0.006 + 0.008\frac{i-1}{N-1}$. For the small-world case, we use a periodic lattice of $100\times100$ nodes with $60$-nearest neighbor connectivity and randomly generated long-range connections. In each of these three cases, the dynamics are chaotic and will exhibit spontaneous synchronization. 

To efficiently simulate this stiff system, we use a Dormand-Prince solver in the Diffrax package \cite{kidger2021on} (\verb|Dopri5| for the fully connected network, \verb|Dopri8| for the small-world), which is built on JAX and supports autodifferentiation. In particular, we use a PID stepsize-controller. For the fully-connected network, we use $10^{-9}$ absolute and relative tolerances, an initial $dt=10^{-2}$, minimal $dt=10^{-5}$ and a maximal $dt=0.5$. For the small-world network, we use tighter parameters: $10^{-12}$ absolute and relative tolerance, an initial $dt=10^{-3}$ and maximum $dt=5\times 10^{-3}$. Initial conditions are uniformly sampled within a $[-0.5,0.5]^{2N}$ hypercube. The first several thousand time units are removed to ensure the system was on the attractor. 

To apply CoBRAS in each case, we take $\mathbf x = (x_1, \dots, x_N, y_1, \dots y_N)$ to be our state variable and define our QoI to be the node-average energy of the system $q(t) = \frac{1}{2N}\sum_i\left( x_i(t)^2 + y_i(t)^2\right)$. We consider events to be extreme if this quantity exceeds 4 standard deviations above the mean.

\paragraph{Support Vector Machine (SVM).} Just as in ~\ref{si:kflow_prediction}, we use Scikit-Learn to train our classification models with an RBF kernel using features $\mathbf z = \cobrasPsi^T \mathbf x$ with balanced loss, scale factor $\gamma = 1/r$, and $C=1$. SVM input features are centered and scaled to have unit norm before training. To train the SVM we use $t_{pred} = 50$ time units to define our labels.

\paragraph{Control.} We design a controller based on the SVM decision boundary. The decision function, $d(\mathbf z)$, for an SVM with kernel $k$ can be computed: 
$$
    d(\mathbf z) = \sum_i \alpha_i \ell_i k(\mathbf z, \mathbf z_i) + \beta
$$
\noindent where $(\mathbf z_i, \ell_i)$ are the support vectors and labels, and $\alpha_i \ell_i$ are the dual coefficients. Likewise, the gradient of the RBF kernel can be analytically computed: 
$$
\nabla_ z d(\mathbf z) = -2 \gamma \sum_i \alpha_i \ell_i k(\mathbf z, \mathbf z_i) (\mathbf z - \mathbf z_i)
$$
The negative gradient of the decision function $-\nabla_z d$ points toward the nominal region of space: by actuating along this direction, we can nudge the system back into the nominal region. We define $\mathbf u_{z} = \nabla_z d(\mathbf z)$ and carry the sign into the final control law. We lift back to the original space: $\mathbf u_x = \cobrasPhi \mathbf u_z$.

To ensure that we barely affect the system, we take two additional steps to the control design.
First, to standardize and constrain the amount of influence on the system at each step, we force the controller to have a fixed norm:

$$
\tilde{\mathbf u}(\mathbf z) = \frac{\mathbf u_x (\mathbf{z})}{ ||\mathbf u_x(\mathbf{z})||}
$$

\noindent Second, we only apply control when we predict there will be an extreme event in the next $t_{pred}$ units, i.e. $d_{+}(\mathbf z) =\mathbf 1\{{d(\mathbf z) > 0}\}$. The final control law is then 
$$
\mathbf u(\mathbf z) = -k_{gain}d_{+}(\mathbf z) \tilde{\mathbf u}(\mathbf{z}) 
$$
\noindent where $k_{gain}$ is chosen to be 1\% of the mean norm of $\mathbf x$. This effectively creates a bang-bang controller with small constant energy input only when we predict the system could become an extreme event. 

Control is simulated using $\dot{\mathbf x} = \mathbf F(\mathbf x) + \mathbf u$ with zero-hold control and actuation frequency equal to $1$ time step.

\subsection{$N=101$ Fully Connected Network}
\label{si:fhn_101}
\begin{figure*}[t!]
    \centering
    \includegraphics[width=0.95\textwidth]{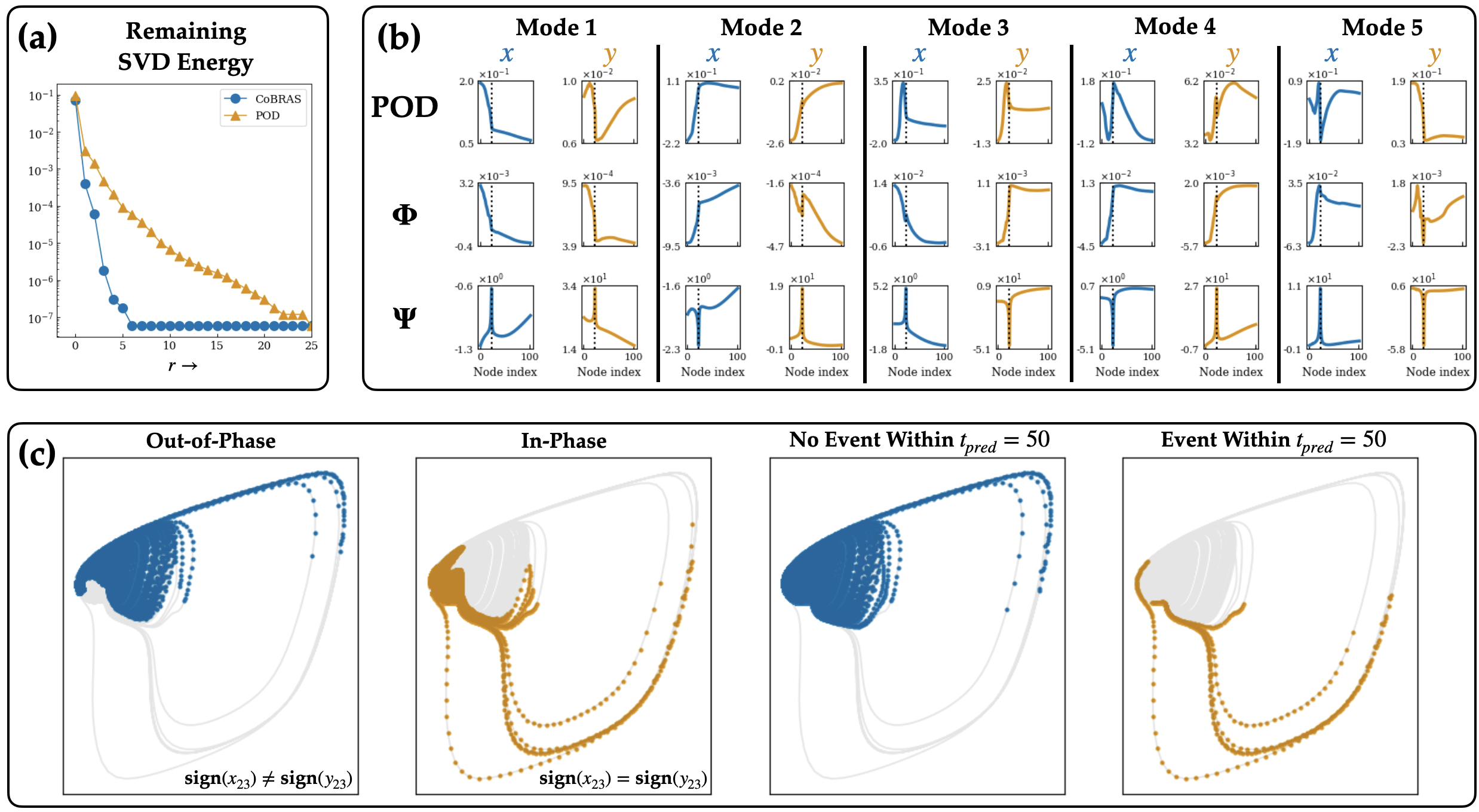}
    \cprotect \caption{\textbf{FHN POD and CoBRAS Modes $(N=101)$}.
    \textbf{(a)} The remaining energy for POD and CoBRAS SVDs. 
    \textbf{(b)} The first five  POD and CoBRAS $(\cobrasPhi, \cobrasPsi)$ modes with $x$- and $y$-components separated. Dashed line indicates the critical node \#23. 
    \textbf{(c)} Projection onto the leading two CoBRAS $\cobrasPsi$ modes comparing: the relative phase of $x_{23}$ and $y_{23}$ (\textit{left}) and whether an event occurs with $t_{pred}$ units (\textit{right}).
    }
     \label{si_fig:fhn_101-modes}  
\end{figure*}

For the $N=101$ case, we simulate the system for a total of $T=10^5$ time units, discarding the first $5000$ to ensure the state is on the attractor. To build the snapshot matrix $\mathbf X$ and train the SVM classifier, we use the first half of this trajectory and downsample by $10\times$. To calculate gradients,  we use an adjoint horizon of $T=100$ time units and form the matrix of gradients $\mathbf Y$ by sampling these snapshots used in $\mathbf X$ as initial conditions. To control the system, $k_{gain}$ is chosen as described in ~\ref{si:fhn} and  $k_{gain} \approx 6.76 \times 10^{-3}$.

We report the truncated energy for the POD and CoBRAS in \figurename ~\ref{si_fig:fhn_101-modes}(a). While both spectra decay rapidly, CoBRAS achieves machine precision after just 6 modes, while POD needs approximately 25. This indicates that the true rank of gradient-state inner product matrix $\mathbf{Y}^T \mathbf X$ is less than $6$; in fact, since the first two modes capture $99.9\%$ of the SVD energy, the effective rank is just $2$. 
In \figurename ~\ref{si_fig:fhn_101-modes}(b), we report the first five POD and CoBRAS $(\cobrasPhi, \cobrasPsi)$ modes. As observed in the main text, the criticality of node \#23 is evident across each of the $\cobrasPsi$ modes. 

Finally, in \figurename ~\ref{si_fig:fhn_101-modes}(c), we illustrate the importance of the relative phase of node \#23: when the state is in phase ($\text{sign}(x_{23}) = \text{sign}(y_{23})$, the system leads to excursions away from the body of the attractor; however, when the state is out-of-phase ($\text{sign}(x_{23}) \neq \text{sign}(y_{23})$), the system remains---or returns---to the body of the attractor.

\subsection{$N=1001$ Fully Connected Network}
\label{si:fhn_1001}

\begin{figure*}[t!]
    \centering
    \includegraphics[width=0.95\textwidth]{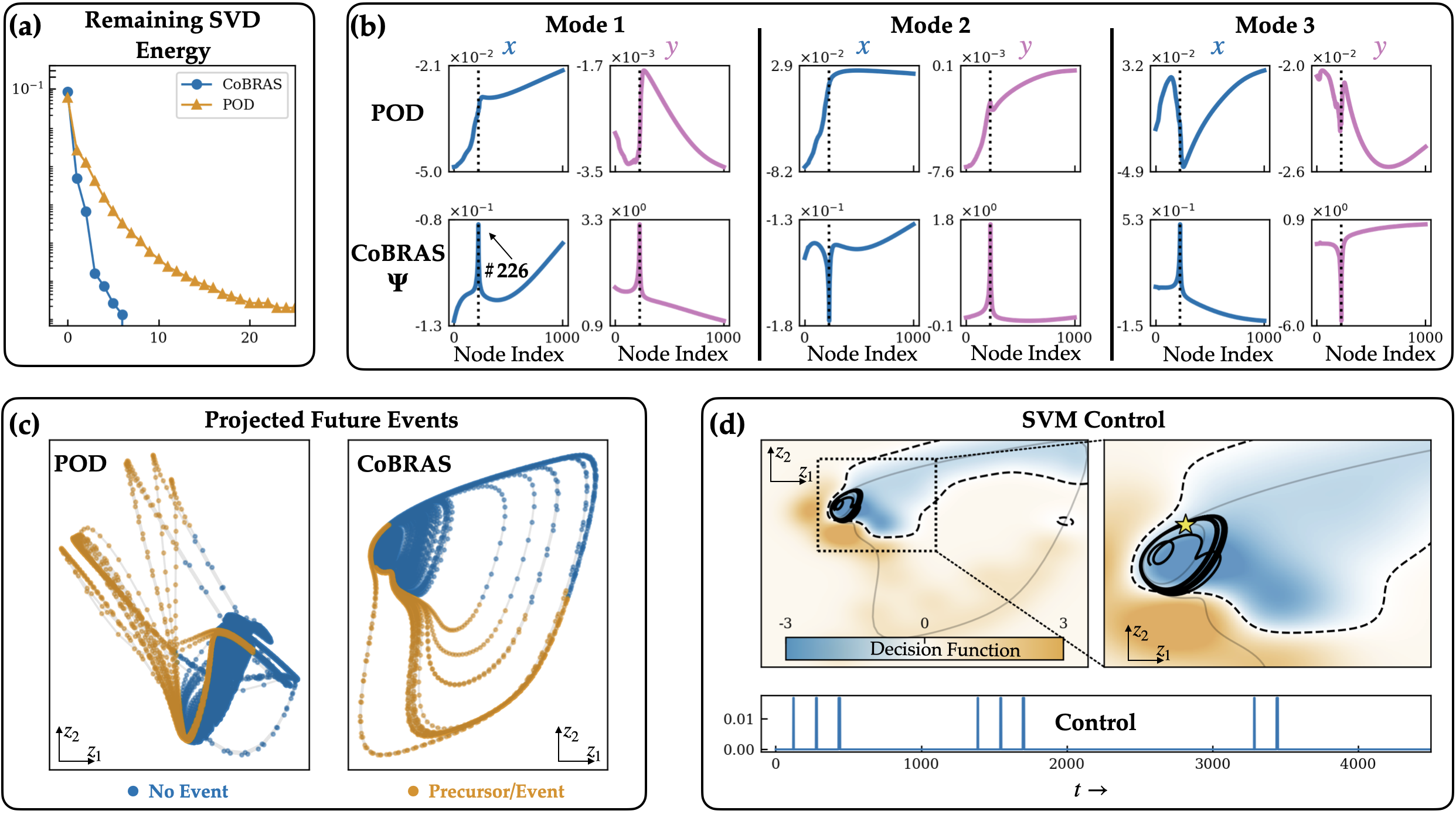}
    \cprotect \caption{\textbf{FitzHugh-Nagumo Oscillators $(N=1001)$} 
    \textbf{(a)} The remaining energy for the POD and CoBRAS SVDs. 
    \textbf{(b)} The first three POD and $\cobrasPsi$ modes separated into the corresponding action on the $x$ and $y$ components. Dashed vertical line indicates node \#226. 
    \textbf{(c)} The projection onto the leading two POD and CoBRAS modes with extreme events highlighted. 
    \textbf{(d)} The projection onto the leading two POD and CoBRAS modes with labels for classifying whether an extreme event will occur within the next $t_{pred}=50$ time units. Orange corresponds to a positive label, blue to a nominal label.
    \textbf{(e)} \textit{Top}: Heatmap of the SVM decision function classifying whether an event will occur within 50 time units and controlled trajectory preventing the event.  The black dashed line indicates the SVM decision boundary. The gray line is the reference uncontrolled trajectory, and the solid black line is the controlled trajectory. The yellow star in the zoomed in panel indicates the initial condition for both trajectories. \textit{Bottom}: The magnitude of the sparse control policy $||\mathbf u||$ as a function of time. 
    }
     \label{si_fig:fhn_1001}   
\end{figure*}

Just as in ~\ref{si:fhn_101}, we simulate the system for a total of $T=10^5$ time units. However, we discard the first $1.5 \times 10^4$ time units to ensure the state is on the attractor. To build the snapshot matrix and train the SVM classifier, $\mathbf X$, we use the first half of this trajectory and downsample by $10\times$. To calculate gradients,  we use an adjoint horizon of $T=100$ time units and form the matrix of gradients $\mathbf Y$ by sampling these snapshots used in $\mathbf X$ as initial conditions. To control the system, $k_{gain}$ is chosen as described in ~\ref{si:fhn} and  $k_{gain} \approx 1.68 \times 10^{-2}$.

In \figurename ~\ref{si_fig:fhn_1001}(a), we once again see that the first two CoBRAS modes account for $99.9\%$ of the SVD energy and the rest of the spectrum decays much faster than the POD spectrum. Just as in the main text, we plot the first three POD and $\cobrasPsi$ modes in \figurename ~\ref{si_fig:fhn_1001}(b). Analogous to the $N=101$ case, there is a critical node (or small, concentrated region of critical nodes) at \#226. This is consistent with \cite{ansmann2013a}, where no extreme events were found when there were fewer than $223$ excited nodes and almost always an extreme event when more than $224$  nodes were excited. 

\figurename ~\ref{si_fig:fhn_1001}(c-d) show the projection onto the first two POD and CoBRAS. Again, CoBRAS retains smoother geometry in the projection. While the regions between extreme events appear reasonably separable for POD in panel (c), it's clear that the precursors are not. However, just like in the $N=101$ case, the CoBRAS coordinates cleanly separate the precursors (with $t_{pred} = 50$) as living on the boundary of the body of the attractor. Using an unseen test set, the RBF kernel SVM achieved $96.4\%$ and $92.6\%$ success rate on the true nominal and true positive classes. In \figurename ~\ref{si_fig:fhn_1001}(e), we plot the SVM decision function which cleanly separates the nominal state space and demonstrate the effect of the controller gently nudging a trajectory back into the nominal class.

\clearpage
\subsection{$N=10000$ Small World Network}
\label{si:fhn_small_world}
To generate the small world network, we first generate the $60$-nearest neighbor graph. Then from every edge in this graph, there is a 20\% chance that it is replaced with a random edge by sampling two random nodes (resampling if the new edge conflicts with an existing edge). A schematic of a node's connectivity can be found in \figurename ~\ref{si_fig:fhn_small_world}   (a). This long range connectivity can facilitate rich dynamics, including chaotic bursting and spiral wave formation, as indicated in \figurename ~\ref{si_fig:fhn_small_world}(b). 

Due to memory constraints, we create a training set of two independently sampled trajectories, each consisting of $3\times 10^4$ time units. We remove the first $5000$ time units of each trajectory to ensure the data lives on the attractor and save every $10$-th time unit. These points are used to build the $\mathbf X$ matrix and serve as initial conditions for building the $\mathbf Y$ matrix. 

In \figurename ~\ref{si_fig:fhn_small_world}(c), we once again provide the remaining energy for the POD and CoBRAS SVDs. Unlike the previous fully-connected systems in ~\ref{si:fhn_101} and ~\ref{si:fhn_1001}, both CoBRAS and POD take many more modes in order to reach machine precision. However, the CoBRAS spectrum decays significantly faster; just two modes capture $99\%$ of the SVD energy. In \figurename ~\ref{si_fig:fhn_small_world}(d), we provide the three most dominant POD and $\cobrasPsi$ modes. While there is not a particular node that appears critical as in the previous examples, there is a consistent presence of voids in both the $x$- and $y$- components of the first two modes; indicating that there is a relevant spatial set of nodes. Interestingly, the third mode seems to capture the importance of phase information about a large, circular structure of nodes on the left. Notably, this structure appears in both the small and large bursting events shown in trajectories (i) and (iv) in panel (b). Because the network topology is randomly generated and results may vary across instantiations, we do not investigate the generality of these observations.

In \figurename ~\ref{si_fig:fhn_small_world}(e)-(f), we show the projection onto the the dominant POD and CoBRAS modes. The structures once again strongly resemble the attractors we saw with the fully-connected networks, CoBRAS retains smoother geometric structure. In particular, CoBRAS separates regions based on the quantity $q$, as seen in the trajectories in panel (e); trajectories with larger energy get further from the body of the attractor. In contrast, POD twists these trajectories together, and makes it difficult to separate extreme events, as shown in panel (f).

Unlike the fully-connected network, the small-world network exhibits spiral wave formation, where the $q$ values live on the border of the threshold; this makes the classification much more difficult, with the SVM classifier only achieving $84.3\%$ true nominal and $95.7\%$ true extreme events. The decision function and boundary are shown in \figurename ~\ref{si_fig:fhn_small_world}(g). However, the control law using the SVM remains sufficient to prevent the onset of extreme events.

\begin{figure*}[t!]
    \centering
    \includegraphics[width=0.95\textwidth]{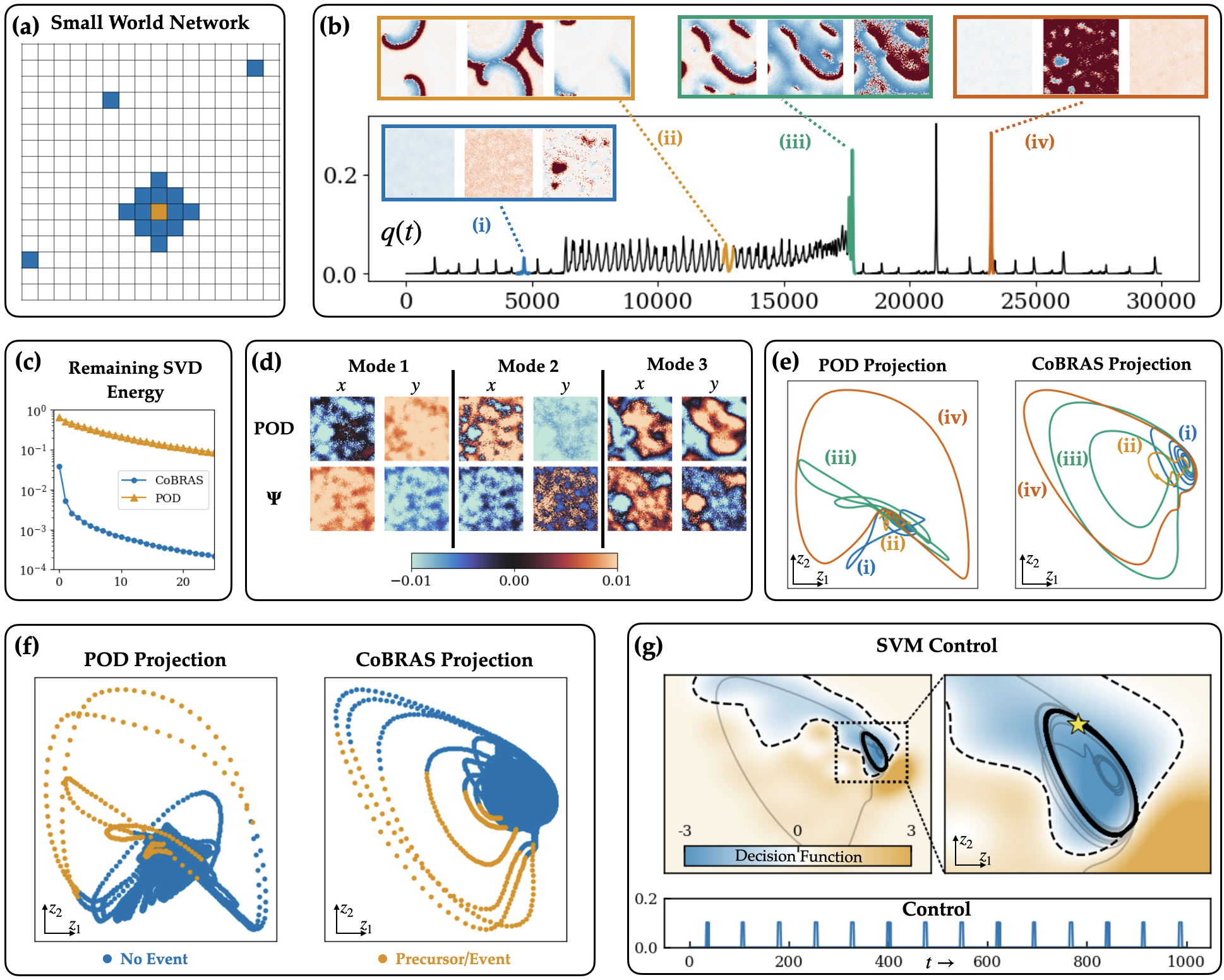}
    \cprotect \caption{\textbf{FitzHugh-Nagumo Small-World Network $(N=10^4)$} 
    \textbf{(a)} Schematic of small-world connectivity. The orange node has blue neighboring nodes on the lattice, with nearest neighbor and random long-range connections.  
    \textbf{(b)} Example of $q(t)$ and different types of behaviors with accompanying $x(t)$ snapshots: (i) small bursts with localized synchrony, (ii) spiral wave formation, (iii) spontaneous synchronization transition back to small bursts, (iv) large spontaneous synchronization.
    \textbf{(c)} The remaining energy for the CoBRAS and POD SVDs. 
    \textbf{(d)} The POD and CoBRAS $\cobrasPsi$ modes broken into $x$- and $y$-components. 
    \textbf{(e)} The same four trajectories corresponding to (a) projected onto the dominant POD and CoBRAS spaces.
    \textbf{(f)} The projection onto the dominant POD and CoBRAS spaces with points highlighted in orange if an event will happen within $t_{pred}$ time units.
    \textbf{(g)} \textit{Top}: Heatmap of the SVM decision function classifying whether an event will occur within 50 time units and controlled trajectory preventing the event.  The black dashed line indicates the SVM decision boundary. The gray line is the reference uncontrolled trajectory, and the solid black line is the controlled trajectory. The yellow star in the zoomed in panel indicates the initial condition for both trajectories. \textit{Bottom}: The magnitude of the sparse control policy $||\mathbf u||$ as a function of time. 
    }
     \label{si_fig:fhn_small_world}   
\end{figure*}

\clearpage
\section{Modified Nonlinear Schr{\"o}dinger (MNLS)}
\label{si:mnls}

Following the treatment by Cousins and Sapsis in \cite{cousins2016a}, we examine the form of the 1D Modified Nonlinear Schr{\"o}dinger (MNLS) Equation given by:

\begin{equation}
    \partial_{t}\mnls
    + \frac{1}{2}\partial_\mnlsSpatial \mnls
    + \frac{i}{8}\partial^2_{\mnlsSpatial}\mnls 
    - \frac{1}{16}\partial^3_{\mnlsSpatial}\mnls
    + \frac{i}{2}|\mnls|^2 \mnls 
    + \frac{3}{2} |\mnls|^2 \partial_\mnlsSpatial\mnls
    + \frac{1}{4} \mnls^2 \partial_\mnlsSpatial\mnls^*
    + i \mnls \mnlsPotentialGradient(\mnls) 
    =0
\end{equation}

\noindent where $\mnls(t,\mnlsSpatial)$ is a complex wave envelope, $\mnlsSpatial \in [0, 256\pi]$ is the spatial parameter, 
and
$\mnlsPotentialGradient$  corresponds to the derivative of the velocity potential and whose Fourier transform satisfies: 
$\mathcal F \left(\mnlsPotentialGradient(\mnls)\right)
    = -|k| \mathcal F \left( |\mnls|^2 \right)$, where $k$ is the wavenumber in the Fourier domain. 
    
Just as in \cite{cousins2016a}, we implement a pseudospectral solver with de-aliasing\footnote{While the cubic nonlinearities in MNLS only require retaining 1/2 of the wavenumbers, the dispersive terms cause energy growth in high-frequency terms that lead to rapid oscillations throughout the entire domain. To maintain numerical stability over longer periods of time, we retain only 1/3 of the wavenumbers.} and integrate the system using an explicit 4th-order exponential integrator with the method developed by Kassam \& Trefethen in \cite{kassam2005fourth} to avoid the numerical cancellations in the exponential terms.
We use a $dt_{sim}= 0.025$ and discretize the domain using $2^{10}= 1024$ grid points. While integration is performed using complex variables with double precision, we treat the solution variable $\mnls(t, \mnlsSpatial)$ as a two-dimensional real variable 
$\mnls = [\text{Re}(\mnls), \text{Im}(\mnls)]$; when it is discretized, we denote the variable as
$\mathbf{\mnls} \in \mathbb{R}^{2048}.$\footnote{In particular, we do not compute complex derivatives, as the dynamics are not holomorphic due to the presence of the conjugate $\mnls^*$} An event at $(t, \mnlsSpatial)$ is considered extreme if $|\mnls(t, \mnlsSpatial)|$ exceeds $0.2$; i.e. approximately four standard deviations above the mean.

Given a point in the spatial domain, $\mnlsSpatial_0$, we define the \textit{localized} quantity of interest at time $t$, to be the localized energy functional from Eq. \ref{mnlsQOI}, \noindent where $\mnlsGauss(\mnlsSpatial;\mnlsSpatial_0) = \frac{1}{L\sqrt{2\pi}}\exp\left(-\frac{(\mnlsSpatial - \mnlsSpatial_0 - ct)^2}{2L^2}\right)$ is a normalized Gaussian bump centered $\mnlsSpatial_0 - ct$. Here, we choose $c=0.5$ to be the average linearized group velocity. Essentially, $q$ measures the localized energy of the solution in a naturally moving reference frame.
We define the relevant forward map $\cobrasOutput$ as tracking this energy over time in the given frame:
\begin{equation}
    \cobrasOutput(\mnls; \mnlsSpatial_0) = [q_0, q_1, \dots q_{n-1}]  
\end{equation}

\noindent  where 
$q_k = q(\mnls; t+k\Delta t, \mnlsSpatial_0)$.
For simplicity, we choose $\Delta t = 1$, and we consider $n = 200$. Following \cite{cousins2016a}, we sample initial conditions for the flow according to a Gaussian spectrum of random phases:

$$
\mnls(0,\mnlsSpatial) 
= \sum_{k = -N/2 + 1}^{N/2} \sqrt{2 \Delta_k \rho(k\Delta_k)} e^{i( \omega_k \mnlsSpatial + \xi_k)}
, \qquad \rho(k) = \frac{\epsilon^2}{\sigma \sqrt{2\pi}}\exp(-k^2/2\sigma^2)
$$

\noindent where $\epsilon = 0.05$, $\sigma = 0.1$, and $\xi_k \sim  \text{Uniform}\left([0, 2\pi)\right)$.

To form a forward dataset, we sample $n_x = 5000$ initial conditions and  integrate them for $t\in [0,400]$. To assemble our snapshot and gradient matrices, we uniformly sample snapshots among our trajectories $\mathbf \mnls_i = \mnls(t_i,\mathbf \mnlsSpatial)$ (having restricted $t_{i} \in [0,200]$) to form our snapshot matrix $\mathbf X = [\mathbf\mnls_1, \dots, \mathbf \mnls_{n_x} ]$. We randomly sample $\mnlsSpatial_{i}$ uniformly in $[0, 256\pi)$ for computing our localized forward map $\mathbf \cobrasOutput(\mnls_i; \mnlsSpatial_{i})$ and taking gradients with respect to the initial snapshot $\mathbf \mnls_i$.

\subsection{CoBRAS Modes}
\label{si:mnls_modes}

\begin{figure*}[t!]
    \centering
    \includegraphics[width=0.95\textwidth]{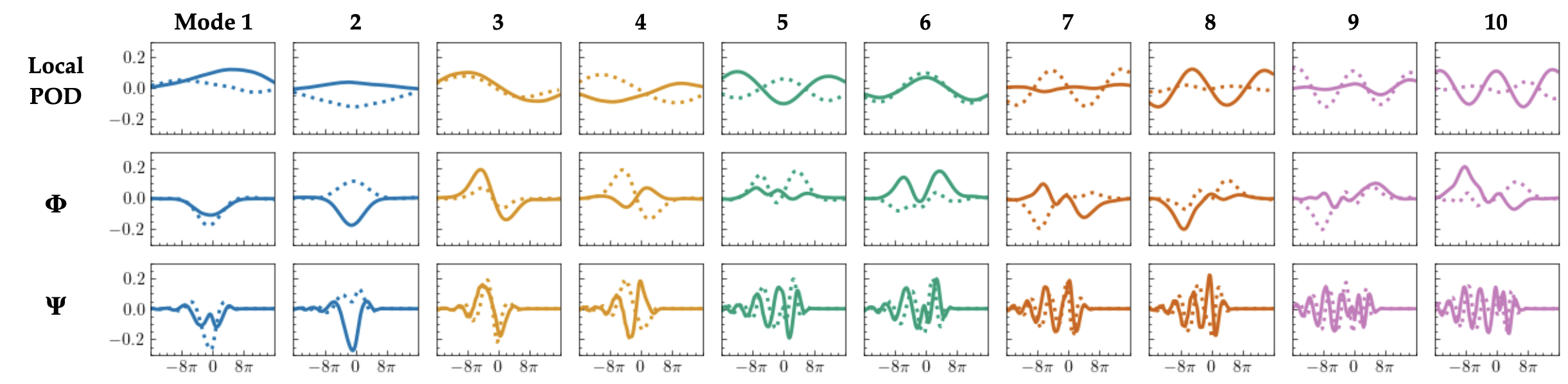}
    \cprotect \caption{\textbf{MNLS Modes} The first 10 local POD and CoBRAS $(\cobrasPhi, \cobrasPsi)$ modes. Solid and dotted lines correspond to the corresponding real and imaginary parts of the modes. Colors indicate consecutive pairs of similar CoBRAS $(\phi_i, \psi_i)$ modes that appear to differ only by a factor of $\pm i$. 
    }
     \label{si_fig:mnls_modes}   
\end{figure*}

In \figurename ~\ref{si_fig:mnls_modes}, we provide the first 10 local POD and CoBRAS $(\cobrasPhi, \cobrasPsi)$ modes on the range $[-16 \pi, 16\pi]$. Local POD modes were computed using the same snapshots to form the CoBRAS $\mathbf X$ matrix, but only using the middle 1/8 of the domain (of length $32\pi$). Because the statistics of the flow are translationally invariant, we observe that the local POD modes resemble Fourier modes confined to the interval $[-16\pi, 16\pi]$. However, the CoBRAS modes lack symmetry and are approximately compactly supported---capturing the most sensitive information of the system.  

\subsubsection{Sensitivity to Gaussian Length}
\label{si:mnls_gauss}

\begin{figure*}[t!]
    \centering
    \includegraphics[width=0.95\textwidth]{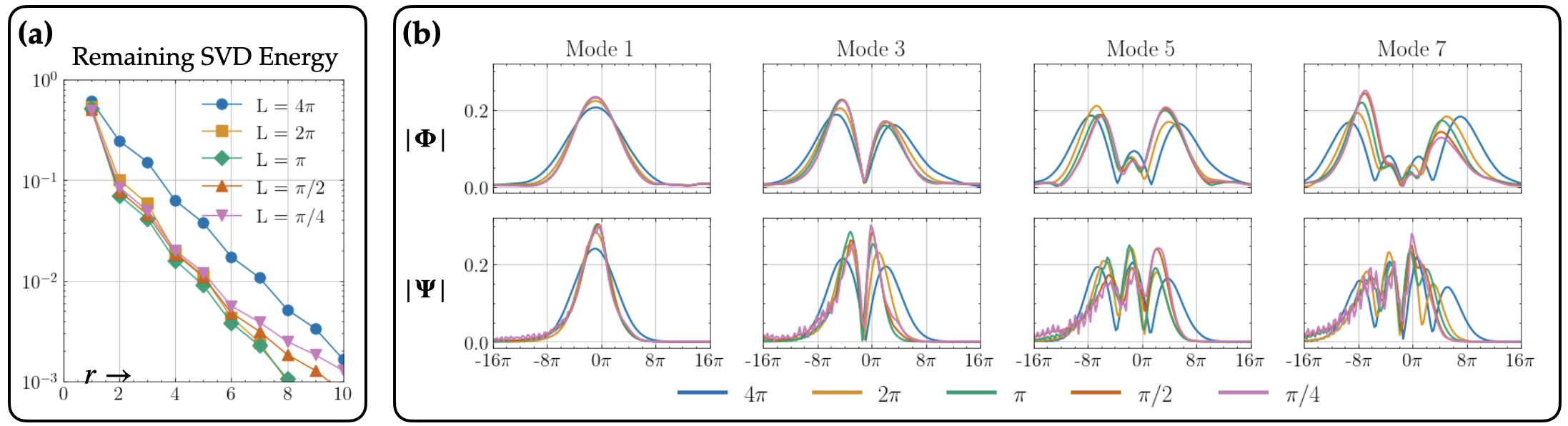}
    \cprotect \caption{\textbf{Effect of Gaussian Width}
    \textbf{(a)} The remaining energy for the CoBRAS SVDs across Gaussian width parameter, $L$.
    \textbf{(b)} The magnitudes $|\cobrasPhi|$ and $|\cobrasPsi|$ for modes 1,3,5,and 7 across Gaussian widths $L$. Modes 2,4,6 are excluded since they have nearly identical magnitudes due to the $\pm i$ phase offset.
    }
     \label{si_fig:mnls_gauss}   
\end{figure*}

In this work, we presented a method for using CoBRAS with localized phenomena by defining localized quantities of interest $q(\mnls; \eta_0)$ at the point $\eta_0$. For the MNLS, we choose to define this quantity as the energy over a localized measure
$$
q(\mnls; \eta_0) = \int |\mnls(t, \mnlsSpatial)|^2 \mnlsGauss(\mnlsSpatial; \mnlsSpatial_0)d\mnlsSpatial
$$
\noindent When restricting to the Dirac delta $\mnlsGauss(\mnlsSpatial, \mnlsSpatial_0) = \delta(\mnlsSpatial - \mnlsSpatial_0)$, we simply get the pointwise energy: $q(\mnls; \mnlsSpatial_0) = |\mnls(t, \mnlsSpatial_0)|^2$. However, because MNLS has strong dispersive dynamics, obtaining gradients amounts to integrating through the dispersion, leading to rapid spatial oscillations. Therefore for this example we defined $\mnlsGauss$ to be a small Gaussian: $\mnlsGauss(\mnlsSpatial;\mnlsSpatial_0) = \frac{1}{L\sqrt{2\pi}}\exp\left(-\frac{(\mnlsSpatial - \mnlsSpatial_0 - ct)^2}{2L^2}\right)$. In this section, we investigate the dependence of the Gaussian width $L$ on the modes.  In particular, we compare widths $L \in \{\pi/4, \pi/2, \pi, 2\pi, 4\pi\}$. Note that for a discretization of $N=1024$ on a domain of size $256\pi$, we have $dx= \pi/4$; hence this is the smallest natural scale we consider. 

As shown in \figurename ~\ref{si_fig:mnls_gauss}(a), the spectra and modes are in very close agreement for $L \leq 2\pi$. In \figurename ~\ref{si_fig:mnls_gauss}(b), we compare the modes across values of $L$. To avoid phase differences between the real and imaginary components, we plot the magnitude of the modes, i.e. $|\phi_k| = \sqrt{\text{Re}(\phi_k)^2 + \text{Im}(\phi_k)^2}$. We see there is much overlap in the modes across $L$ and they all qualitatively appear similar.  
However, $L=4\pi$ once again serves as the outlier; as $L$ increases, the QoI is now influenced by a larger domain---in particular, there is more sensitivity on the right hand side. We also note that the $\cobrasPsi$ modes have dispersive effects for $L = \pi/4 = dx$, which can be seen particularly in the higher-order modes. We therefore choose $L=\pi/2$, which avoids the dispersive effects and remains the most localized in space.  

\subsection{Forecasting Extreme Events}
\label{si:mnls_predictions}

\begin{figure*}[t!]
    \centering
    \includegraphics[width=0.95\textwidth]{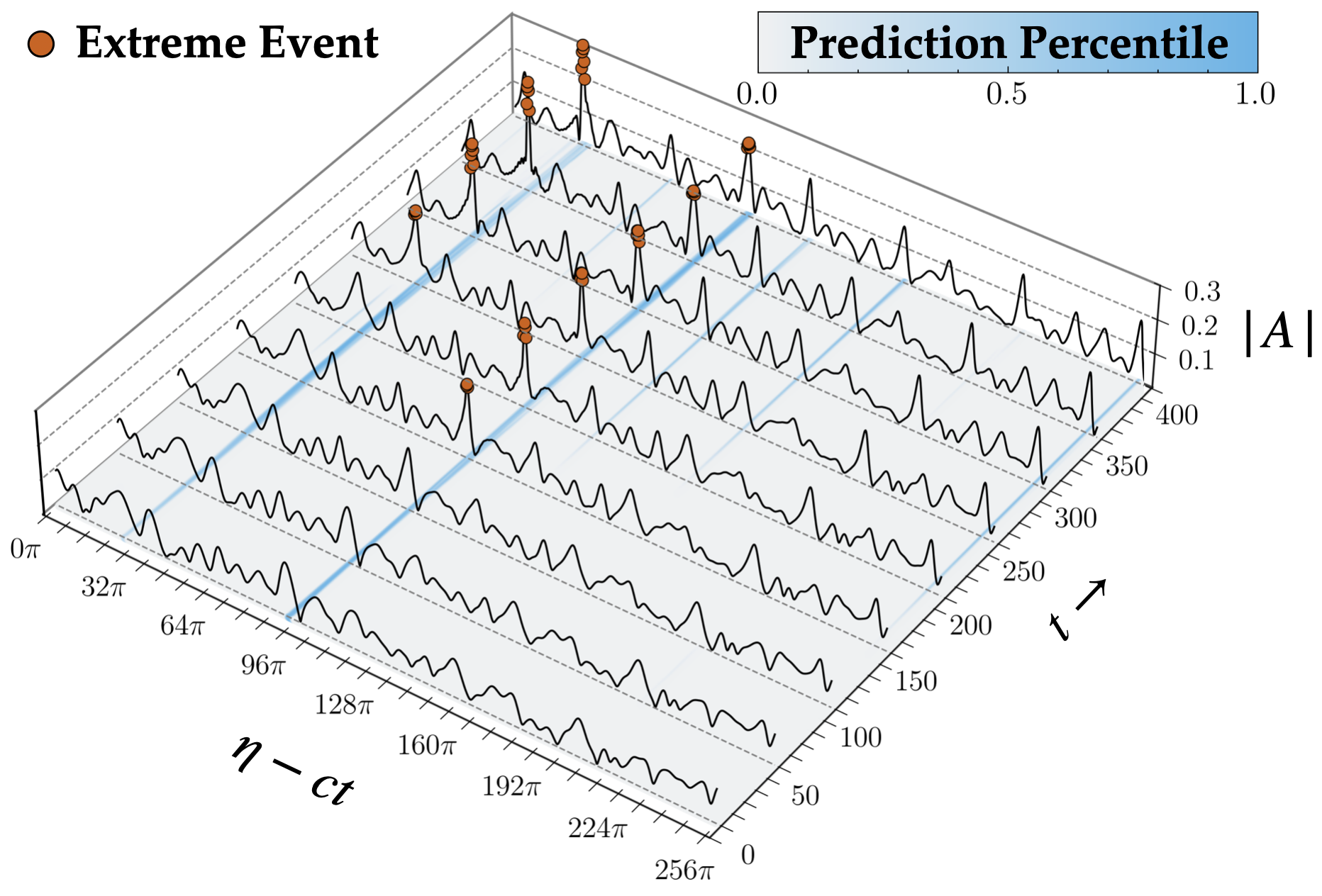}
    \cprotect \caption{\textbf{Localized Prediction of Rogue Waves.} Extreme event predictions obtained from SVM classifier to predict whether an extreme event will occur within 200 time units for the full $[0, 256\pi)$ domain. The blue streaks indicate the percentile rank of the SVM decision function among positive classifications, estimated from the training set. Red circles indicate the amplitude exceeding the extreme event threshold. 
    }
     \label{si_fig:mnls_preds} 
\end{figure*}

Just as in ~\ref{si:kflow_prediction} and ~\ref{si:fhn}, we use Scikit-Learn to train our classification models with an RBF kernel using features $\mathbf z = \cobrasPsi \mathbf x$ with balanced loss, scale factor $\gamma = 1/r$, and $C=1$. The CoBRAS signals $\mathbf z(\mnlsSpatial)$ are obtained by cross-correlating the $\cobrasPsi$ modes with the solutions $\mnls(t,\mnlsSpatial)$ using the cross-correlation formula $\mathbf z(\mnlsSpatial) = (\cobrasPsi \star \mnls)(\eta)$. SVM inputs are obtained by sampling these signals at different spatiotemporal points $(t_j, \mnlsSpatial_j)$ from these signals across five hundred independently generated solutions to MNLS (80\% are used for training, 20\% reserved for testing). Before training, inputs are centered and scaled to have unit variance for use with the classifier. For each point, $\mnlsSpatial$, we attempt to predict whether an extreme event will occur over $t_{pred}=200$ time units in the frame of reference moving through the domain at speed $c=0.5$ (the average group velocity). That is, we train the classifier to satisfy: 
$$
\text{sign} \left(   d(\mathbf z(\mnlsSpatial) \right)= \text{sign}\left\{\max_{\tau \in [t, t+t_{pred}]} |\mnls(\tau, \mnlsSpatial - c \tau)| -q_*   \right\}
$$
The decision function $d$ defines a \textit{nonlinear filter} for the system,
$d(\mnlsSpatial) = d\left[(\cobrasPsi \star \mnls)(\mnlsSpatial)\right]$. For this demonstration, we use $r=8$ modes to train the classifier. In \figurename ~\ref{si_fig:mnls_preds}, we plot the predictions for the full spatiotemporal domain, $\eta-ct \in [0,256\pi]$,  $t \in [0,400]$. Just as in the main text, blue streaks indicate positive predictions from the SVM. The magnitude is the calculated percentile from positive classes (i.e. a value of $0.0$ is close to the decision boundary, $0.5$ corresponds to the median positive value from the training data, and a value of $1.0$ is the \textit{maximal} decision output from the training data). For the two extreme events that form in the domain, the prediction scheme indicates with high confidence that they will form hundreds of time units before their true formation. While there are false positives throughout the domain, they all capture times where there is wave growth that do not result in a matured rogue wave---indicating that the CoBRAS modes are still finding spatiotemporal regions that are \textit{sensitive} to the QoI. 

\subsection{Suppressing Extreme Events}
\label{si:mnls_control}
As in the previous systems, we design a controller to suppress extreme events by using features generated by the CoBRAS modes. In particular, we use the SVM decision function in the latent CoBRAS space to determine when and how  we should actuate the domain. 

As before, let $\mathbf z(\mnlsSpatial) = (\cobrasPsi \star \mnls)(\mnlsSpatial)$ and define $d(\mnlsSpatial) = d[\mathbf z(\mnlsSpatial)]$ to be the decision function evaluated at the point $\mnlsSpatial$, and denote $d_{+}(\mnlsSpatial) =\mathbf 1\{{d(\mathbf z(\mnlsSpatial))} > 0\}$ as the function that is $+1$ when predicting an extreme event and $0$ otherwise. As described previously, the gradient of an RBF SVM decision function can be computed analytically at each point $\nabla_z d(\mnlsSpatial)$. We compute the term 
$$
\mathbf u_z(\mnlsSpatial) = 
    -k_{gain} d_{+}(\mnlsSpatial)
    \frac{(\nabla_z d)(\mnlsSpatial)}{||\nabla_z d(\mnlsSpatial)||}
$$
\noindent When $k_{gain} >0$, this points towards the nominal class, only when the decision function is predicting the formation of an extreme event. In this work, we choose $k_{gain} = 0.01$  to have minimal influence on the system. As shown in the main text, we lift back to the original state using $\mathbf u = \cobrasPhi(0)\cdot \mathbf u_z(\mnlsSpatial)$. 
To actuate the system, we use zero-hold control for $T=1$ units\footnote{Note that while both the CoBRAS \textit{modes} and prediction task are designed in the $c$-moving frame of reference, the actual prediction and control are performed pointwise and agnostic of the frame. In this work, we apply zero-hold control over a small time horizon, but one could in principle design the control to shift smoothly with the group velocity as well.}.

In \figurename ~\ref{si_fig:mnls_ctrl}, we provide the magnitude of the controller shown in the main text. As designed, the control is sparse throughout the spatiotemporal domain with a main emphasis on the locations that would have developed into rogue waves, as in \figurename ~\ref{si_fig:mnls_preds}. A small amount of control is constantly used to prevent the event from occurring at $\mnlsSpatial - ct = 32\pi$, while control is only applied at $96\pi$ and $128\pi$ until the SVM predicts there will no longer be an extreme event.

\begin{figure*}[t!]
    \centering
    \includegraphics[width=0.95\textwidth]{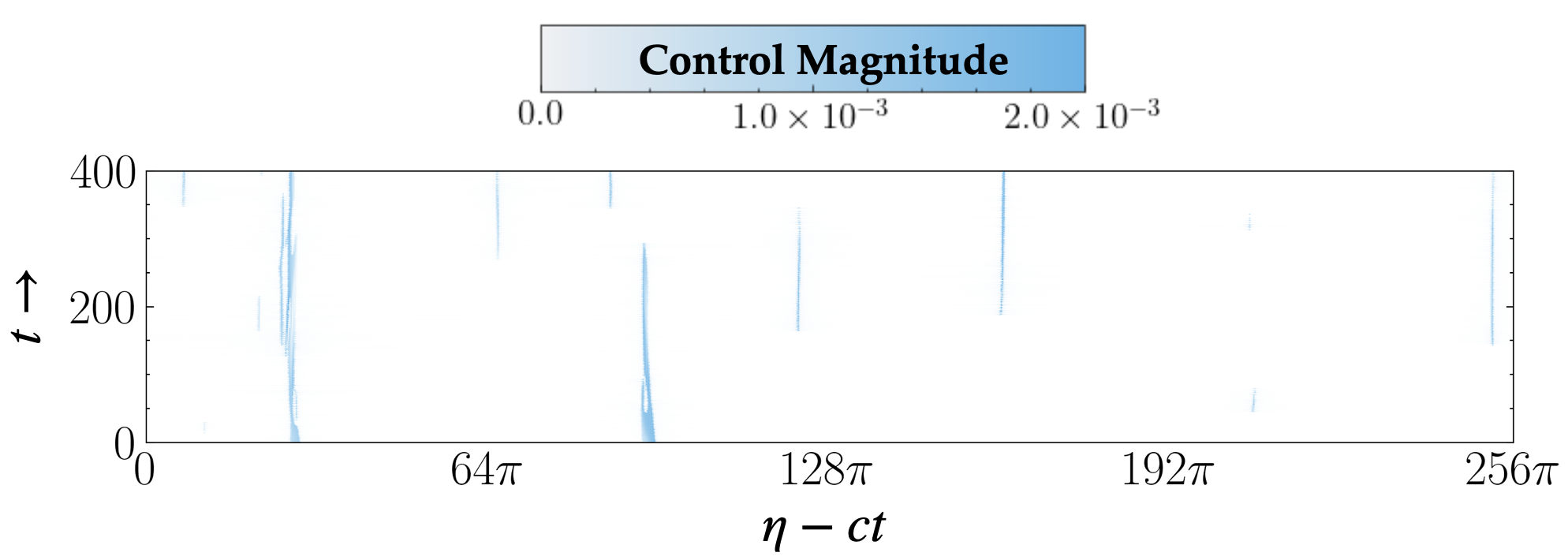}
    \cprotect \caption{\textbf{Sparse Rogue Wave Suppression} Heatmap of the magnitude of the control inputs across the entire spatiotemporal domain. 
    }
     \label{si_fig:mnls_ctrl} 
\end{figure*}

\clearpage
\section{FNO Surrogate Model}
\label{si:fno}
A central challenge to using CoBRAS is obtaining the sensitivity information. In the original development of CoBRAS \cite{otto2023model}, gradients were obtained by sampling the adjoint equations from a known forward set of equation. In this work, we used autodifferentiable solvers built using JAX \cite{jax2018github}. However, we often do not have the privilege of accessing these gradients. It is much more common to only have access to forward solutions in the form of a dataset---either collected from physical sensors or produced via numerical simulations. To address this challenge, we demonstrate the ability to approximate the gradient information by using autodifferentiation through a learned \textit{surrogate solver}. In particular, we choose to leverage a Fourier Neural Operator (FNO) \cite{li2021fourier}, whose structure naturally provides low-rank derivatives through the filtering layers. 

We use the operator to fit the forward solution $\mathbf x_{k + 1} = \mathbf F(\mathbf x_k)$, where $\mathbf{x}_{k} = \mathbf{x}(k\Delta t)$ is the solution at the $k$-th time step. While longer rollouts could be considered, we simply choose a single-step $L^2$ loss function $\mathcal L(\mathbf x_k) =||\mathbf{x}_{k+1} - \mathbf F(\mathbf{x}_k)||^2$. Following \cite{li2021fourier}, we use four FNO layers with $16$ truncated modes per layer and a width of $32$. Following \cite{guibas2021adaptive}, we use the Adam optimizer with a warmup and cosine-decay learning rate schedule. The Adam optimizer uses a $10^{-4}$ weight decay,  the scheduler has a peak learning rate of $10^{-3}$ and minimum of $10^{-5}$, and we clip gradient norms at $1.0$. Finally, we fix the batch-size equal to $32$.

We demonstrate the method on the Kolmogorov flow at $Re=40$ and define the vorticity to be the state variable $\kFlowVort = \mathbf{x}$. We define $\tilde \cobrasOutput$ to be the surrogate evolution of the energy dissipation:
$$
\tilde \cobrasOutput(\mathbf x_0) = \left[\energyDissipation[
    \mathbf F (\mathbf{x}_0)], 
    \energyDissipation[\mathbf F^{(2)} (\mathbf{x}_0)], 
    \dots,
    \energyDissipation[\mathbf F^{(n)} (\mathbf{x}_0)] 
    \right]
$$
\noindent where $\mathbf F^{(k)}(\mathbf x)$ denotes the $k$-fold composition of $\mathbf F$ acting on the initial condition $\mathbf x$. Analogous to ~\ref{si:kflow}, we set $\Delta t = 0.5$ time units and the adjoint horizon $T=4$ time units, for a total output dimension of $n=8$. To train the network, we use $64\times 64$ resolution images downsampled from the original $256\times 256$ dataset. However, for our demonstration, we still use the $256\times 256$ resolution dataset to build the snapshot matrix $\mathbf X$ and compute the surrogate gradients $\tilde{\mathbf g} = \nabla_\mathbf{x} \tilde (\boldsymbol{\xi^T}\cobrasOutput)(\mathbf x_0)$ to build the surrogate gradient matrix, $\tilde{\mathbf{Y}} = [\tilde{\mathbf g}_1, \dots, \tilde{\mathbf{g}}_{n_g}]$.  CoBRAS modes are obtained by analogously using the surrogate gradients, i.e. taking the SVD of $\tilde{\mathbf Y}^T \mathbf X$.

In \figurename ~\ref{si_fig:fno_modes}(a), we provide the remaining SVD energy from CoBRAS with the differentiable simulator (see ~\ref{si:kflow}) and a learned FNO simulator. The spectra are well-aligned for the first four modes before deviating slightly---however the tail ends also decay at similar rates. This is in agreement with the modes presented in \figurename ~\ref{si_fig:fno_modes}(b). The relevant first and fourth Fourier modes are captured within the dominant four-dimensional space, but the after the fifth mode, the $(\cobrasPhi, \cobrasPsi)$ FNO pairs begin to deviate. Notably, the FNO $\cobrasPsi$ modes seem to capture more low-energy structure---this is a well-known phenomenon in operator learning called ``spectral bias'', where models fail to capture high-frequency information. While unexplored in this work, there may be further opportunities to improve how well the dominant subspaces are captured by training on longer rollouts or incorporating additional regularization.

\begin{figure*}[t!]
    \centering
    \includegraphics[width=0.95\textwidth]{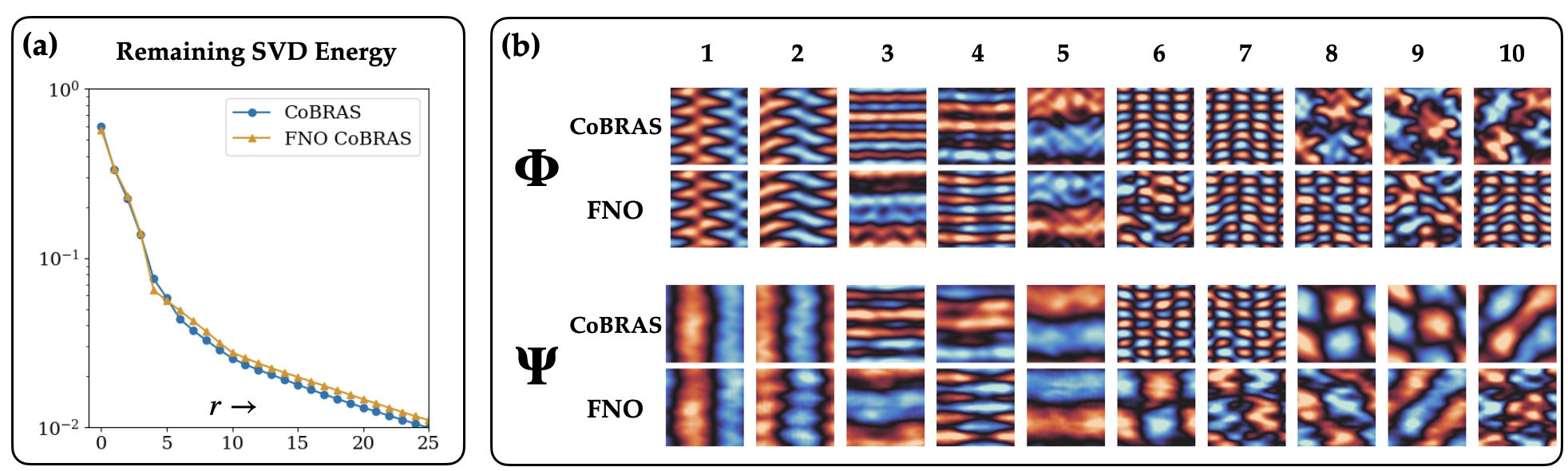}
    \cprotect \caption{\textbf{Surrogate CoBRAS Modes} 
    \textbf{(a)} The remaining energy for the  CoBRAS SVDs obtained from the CoBRAS differentiable simulator, ``CoBRAS'' (\ref{si:kflow}) or the learned surrogate model ``FNO CoBRAS''.
    \textbf{(b)} The first 10 $\cobrasPsi$ modes obtained through CoBRAS using the differentiable simulator vs the surrogate FNO.
    }
     \label{si_fig:fno_modes}   
\end{figure*}

\stopcontents[supp]

\end{document}